\numberwithin{equation}{section}
\theoremstyle{plain}
\newtheorem{theorem}{Theorem}[section]
\newtheorem*{theorem*}{Theorem}
\newtheorem{lemma}[theorem]{Lemma}
\theoremstyle{remark}
\newtheorem{remark}[theorem]{Remark}
\theoremstyle{definition}
\newtheorem{definition}{Definition}[section]
\newtheorem*{definition*}{Definition}
\DeclareMathOperator{\sgn}{sgn}
\newcommand{\E}{\mathbb{E}}
\renewcommand{\P}{\mathbb{P}}
\newcommand{\Q}{\mathbb{Q}}
\newcommand{\di}{\mathrm{d}}
\newcommand{\iden}{\boldsymbol{1}}
\newcommand{\half}{\dfrac{1}{2}}
\newcommand{\R}{\mathbb{R}}
\newcommand{\D}{\mathcal{D}}
\newcommand{\taud}{\tau_{\mathcal{D}}}
\newcommand{\dx}{\mathrm{d}x}
\newcommand{\F}{\mathcal{F}}
\newcommand{\Ep}[1]{\E\left[#1\right]}
\newcommand{\Eps}[2]{\E^{#1}\left[#2\right]}
\newcommand{\pjt}{p_{j,t}}
\newcommand{\qjt}{q_{j,t}}
\newcommand{\ttau}{\tilde{\tau}}
\newcommand{\e}{\mathrm{e}}
\newcommand{\eps}{\varepsilon}
\begin{document}

\title{Robust Hedging of Options on a Leveraged Exchange Traded Fund}

\author{Alexander M. G. Cox}
\author{Sam M. Kinsley}
\affil{Department of Mathematical Sciences, University of Bath, U.K.}

%\author{Sam M. Kinsley}
\maketitle

\begin{abstract}
A leveraged exchange traded fund (LETF) is an exchange traded fund that uses financial derivatives to amplify the price changes of a basket of goods. In this paper, we consider the robust hedging of European options on a LETF, finding model-free bounds on the price of these options.

To obtain an upper bound, we establish a new optimal solution to the Skorokhod embedding problem (SEP) using methods introduced in Beiglb\"ock-Cox-Huesmann. This stopping time can be represented as the hitting time of some region by a Brownian motion, but unlike other solutions of e.g. Root, this region is not unique. Much of this paper is dedicated to characterising the choice of the embedding region that gives the required optimality property. Notably, this appears to be the first solution to the SEP where the solution is not uniquely characterised by its geometric structure, and an additional condition is needed on the stopping region to guarantee that it is the optimiser. An important part of determining the optimal region is identifying the correct form of the dual solution, which has a financial interpretation as a model-independent superhedging strategy.
\end{abstract}

%\comment{Where do we first assume that $\mu$ is bounded? I think we use this in some places...}

\section{Introduction}
Given a Brownian motion $B$ and a centered probability distribution $\mu$ on the real line which has finite second moment, the Skorokhod embedding problem is to find a stopping time $\tau$ such that
\begin{equation}\
B_{\tau} \medspace \text{has law} \medspace \mu\medspace \text{and} \medspace (B_{t\wedge\tau})_{t\geq0} \text{ is UI}. \tag{SEP} \label{SEP}
\end{equation}
In this paper we give a solution to this problem which has the property that it maximises $\Ep{F(B_{\tau},\tau)}$ over solutions of \eqref{SEP} for a certain function $F$ that has the financial motivation of being the payoff of a European call option on a leveraged exchange traded fund. In Section \ref{existence} we show the existence of such a stopping time using the monotonicity principle of \cite{Beiglboeck:2013aa}. This solution can be seen as the hitting time of a region we call a $K$-cave barrier, which is the combination of a Root barrier and a Rost inverse barrier separated by a curve $K(x)$. It is well known that for such a distribution $\mu$, there is a Root barrier such that the hitting time of that barrier solves \eqref{SEP}, and moreover that barrier is unique. However, it is easy to see that in most cases there will be infinitely many $K$-cave barrier solutions of \eqref{SEP}.

The main difficulty which arises in this problem is then to determine which of these solutions is optimal, and much of this paper is dedicated to finding a necessary and sufficient condition that ensures we have the optimal stopping region. In Section \ref{heuristics} we propose such a condition using a heuristic PDE argument, and then we confirm that this condition is sufficient in Section \ref{optimality} using probabilistic arguments. To do this we introduce the dual problem of finding the minimal starting cost of a model-independent superhedging strategy, and use martingale theory to derive an expression for the optimal dual solution.

To argue the converse, that is, there is at least one dual optimiser satisfying this condition, we take a different approach. In \cite{Cox:2016ab} we set up a linear programming problem which is a discretised optimal Skorokhod embedding problem and for which we can prove a strong duality result. The motivation of \cite{Cox:2016ab} was to help determine the form of our dual superhedging strategy in this continuous time problem, and indeed the strong duality result gives the existence of dual optimisers. In \cite{Cox:2016ab} we show that we can recover our continuous time problem as the limit of the discrete linear programming problems, and in Section \ref{discrete} of this paper we show that our superhedging strategy is the limit of the discrete dual optimisers. We then verify that our proposed condition is both necessary and sufficient.

As well as the financial relevance of this problem, we also believe that the solution we give to the Skorokhod embedding problem is theoretically important. In \cite{Beiglboeck:2013aa}, it was shown that \emph{every} known solution to \eqref{SEP} which possessed an optimality criteria could be derived as a consequence of the monotonicity principle. Specifically, the monotonicity principle implies a geometric structure that is sufficient to \emph{uniquely} identify the stopping region. The construction we provide in this paper uses the monotonicity principle to deduce important geometric structure of the solution, but this does not uniquely characterise the resulting stopping region, and we therefore need to provide an additional criterion which specifies which of the possible stopping regions we should choose. To the best of our knowledge, this is the first example of such a condition in the literature.

% The primal-dual nature of this problem allows us, at least when $\mu$ is atomic, to think about numerical schemes that could be used to construct the optimal $K$-cave barrier. In Section \ref{numerics} we give ideas on how such algorithms could work.

\subsection{Background}
The standard approach to pricing and hedging exotic options is to suppose the existence of some probabilistic model, and then determine the discounted, risk-neutral expected payoff under this model.  An alternative method is to use commonly-traded options, with prices that we can observe, to construct a hedging strategy. Usually we assume that we can observe call prices for a fixed maturity and multiple strikes, and identify models consistent with these prices. We can then attempt to find \textquoteleft model-independent', or \textquoteleft robust', bounds on our option price by finding the extremal feasible models. A least upper bound on the price of the option should be the smallest amount of money with which it is possible to maintain a super-replicating portfolio under any model. In other words, we are required to give a portfolio which is a superhedge for all feasible models, and for which there is a specific model that gives the correct option prices, under which our superhedge is actually a hedge, i.e. we replicate the option exactly. This approach, although it doesn't give a single arbitrage-free price for the option, has the obvious advantage that it eliminates model risk.

A result of Breeden and Litzenberger in \cite{Breeden:1978aa} says that given European call prices of all positive strikes for some fixed maturity $T$, we can calculate the marginal distribution of the underlying, $S$, at this time $T$. Moreover, this distribution is given under the measure used by the market to price these options, so, unlike in the traditional methods, we do not need to change measure. In particular, if the call option prices $C(K)$ are calculated as the discounted expected payoff under a probability measure $\Q$, then, under certain arbitrage conditions on $C$, we have that
\begin{equation*}
C'_+(K)=-\Q(S_T>K).
\end{equation*}
A consistent model $(\Omega, \F, \P)$ must then be such that $S_T\sim \mu$ under $\P$, where $\mu$ is determined from the above Breeden-Litzenberger Formula. If we assume that the price process is a true martingale, by a time change this condition becomes equivalent to the Skorokhod embedding problem, \eqref{SEP}. It is also important to note that given a solution $\tau$ to (SEP),
\begin{equation*}
M_t=B_{t/{(T-t)\wedge\tau}}
\end{equation*}
is a martingale with $M_T\sim\mu$, and in fact there is a one-to-one correspondence between solutions of \eqref{SEP} and (uniformly integrable) martingales $M$ with $M_T\sim\mu$.

There are many solutions to \eqref{SEP}, see \cite{Obloj:2004aa} for a survey of all solutions known at the time, some of which have nice optimality properties. For example, the solution of Root \cite{Root:1969aa} was shown by Rost in \cite{Rost:1976aa} to minimise $\Ep{F(\tau)}$ for convex $F$ over solutions of \eqref{SEP}. The Root stopping time is the hitting time of a region known as a barrier, and in this paper we find a solution of \eqref{SEP} which can also be viewed as the hitting time of a Brownian Motion of a certain region. This stopping time has the property that it maximises the expected payoff of a certain exotic option. Using the Skorokhod embedding problem to find no-arbitrage prices of exotic options given prices of vanilla options was first developed by Hobson in \cite{Hobson:1998aa}, and since used and extended in many works, including \cite{Brown:2001aa, Cox:2015ab, Cox:2008aa, Cox:2011aa, Cox:2013ac, Cox:2013ab, Henry-Labordere:2016aa, Hobson:2012aa, Hobson:2002aa}. In particular we refer to the survey article of Hobson \cite{Hobson:2011aa}.

As mentioned above, the \textquoteleft primal' problem of finding a model-independent upper bound on an option price has a related \textquoteleft dual' problem of finding the smallest amount of money with which a self-financing model-independent superhedging portfolio can be funded. There are duality results on various options under suitable conditions, for example \cite{Cox:2014aa, Dolinsky:2014aa, Dolinsky:2014ab, Dolinsky:2015aa, Hou:2015aa}. We follow the pathwise inequality approach of Cox and Wang \cite{Cox:2013ac, Cox:2013ab} to show a duality result and find the form of the minimal superhedging portfolio.

Our problem\footnote{We are grateful to Pierre Henri-Labord\`ere for suggesting this problem to us.} is motivated by the pricing of a call option on a leveraged exchange traded fund, LETF, in particular we look at finding an arbitrage-free upper bound on the price of these options. An exchange traded fund (ETF) is a security traded on a stock exchange that tracks an index or basket of assets. An ETF is an ownership stake in a pool of assets, so a number of investors can share in a large, diverse portfolio, spreading the transaction costs across all investors. A regular ETF matches the benchmark index's performance 1:1, whereas a leveraged ETF will most commonly match it 2:1 or 3:1, usually by holding daily futures contracts. Daily compounding means that LETFs do not maintain their 2:1 relative performance over time, only over a single day, and even then transaction costs and fees need to be subtracted. For example, if we have a traditional index ETF and a 2:1 LETF both trading at \$100, and the index increases by 10\% that day, then our ETF is at \$110 whilst our LETF is now worth \$120. Our LETF met its goal on this individual day, but then these prices are now fixed, since our funds are compounded daily. If the following day our index sees a decrease of 9\%, then the ETF is at \$100.10, but our LETF value decreases by 18\% to \$98.4. We can clearly see that over time we will not maintain our 2:1 ratio. 

The first LETF was released in 2006, and in 2016 there are over 200 LETFs available, most commonly with 125\%, 200\%, or 300\% ratios. At the time of writing, the value of assets in the global ETF market is over \$3 trillion, and some investors expect it to double in size by 2021. LETFs are typically written on very liquid ETFs, with vanilla options traded on both the ETF and the LETF. This means that our assumption of observing European call option prices on the underlying ETF is a reasonable one. LETFs have been studied mathematically in recent literature, for example \cite{Ahn:2015aa, Avellaneda:2010aa, Cheng:2009aa, Zhang:2010aa}. In particular, in \cite{Zhang:2010aa}, Zhang considers options on an LETF in terms of options on the underlying ETF, giving a closed form solution when the volatility of $\log (S_t)$ is deterministic, and numerical results fitting various models when the volatility is random.

\subsection{Formulation}
We will work in continuous time, thus assuming the LETF portfolio is rebalanced continuously. Let $S_t$, $L_t$ be the prices of the ETF and LETF respectively, and suppose $S$ is some continuous martingale. Setting interest rates and transaction costs to 0 and renormalising, the dynamics of the LETF with leverage ratio $\beta>1$ are given by
\begin{equation*}
L_t=S_t^\beta\exp\left(-\frac{\beta(\beta-1)}{2}V_t\right),
\end{equation*}
where $V_t$ is the accumulated quadratic variation of $\log S_t$ up to time $t$. It is easy to verify that $L_t$ is a martingale when $S_t$ is. To avoid dealing with the accumulated log quadratic variation, we time change by setting $\tau_t:=\inf\{s\geq0:\medspace V_s=t\}$ and $X_t:=S_{\tau_t}$. But then,
\begin{equation*}
\mathrm{d}\langle X\rangle_t=\mathrm{d}\langle S\rangle_{\tau_t}=S_{\tau_t}^2\mathrm{d}V_{\tau_t}=X_t^2\mathrm{d}t
\end{equation*}
and therefore $X_t$ is a geometric Brownian motion (GBM). The payoff function for a European call option on the time-changed LETF with strike $k>0$ is
\begin{equation}
\label{case2}
F_{L}(x,t)=\left(x^\beta\exp\left(-\frac{\beta(\beta-1)}{2}t\right)-k\right)_+.
\end{equation}
Write $h_{L}(x,t)=x^\beta\exp\left(-\frac{\beta(\beta-1)}{2}t\right)$ so that $h_{L}(X_t,t)$ is a martingale since $X_t$ is.

The problem of finding an upper bound on the price of such an option is then equivalent to solving the optimal Skorokhod embedding problem
\begin{equation*} \label{LOptSEP}\tag{LOptSEP}
\sup_{\tau} \Ep{\left(X_{\tau}^\beta\exp\left(-\frac{\beta(\beta-1)}{2}\tau\right)-k\right)_+} \quad \text{over stopping times $\tau$ such that } X_{\tau}\sim\mu,
\end{equation*}
where $X$ is a GBM, and in fact an exponential martingale. We conjecture a hitting time solution with a stopping region of the form shown in Figure~\ref{spike}, bounded by curves $l_{L}(x)$ and $r_{L}(x)$ giving the boundary of an inverse-barrier and a barrier region respectively (defined below), such that $l_{L}(x)\leq K_{L}(x)\leq r_{L}(x)$ and $l_{L}$ is increasing. The curve $K_{L}(x)=\frac{2}{\beta(\beta-1)}\ln(\frac{x^\beta}{k})$ is such that $h_{L}(x,K_{L}(x))=k$, so we only \textquoteleft score' a positive payoff if we are absorbed by $l_{L}$, i.e. to the left of $K_{L}$. The example in Figure~\ref{spike} contains an infinite section, and the barriers could also have spikes, we assume no differentiability on the curves $l_{L},$ $r_{L}$.

We show that the function $r_{L}$ is such that $\underline{\mathcal{R}}=\{(x,t):\thinspace t\geq r_{L}(x)\}$ is a barrier, i.e. a closed subset of $(-\infty,+\infty)\times[0,+\infty)$ such that if $(x,t)\in \underline{\mathcal{R}}$, then $(x,s)\in \underline{\mathcal{R}}$ for all $s>t$. Since we are working with geometric Brownian motion, we will have $(0,t)\in\underline{\mathcal{R}}$ for all $t$. Note in particular that the closedness of $\underline{\mathcal{R}}$ implies that $r_{L}$ is lower semi-continuous.

Similarly, $l_{L}$ is such that $\overline{\mathcal{R}}=\{(x,t):\thinspace t\leq l_{L}(x)\}$ is an inverse barrier, or reverse barrier, meaning a closed subset of $(-\infty,+\infty)\times[0,+\infty)$ such that if $(x,t)\in R$, then $(x,s)\in R$ for all $s<t$. Similarly to above, the function $l_L$ is upper semi-continuous. It is well known that the Rost embedding (\cite{Chacon:1985aa, Rost:1971aa}) is a solution of \eqref{SEP} which takes the form of the hitting time of an inverse barrier, see also \cite{Cox:2015aa, Cox:2013ac, Obloj:2004aa}.

We will call a stopping region of this form a $K$-cave barrier.

\begin{figure}[h]
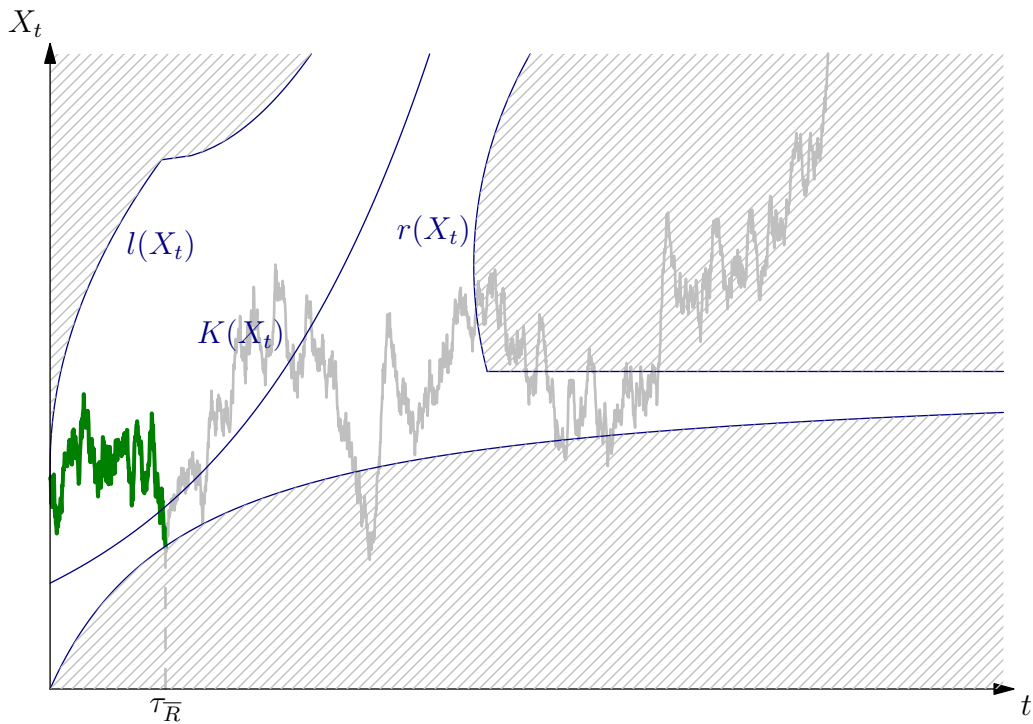

\centering
\begin{asy}[width=0.9\textwidth]
    import graph;
    import stats;
    import patterns;
    
     // We construct a Brownian motion of time length T, with N time-steps
    int N = 3000;
    //int N = 300;
    real T = 4.5;
    real dt = T/N;
    real B0 = 1;
    
    real xmax = 3;
    real xmin = 0;
    real r1 = 1.5;
    real l1 = 2.5;
    real x1 = 0.4;
    real x2 = 0.2;
    real eps = 0.05;
    real eps1 = 0.05;
    real kint = 0.5;

    real[] B; // Brownian motion
    real[] t; // Time

    path BM;
    
    // Seed the random number generator. Delete for a "random" path:
    srand(11);

    B[0] = B0;
    t[0] = 0;

    BM = (t[0],B[0]);
 // Define a barrier 

    real R1(real y) {return 2*exp(((y-2)**2)/8);}
    real R2(real y) {return (1/(r1-y)-1/r1);}
    real R2inv(real z) {return r1 - 1/(z+1/r1) ;}
    real R(real y) {return (y>r1) ? R1(y) : R2(y);}
    real L1(real y) {return  (y>B0+eps) ? ((y<l1) ? ((y-B0-eps)**2)/4 : sqrt(y-l1)+((1.45^2)/4)) : 0 ;}
    real L(real y) {return (y>B0) ? L1(y) : 0 ;}
    real K(real y) {return (y>0) ? log(y)-log(kint) : 0 ;}

    int H = N+1;
    int H2 = N+1;
    int BMstop;
    int BMstop2 = N+2;
    
      for (int i=1; i<N+1; ++i)
    {
      B[i] = B[i-1] + Gaussrand()*sqrt(dt);
      t[i] = i*dt;
      BM = BM--(t[i],B[i]);
    
          if ((H==N+1)&&(t[i]>=R(B[i])))
      {
	H = i;
	BMstop = length(BM);
      }

      if ((H==N+1)&&(t[i]<L(B[i])))
      {
	H = i;
	BMstop = length(BM);
      }
      if ((H2==N+1) && (B[i]<0))
      {
	B[i] = 0;
	BMstop2 = length(BM);
	H2 = i;
      }	
      if ((H2==N+1) && (B[i]>=xmax))
      {
	B[i] = xmax;
	BMstop2 = length(BM);
	H2 = i;
      }			
    }
    
    if (H==N+1)
    BMstop = length(BM);

    pen p = deepgreen + 1.5;
    pen p2 = mediumgray + 1;

    if (H<N+1)
    draw(subpath(BM,BMstop,BMstop2),p2);

    draw(subpath(BM,1,BMstop),p);

    pair tau = point(BM,BMstop);

    draw((tau.x,0)--tau,p2+dashed);
    label("$\tau_{\overline{R}}$",(tau.x,0),S);

    pen q = black + 0.5;
    
    draw("$l(X_t)$",graph(L1,identity,B0,xmax),SE,deepblue+0.5);
    // draw((L1(r1),r1)--(T,r1),NW,deepblue+0.5);
    draw(graph(R2,identity,xmin,R2inv(T)),NW,deepblue+0.5);
    draw("$r(X_t)$",graph(R1,identity,r1,xmax),SW,deepblue+0.5);
    draw((R1(r1),r1)--(T,r1),NW,deepblue+0.5);
    draw("$K(X_t)$", graph(K,identity,kint,xmax),NW,deepblue+0.5);

    path barrier1 = (graph(L1,identity,xmax,B0+eps)--(0,xmax)--(L(xmax),xmax)--cycle);

    add("hatch",hatch(1mm,mediumgray));

    fill(barrier1,pattern("hatch"));

    draw((0,xmin)--(0,xmax+eps1),q,Arrow);
    draw((0,xmin)--((T+eps1),xmin),q,Arrow);
    label("$t$",(T+eps1,xmin),SE);
    label("$X_t$",(0,(xmax+eps1)),NW);
    
     path barrier3 = (graph(R1,identity,xmax,r1)--(R1(r1),r1)--(T,r1)--(T,xmax)--cycle);

    path barrier4 = (graph(R2,identity,xmin,R2inv(T))--(T,xmin)--cycle);

    add("hatch",hatch(1mm,mediumgray));

    fill(barrier3,pattern("hatch"));

    fill(barrier4,pattern("hatch"));

    \end{asy}
\caption{An example of our LETF problem that has a $K$-cave barrier with an infinite region}
\label{spike}
\end{figure}

In the Brownian case, there is a unique Root barrier, or Rost inverse barrier, that solves \eqref{SEP} for any centred distribution $\mu$ with finite second moment (and no atom at $0$ for the Rost case), and these solutions have the nice property that they minimise, or maximise, the expected value of any convex function of the stopping time. In \cite{Beiglboeck:2013aa} the authors introduce a new embedding, the cave embedding, which can be viewed as the combination of a Root and a Rost embedding. Our stopping region has a similar form, and much of the analysis in this paper also applies to the cave embedding. In particular, the results we derive in Section~\ref{dualconvergence} can be deduced for the cave embedding in essentially the same manner as in this paper. The following results can also be adapted to price European call options on an inverse LETF, i.e. where $\beta<0$, and the only difference is the shape of the curve $K$.

We will actually consider pricing two options, the first of which is the problem described by \eqref{LOptSEP}.  The second problem is very similar and is notable due to its structure as a European call option on an exponential martingale. For this case, we consider the payoff function
\begin{equation}
\label{F}
F_{BM}(x,t)=\left(\exp\left(\beta x-\frac{1}{2}\beta^2t\right)-k\right)_+
\end{equation}
for $\beta>0$ a constant, and $k>0$ our strike. Here we can think of $B$, the discounted price process, as a Brownian motion (BM) after a time change. We define $h_{BM}$ to be $h_{BM}(x,t)=\exp(\beta x-\frac{1}{2}\beta^2t)$ so that $h_{BM}(B_t,t)$ is a martingale, and we have a similar stopping region given by $l_{BM}(x), r_{BM}(x)$ seperated by $K_{BM}(x)=\frac{2x}{\beta}-\frac{2}{\beta^2}\ln(k)$, as shown in Figure~\ref{nospike}. Our problem in this case is 
\begin{equation*} \label{OptSEP}\tag{OptSEP}
\sup_{\tau} \Ep{\left(\exp\left(\beta B_{\tau}-\frac{1}{2}\beta^2\tau\right)-k\right)_+} \quad \text{over solutions of \eqref{SEP}}.
\end{equation*}

It will usually be clear which problem we are talking about, in which case we will drop the subscripts. We will alternate which case we use in the proofs of results to give a clear representation of both, but all of our results hold for both problems. In fact, the problems are closely related since
\begin{equation*}
X_t=X_0\exp\left(B_t-\half t\right) \implies X_t^{\beta}\exp\left(-\frac{\beta(\beta-1)}{2}\tau\right)=X_0\exp\left(\beta B_t-\half \beta^2 t\right).
\end{equation*}
However, the embedding condition applies to different processes, and this is where the problems differ.

\begin{figure}[h]
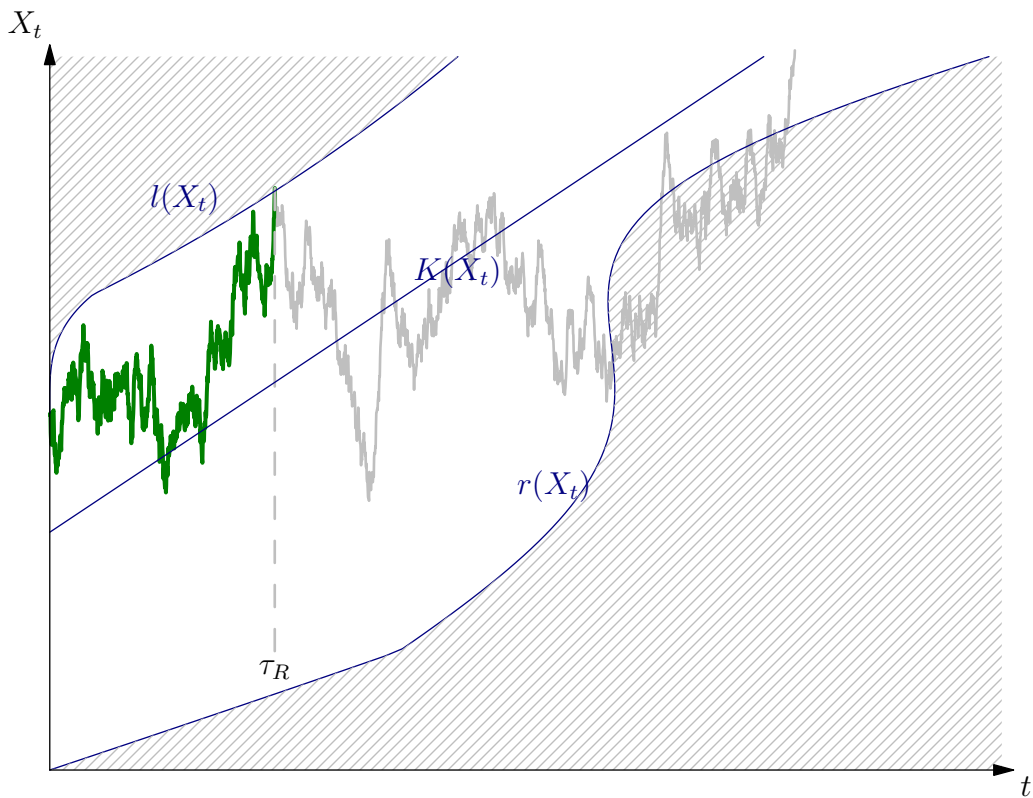

\begin{asy}[width=0.9\textwidth]
    import graph;
    import stats;
    import patterns;
 
    // We construct a Brownian motion of time length T, with N time-steps
    int N = 3000;
    //int N = 300;
    real T = 4;
    real dt = T/N;
    real B0 = 1;
    
    real xmax = 2.5;
    real xmin = -0.5;
    real l1 = 1.5;
    real x1 = 0.4;
    real x2 = 0.2;
    real eps = 0.05;
    real eps1 = 0.05;
    real kint = 0.5;

    real[] B; // Brownian motion
    real[] t; // Time

    path BM;
    
    // Seed the random number generator. Delete for a "random" path:
    srand(11);

    B[0] = B0;
    t[0] = 0;

    BM = (t[0],B[0]);
 // Define a barrier 
 
   real R(real y) {return (y>0) ? (2*(-exp(-(y-2)**2)+0.8*y+exp(-6.25)+0.75)) : (2*(-exp(-(y-2)**2)+1.5*y+exp(-6.25)+0.75)) ;}
    real L1(real y) {return  (y>B0+eps) ? ((y<l1) ? 2((y-B0-eps)^3) : 3(log(y)-log(l1)+(2/3)*(l1-B0-eps)^3)) : 0 ;}
    real L(real y) {return (y>B0) ? L1(y) : 0 ;}
    real K(real y) {return (2*(0.75*y-0.75*kint)) ;}

int H = N+1;
    int H2 = N+1;
    int BMstop;
    int BMstop2 = N+2;
    
      for (int i=1; i<N+1; ++i)
    {
      B[i] = B[i-1] + Gaussrand()*sqrt(dt);
      t[i] = i*dt;
      BM = BM--(t[i],B[i]);
      
      if ((H==N+1)&&(t[i]>=R(B[i])))
      {
        H=i;
        BMstop = length(BM);
      }   
      if ((H==N+1)&&(t[i]<L(B[i])))
      {
	H = i;
	BMstop = length(BM);
      }
      if ((H2==N+1) && (B[i]<-1))
      {
	B[i] = 0;
	BMstop2 = length(BM);
	H2 = i;
      }	
      if ((H2==N+1) && (B[i]>=xmax))
      {
	B[i] = xmax;
	BMstop2 = length(BM);
	H2 = i;
      }			
    }
    
    if (H==N+1)
   { BMstop = length(BM);}

    pen p = deepgreen + 1.5;
    pen p2 = mediumgray + 1;

    if (H<N+1)
    {draw(subpath(BM,BMstop,BMstop2),p2);}

    draw(subpath(BM,1,BMstop),p);

    pair tau = point(BM,BMstop);

    draw((tau.x,0)--tau,p2+dashed);
    label("$\tau_{R}$",(tau.x,0),S);

    pen q = black + 0.5;
    
    draw("$l(X_t)$",graph(L1,identity,B0,xmax),NW,deepblue+0.5);
    draw("$r(X_t)$", graph(R,identity,xmin,xmax),SW,deepblue+0.5);
    draw("$K(X_t)$", graph(K,identity,kint,xmax),NE,deepblue+0.5);

    path barrier1 = (graph(L1,identity,xmax,B0+eps)--(0,xmax)--(L(xmax),xmax)--cycle);

    add("hatch",hatch(1mm,mediumgray));

    fill(barrier1,pattern("hatch"));
    
    path barrier2 = (graph(R,identity,xmax,xmin)--(R(xmin),xmin)--(T,xmin)--(T,xmax)--cycle);
    
    add("hatch",hatch(1mm,mediumgray));

    fill(barrier2,pattern("hatch"));

    draw((0,xmin)--(0,xmax+eps1),q,Arrow);
    draw((0,xmin)--((T+eps1),xmin),q,Arrow);
    label("$t$",(T+eps1,xmin),SE);
    label("$X_t$",(0,(xmax+eps1)),NW);

    \end{asy}
\caption{An example $K$-cave stopping region with continuous boundaries}
\label{nospike}
\end{figure}

\section{Existence of a Maximiser} \label{existence}
\subsection{Stop-Go Pairs}
In \cite{Beiglboeck:2013aa}, the authors introduce the idea of \emph{stop-go pairs}, and develop a monotonicity principle that allows them to prove, using a unified approach, the existence of all solutions to (SEP) that have an optimality property. The intuition behind stop-go pairs is as follows: we consider continuous paths starting from 0 and want to decide when it is optimal to stop or continue them. Consider a stopped path $(g,t)$ and a path that is not yet stopped $(f,s)$, where $f(s)=g(t)$. We imagine stopping $(f,s)$ at time $s$ and creating a continuation of $(g,t)$ by transferring all paths which extend $(f,s)$ onto $(g,t)$. If this improves the value of the quantity we are optimising, then we have contradicted the optimality of the stopping region. In this case we call $((f,s),(g,t))$ a stop-go pair, and we denote the set of stop-go pairs by $\textsf{SG}$. This then can be extended by a second optimality problem in order to sort the pairs that see exactly the same value of the optimality problem when mass is transferred onto the stopped path.

Formally, \cite{Beiglboeck:2013aa} considers $\mathsf{S}=\{(f,s):f(s)\rightarrow \mathbb{R} \medspace \text{is continuous},f(0)=0\}$ and a Borel function $\gamma :\mathsf{S}\rightarrow \mathbb{R}$, so $\gamma_t=\gamma((X_s)_{s\leq t},t)$ is an optional stochastic process. Our problem is to find the maximiser of 
\begin{equation}
\label{Opt1}
P_{\gamma}=\sup\{\mathbb{E}[{\gamma}_\tau]: \tau \medspace \text{solves (SEP)}\}.
\end{equation}
We set
\begin{equation*}
(\gamma^{\oplus (f,s)})_u:=\gamma (f\oplus B, s+u),
\end{equation*}
and then $(f,g)$ is a stop-go pair, $(f,g)\in \textsf{SG}$, if for every stopping time $\sigma$ such that $0<\mathbb{E}[\sigma]<\infty$,
\begin{equation*}
\mathbb{E}[(\gamma^{\oplus (f,s)})_\sigma] + \gamma(g,t) < \gamma(f,s) + \mathbb{E}[(\gamma^{\oplus (g,t)})_\sigma].
\end{equation*}
If $\hat{\tau}$ is our maximiser, we can then find a set $\Gamma\subseteq \mathsf{S}$ with $\mathbb{P}[((B_s)_{s\leq\hat{\tau}},\hat{\tau})\in \Gamma]=1$, such that $\Gamma$ is $\gamma$-monotone, that is,
\begin{equation*}
\textsf{SG}\cap\left(\Gamma^< \times \Gamma\right)=\varnothing,
\end{equation*}
where $\Gamma^<:=\{(f,s):\exists (\tilde{f},\tilde{s})\in\Gamma,s<\tilde{s} \medspace \text{and} f\equiv\tilde{f}\medspace \text{on}\medspace [0,s]\}$. Denote the set of maximisers of $P_{\gamma}$ by $\mathsf{Opt}_{\gamma}$ and consider another Borel function $\tilde{\gamma}:\mathsf{S}\rightarrow\mathbb{R}$. In \cite{Beiglboeck:2013aa} it is shown that $\mathsf{Opt}_\gamma$ is non-empty and compact for suitable $\gamma$, and so we can assume that $\hat{\tau}$ is also a maximiser of the secondary optimisation problem
\begin{equation}
\label{Opt2}
P_{\tilde{\gamma}|\gamma}=\sup\{\mathbb{E}[\tilde{\gamma_\tau}]:\tau\in \mathsf{Opt}_\gamma\}.
\end{equation}
The set of secondary stop-go pairs, $\textsf{SG}_2$ consists of all $((f,s),(g,t)))\in \mathsf{S}\times \mathsf{S}$ such that $f(s)=g(t)$ and for every stopping time $\sigma$ with $0<\mathbb{E}[\sigma]<\infty$ we have
\begin{equation}
\label{SGleq}
\mathbb{E}[(\gamma^{\oplus (f,s)})_\sigma] + \gamma(g,t) \leq \gamma(f,s) + \mathbb{E}[(\gamma^{\oplus (g,t)})_\sigma]
\end{equation}
and the equality
\begin{equation}
\label{SGeq}
\mathbb{E}[(\gamma^{\oplus (f,s)})_\sigma] + \gamma(g,t) = \gamma(f,s) + \mathbb{E}[(\gamma^{\oplus (g,t)})_\sigma]
\end{equation}
implies the inequality
\begin{equation}
\label{SG2}
\mathbb{E}[(\tilde{\gamma}^{\oplus (f,s)})_\sigma] + \tilde{\gamma}(g,t) < \tilde{\gamma}(f,s) + \mathbb{E}[(\tilde{\gamma}^{\oplus (g,t)})_\sigma].
\end{equation}
Then we can also assume that
\begin{equation*}
\textsf{SG}_2 \cap \left(\Gamma^< \times \Gamma\right)=\varnothing.
\end{equation*}
Theorem 7.1 in \cite{Beiglboeck:2013aa} tells us that there exists a $\gamma$-monotone Borel set $\Gamma\subseteq \mathsf{S}$ such that $\mathbb{P}$-a.s. $((B_t)_{t\leq\hat{\tau}},\hat{\tau})\in\Gamma$.

\subsection{Existence Theorem}
We use the notion of stop-go pairs to prove the following theorem:
\begin{theorem} \thlabel{SGTheorem}
There exists a stopping time $\tau_{\mathcal{R}}$ which maximises $\mathbb{E}[F(B_\tau,\tau)]$ over all solutions to \eqref{SEP} and which is of the form $\tau_{\mathcal{R}}=\inf\{t>0:(X_t,t)\in \mathcal{R}\}$ for some $K$-cave barrier $\mathcal{R}$.
\end{theorem}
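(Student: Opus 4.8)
The plan is to use the monotonicity principle of \cite{Beiglboeck:2013aa} with $\gamma_t=F(B_t,t)$; I would treat \eqref{OptSEP} and \eqref{LOptSEP} simultaneously, writing $h$ for $h_{BM}$ or $h_L$ and $K$ for the separating curve, since the two arguments are identical. First I would note that $F\ge 0$ and $F\le h$, and that for any solution $\tau$ of \eqref{SEP} the exponential martingale $h(B_{t\wedge\tau},t\wedge\tau)$ satisfies $\E[h(B_\tau,\tau)]\le 1$ by Fatou, so $\E[F(B_\tau,\tau)]\le 1$ uniformly in $\tau$; combined with continuity of $F$ this places us in the setting of \cite{Beiglboeck:2013aa}, so \eqref{Opt1} admits a maximiser $\hat\tau$, and Theorem~7.1 there yields a $\gamma$-monotone Borel set $\Gamma$ with $((B_s)_{s\le\hat\tau},\hat\tau)\in\Gamma$ $\P$-a.s., i.e. $\textsf{SG}\cap(\Gamma^<\times\Gamma)=\varnothing$. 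If needed to sort degenerate configurations I would also carry along a secondary objective $\tilde\gamma$ given by a strictly convex function of the stopping time, exactly as for the Root/Rost embeddings, and use $\textsf{SG}_2\cap(\Gamma^<\times\Gamma)=\varnothing$.

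The crux is the classification of $\textsf{SG}$. Since $\gamma(f,s)=F(f(s),s)$ depends on a path only through its current value and time, for $f(s)=g(t)=x$ and any $\sigma$ with $0<\E[\sigma]<\infty$ the defining inequality of a stop-go pair becomes, writing $N_\sigma:=\exp(\beta B_\sigma-\tfrac12\beta^2\sigma)$ (a mean-one martingale) and using $h(x+B_\sigma,u+\sigma)=h(x,u)\,N_\sigma$, the comparison
\[
\E\!\left[(h(x,s)N_\sigma-k)_+\right]-(h(x,s)-k)_+ \;<\; \E\!\left[(h(x,t)N_\sigma-k)_+\right]-(h(x,t)-k)_+ .
\]
I would then check that $\Psi_\sigma(c):=\E[(cN_\sigma-k)_+]-(c-k)_+$ is strictly increasing on $(0,k)$ and strictly decreasing on $(k,\infty)$ for every such $\sigma$ — this uses only that $N_\sigma$ is unbounded above and takes arbitrarily small positive values, so the turning point sits at $c=k$ independently of $\sigma$. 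Since $h(x,\cdot)$ is strictly decreasing with $h(x,t)=k$ exactly on $t=K(x)$, it follows that for times on the same side of $K$: to the left of $K$ (where $h(x,s),h(x,t)>k$) one has $((f,s),(g,t))\in\textsf{SG}$ precisely when $s<t$, while to the right of $K$ (where $F$ vanishes, the comparison reducing to monotonicity of $c\mapsto\E[(cN_\sigma-k)_+]$) precisely when $s>t$. Feeding this into $\textsf{SG}\cap(\Gamma^<\times\Gamma)=\varnothing$: to the left of $K$, $\Gamma$ cannot contain an unstopped point $(x,s)$ below a stopped point $(x,t)$, forcing its stopping region there to be downward closed in time (an inverse/Rost barrier); to the right of $K$ it is forced to be upward closed in time (a Root barrier). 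The remaining, ``mixed'', pairs need not lie in $\textsf{SG}$, which is precisely the source of the non-uniqueness noted in the introduction.

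It then remains to turn $\Gamma$ into a genuine closed $K$-cave barrier and identify $\hat\tau$ with its hitting time. Concretely I would set $l(x):=\sup\{t\le K(x):\text{some }\Gamma\text{-stopped path passes through }(x,t)\}$ and $r(x):=\inf\{t\ge K(x):\text{some }\Gamma\text{-stopped path passes through }(x,t)\}$ (so $l(x)\le K(x)\le r(x)$), and then put $\mathcal R:=\{t\le l(x)\}\cup\{t\ge r(x)\}$, closing it so that $\overline{\mathcal R}=\{t\le l(x)\}$ is an inverse barrier and $\underline{\mathcal R}=\{t\ge r(x)\}$ is a barrier; the previous step makes this consistent with $\Gamma$. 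Finally one argues $\hat\tau=\tau_{\mathcal R}$ $\P$-a.s.: at time $\hat\tau$ the path lies in $\Gamma\subseteq\mathcal R$, giving $\tau_{\mathcal R}\le\hat\tau$; conversely, for $t<\hat\tau$ the stopped path $((B_s)_{s\le\hat\tau},\hat\tau)$ witnesses $((B_s)_{s\le t},t)\in\Gamma^<$, and $\gamma$-monotonicity prevents $(B_t,t)$ from being an interior point of $\mathcal R$, which together with a standard regularity argument for the hitting time of the closed set $\mathcal R$ gives $\tau_{\mathcal R}\ge\hat\tau$. Hence $\tau_{\mathcal R}$ solves \eqref{SEP}, attains the supremum, and has the stated form.

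I expect this last passage to be the main obstacle: because $\Gamma$ gives no information on regions of $(x,t)$-space that $\hat\tau$ never visits, the region $\mathcal R$ is genuinely not unique, and one must make a consistent choice and then handle the usual technical points — irregular boundary points, stability of the hitting time under taking closures, the interface along the curve $K$, and (in the \eqref{LOptSEP} case) the behaviour of the geometric Brownian motion near $0$ and as $x\to\infty$ — in order to be sure that the resulting closed $K$-cave barrier is hit at exactly the optimal time. Monotonicity of $l$, which is part of the definition of a $K$-cave barrier, is the one structural fact that does not drop out of the $\textsf{SG}$ analysis directly and would have to be argued separately, e.g. from the secondary optimisation together with the embedding constraint.
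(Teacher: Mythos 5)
Your overall architecture matches the paper's: invoke the monotonicity principle, classify stop-go pairs by comparing the value of $h$ against $k$, read off an inverse barrier to the left of $K$ and a Root barrier to the right, and then identify $\hat\tau$ with the hitting time of a suitably closed region via the open/closed hitting-time comparison. The barrier construction and the final identification step are essentially the paper's argument (the paper works in $(m,x)$-coordinates with $m=h(g(t),t)$ rather than $(x,t)$, but that is cosmetic).

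There is, however, a genuine gap at the central technical claim. You assert that $\Psi_\sigma(c):=\E[(cN_\sigma-k)_+]-(c-k)_+$ is \emph{strictly} increasing on $(0,k)$ and strictly decreasing on $(k,\infty)$ for \emph{every} stopping time $\sigma$ with $0<\E[\sigma]<\infty$, on the grounds that $N_\sigma$ is unbounded above and takes arbitrarily small positive values. That is false for general $\sigma$: take $\sigma$ to be the exit time of $[-1,1]$, so $N_\sigma\le \e^{\beta}$ a.s.; then for all $c<k\e^{-\beta}$ one has $cN_\sigma<k$ a.s., hence $\Psi_\sigma(c)=0$ identically on that interval. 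By It\^o--Tanaka $\Psi_\sigma(c)=\tfrac12\E[L^k_\sigma(cN)]$, and the correct statement is only that $\Psi_\sigma$ is non-decreasing on $(0,k)$ and non-increasing on $(k,\infty)$, with strictness exactly where the expected local time at $k$ is positive. Consequently your ``precisely when $s<t$'' characterisation of $\textsf{SG}$ fails, and the pairs with vanishing local time are \emph{not} degenerate exceptions to be ``sorted if needed'': they must be captured by the secondary problem \eqref{Opt2}, or the $\gamma$-monotone set $\Gamma$ does not force the barrier structure. Moreover, your proposed fix --- a strictly convex function of the stopping time as secondary objective --- does not work: for such $\tilde\gamma$ the inequality \eqref{SG2} reads $\E[\phi(s+\sigma)-\phi(s)]<\E[\phi(t+\sigma)-\phi(t)]$, which by convexity requires $s<t$; this has the right sign for the pairs with $h^f>h^g>k$ (where $s<t$) but the \emph{wrong} sign for those with $h^f<h^g<k$ (where $s>t$). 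One needs convexity below $k$ and concavity above $k$ simultaneously, which is why the paper takes the value-dependent secondary objective $\tilde\gamma=-((\cdot-k)^+)^2+((\cdot-k)^-)^2$, reducing \eqref{SG2} to a comparison of quadratic variations $\E[\langle M^f\rangle_\sigma]<\E[\langle M^g\rangle_\sigma]$ that has the correct sign on both sides of $K$. Until this is repaired, the classification of $\textsf{SG}\cup\textsf{SG}_2$, and hence the barrier structure, is not established.
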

%Still don't know about the increasing part.. Can probably argue for R_1

To prove this we consider the set of stop-go pairs of our primary and secondary, yet to be determined, optimality problems, and for these we need to introduce local times. The local time of a continuous semimartingale $X$ at $a$ is the increasing, continuous process $L^a$ that gives the It\^o-Tanaka formula:
\begin{equation*}
(X_t -a)_+=(X_0 -a)_+ + \int_0^t \mathbbm{1}_{\{X_s> a\}}\mathrm{d}X_s + \frac{1}{2}L_t^a.
\end{equation*}
Observe that we can write
\begin{equation*}
L_t^{a}(X)=\lim_{\epsilon\to 0}\frac{1}{\epsilon}\int_0^t \mathbbm{1}_{[a,a+\epsilon]}(X_s)\mathrm{d}\langle X\rangle_s,
\end{equation*}
where $\langle X\rangle_s$ is the quadratic variation process of $X$. By establishing the form of $\textsf{SG}$ and $\textsf{SG}_2$ we can argue as in \cite{Beiglboeck:2013aa} that we have a $\gamma$-monotone set supporting a maximiser of our two optimisation problems, and that it can be written as a stopping time of the required form.

\begin{proof}
We write
\begin{align*}
M_u^f&:=h(f(s)+B_u,s+u)=\exp(\beta(f(s)+B_{u})-\frac{\beta^2}{2}(s+u))=h^fM_u\\
M_u^g&:=h(g(t)+B_u,t+u)=\exp(\beta(g(t)+B_{u})-\frac{\beta^2}{2}(t+u))=h^gM_u
\end{align*}
where the constant $h^f=h(f(s),s)$ is introduced to emphasise that the process is of the form $(h(f(s),s)M_u)_u$, for $M_u=\exp(\beta B_u-\frac{\beta^2}{2}u)$ with $(B_u)_u$ a BM. We then see that, for $\alpha=\frac{h^g}{h^f}=\exp\left(-\frac{1}{2}\beta^2(t-s)\right)$,
\begin{equation*}
M_u^g=\alpha M_u^f.
\end{equation*}
Then, after applying the Monotone Convergence Theorem along the localising sequence $\sigma_j=\sigma\wedge j$, along with Fatou's Lemma and Conditional Jensen's Inequality, the first term in \eqref{SGleq} becomes
\begin{equation*}
\mathbb{E}[F(f\oplus B,s+\sigma)]=(h^f-k)_+ +\frac{1}{2}\mathbb{E}[L_{\sigma}^k(M^f)].
\end{equation*}
Here we are taking the local time of the process $(M^f_u)_u$ accumulated at $k$ up to time $\sigma$.

% Ito-Tanaka holds at stopping times so take the localising sequence $\sigma_j=\sigma\wedge j$ and it holds at these times. Then 
%\begin{equation*}
%\lim_{j\rightarrow\infty}\Ep{L_{\sigma\wedge j}^k(M^f)}=\Ep{\lim_{j\rightarrow\infty} L_{\sigma\wedge j}^k(M^f)}=\Ep{L_{\sigma}^k(M^f)},
%\end{equation*}
%by the MCT.
%Then, by Fubini,
%\begin{equation*}
%\Ep{\lim_{j\rightarrow\infty} F(f\oplus B,s+\sigma\wedge j)}\leq \lim_{j\rightarrow\infty}\Ep{F(f\oplus B,s+\sigma\wedge j)}=(h^f-k)_+ + \frac{1}{2}\Ep{L_{\sigma}^k(M^f)},
%\end{equation*}
%by the above and Doob's Optional Stopping Theorem.
%But also, by conditional Jensen's inequality,
%\begin{equation*}
%\Ep{F(f\oplus B,s+\sigma)| \F_{\sigma\wedge j}}\geq\left(\Ep{M_{\sigma}^f | \F_{\sigma\wedge j}}-k\right)_+=\left(M_{\sigma\wedge j}^f-k)_+=F(f\oplus B,s+\sigma\wedge j).
%\end{equation*}
%Then we must have that 
%\begin{equation*}
%\Ep{F(f\oplus B,s+\sigma)}=\lim_{j\rightarrow\infty}\Ep{F(f\oplus B,s+\sigma\wedge j)}=h^f-k)_+ +\frac{1}{2}\Ep{L_{\sigma}^k(M^f)}
%\end{equation*}

If we use the It\^o-Tanaka formula on both sides of \eqref{SGleq}, this is equivalent to
\begin{equation*}
(h^f-k)_+ +\frac{1}{2}\mathbb{E}[L_{\sigma}^k(M^f)]+(h^g-k)_+\leq (h^g-k)_+ +\frac{1}{2}\mathbb{E}[L_{\sigma}^k(M^g)]+(h^f-k)_+,
\end{equation*}
which holds iff
\begin{equation*}
\mathbb{E}[L_{\sigma}^k(M^f)]\leq \mathbb{E}[L_{\sigma}^k(M^g)].
\end{equation*}
We have equality, i.e. case \eqref{SGeq}, when
\begin{equation*}
\mathbb{E}[L_{\sigma}^k(M^f)]= \mathbb{E}[L_{\sigma}^k(M^g)],
\end{equation*}
which clearly holds when $h^f=h^g$, which happens exactly when $s=t$, since $f(s)=g(t)$.

We aim to show that
\begin{equation}
\label{casesless}
h^f<h^g<k\implies\begin{cases}
\text{either} & \mathbb{E}[L_{\sigma}^k(M^f)]< \mathbb{E}[L_{\sigma}^k(M^g)]\\
\text{or} & \mathbb{E}[L_{\sigma}^k(M^f)]= \mathbb{E}[L_{\sigma}^k(M^g)]=0\end{cases}
\end{equation}
\begin{equation}
\label{casesmore}
h^f>h^g>k\implies\begin{cases}
\text{either} & \mathbb{E}[L_{\sigma}^k(M^f)]<\mathbb{E}[L_{\sigma}^k(M^g)]\\
\text{or} & \mathbb{E}[L_{\sigma}^k(M^f)]= \mathbb{E}[L_{\sigma}^k(M^g)]=0.\end{cases}
\end{equation}

We have to argue the two cases seperately, so suppose first that $h^f<h^g<k$ and take a stopping time $\sigma$ such that $\mathbb{P}[M^g_{\sigma}>k]>0$. Also let $\alpha=\frac{h^g}{h^f}=\exp(-\frac{\beta^2}{2}(t-s))$, so $\alpha>1$. Then we have
\begin{align*}
\mathbb{E}[L_{\sigma}^k(M^g)]&=\mathbb{E}[|M^g_{\sigma}-k|]-|h^g-k|\\
&=\mathbb{E}[|M^g_{\sigma}-k|-k]+h^g\\
&=\mathbb{E}[-M^g_{\sigma}+2(M^g_{\sigma}-k)_+]+h^g\\
&>\mathbb{E}\left[-M^g_{\sigma}+2\left(M^g_{\sigma}-\alpha k\right)_+\right]+h^g\\
&=\mathbb{E}\left[\left|M^g_{\sigma}-\alpha k\right|\right]-\left|h^g-\alpha k\right|\\
&=\mathbb{E}[L_{\sigma}^{\alpha k}(M^g)].
\end{align*}

Now we note (see for example \cite[Chapter~VI, Exercise~1.22]{Revuz:1999aa}) that if $f$ is a strictly increasing function that can be written as the difference of two convex functions, $a>0$, and $X_t$ a continuous semimartingale,
\begin{equation*}
L_t^{f(a)}(f(X))=f_+'(a)L_t^a(X).
\end{equation*}
 We apply the result with $f(k)=\alpha k$ to find $L_{\sigma}^{\alpha k}(M^g)=\alpha L_{\sigma}^k(M^f)$.
%The proof is simple, so we quickly prove for the case $f(k)=\alpha k$, which is the result we will use here:
%\begin{align*}
%L_t^{\alpha k}(M^g)&=\lim_{\epsilon\to 0}\frac{1}{\epsilon}\int_0^t \mathbbm{1}_{[\alpha k,\alpha k+\epsilon]}(M^g_s)\mathrm{d}\langle M^g\rangle_s\\
%&=\lim_{\epsilon\to 0}\frac{1}{\epsilon}\int_0^t \mathbbm{1}_{[\alpha k,\alpha k+\epsilon]}(M^g_s)\left(h^g\right)^2\beta^2 M_s^2\mathrm{d}s\\
%&=\lim_{\frac{\epsilon}{\alpha} \to 0}\frac{\alpha}{\alpha \epsilon}\int_0^t\mathbbm{1}_{[k,k+\frac{\epsilon}{\alpha}]}(M^f_s)\alpha^2\mathrm{d}\langle M^f\rangle_s\\
%&=\alpha L_t^k(M^f).
%\end{align*}

Then, combining the last two results, taking expectations, and noting that $\alpha>1$, we see that if $\mathbb{E}[L_{\sigma}^k(M^g)]>0$, so that $\mathbb{P}[M^g_{\sigma}>k]>0$, then
\begin{equation*}
\mathbb{E}[L_{\sigma}^k(M^g)]>\mathbb{E}\left[L_{\sigma}^{\alpha k}(M^g)\right]=\alpha\mathbb{E}[L_{\sigma}^k(M^f)]>\mathbb{E}[L_{\sigma}^k(M^f)].
\end{equation*}

This gives \eqref{casesless}, but we require a different argument for \eqref{casesmore}, so suppose now that $k<h^g<h^f$. We have $M^g_u=h^g \mathrm{e}^{Y_u}$, for $Y_u=\beta B_u - \half \beta^2 u$, so using the local time result above, $L^k_{\sigma}(M^g)=k L^{\log{\frac{k}{h^g}}}_{\sigma}(Y)$, and similarly $L^k_{\sigma}(M^f)=k L^{\log{\frac{k}{h^f}}}_{\sigma}(Y)$. This means that our problem is equivalent to considering the local time spent by Brownian motion with drift at two different levels. Consider the function $U(x,y)=\Eps{x}{L^y_\infty (Y)}$. By the strong Markov property, 
\begin{equation*}
U(x,y)=U(y,y)\P^x\left(H_y(Y)<\infty\right),
\end{equation*}
where $H_z(Y)=\inf\{u\geq 0 : \medspace Y_u=z\}$, so
%\begin{equation*}
%\P^x\left(H_y(Y)<H_{-\infty}(Y)\right)=
%\begin{cases}
%1,&y<x \\
%\mathrm{e}^{x-y}, & y\geq x.
%\end{cases}
%\end{equation*}
then we know that
\begin{list}{$\bullet$}{}
\item $U(x,y)$ is constant for $y\leq x$: $U(x,y)=U(x,x)$ $\forall y\leq x$
\item $U(x,y)$ is strictly decreasing in $y$ for $y\geq x$.
\end{list}
In \eqref{SGleq} we run our Brownian motion until a stopping time $\sigma$, so let $\nu=\mathcal{L}\left(Y_\sigma\right)$ and suppose $\sigma$ is such that $\mathbb{P}^x[\sigma>H_{a'}(Y)]>0$, where $x>a>a'$. Then, by the properties of $U$ above,
\begin{align*}
\Eps{x}{L_\sigma^a (Y)}&=\Eps{x}{L_\infty^a(Y)} - \Eps{Y_\sigma}{L_\infty^a(Y)} \\
&= U(x,a) - \int_\R U(y,a) \nu(\di y) \\
&= U(x,a') - \int_\R U(y,a) \nu(\di y) \\
&> U(x,a') - \int_\R U(y,a') \nu(\di y) \\
&= \Eps{x}{L_\sigma^{a'} (Y)}.
\end{align*}
Setting $a=\log{\frac{k}{h^g}}$ and $a'=\log{\frac{k}{h^f}}$ gives \eqref{casesmore}.

We now have that
\begin{equation*}
\textsf{SG}\supseteq\{((f,s),(g,t)):h^f>h^g>k\text{ or } h^f<h^g<k \text{ and } \mathbb{E}[L_{\sigma}^k(M^g)]>0 \text{ for all }\sigma\}
\end{equation*}
and the pairs in $\{((f,s),(g,t)):h^f>h^g>k\text{ or }h^f<h^g<k\}$ that are not in \textsf{SG} are those for which we can find a stopping time such that the expected values of the local times at $k$ up to the stopping time of the two processes are equal. However we have shown that if this is the case (and $s\neq t$) then these expected values must be equal to zero. This tells us that when we set our paths off at $h^f$ and $h^g$, they never reach $k$, and so in particular $\sgn(M^f_\sigma-k)=\sgn(h^f-k)$ and $\sgn(M^g_\sigma-k)=\sgn(h^g-k)$, and this also holds for all times up to $\sigma$. We now define our secondary optimality problem as in \eqref{Opt2} with
\begin{equation*}
\tilde{\gamma}(f,s)=-((f(s)-k)^+)^2+((f(s)-k)^-)^2
\end{equation*}
Consider a pair of paths $((f,s),(g,t))$ and a stopping time $\sigma$ such that $h^f<h^g<k$ and $L^k_{\sigma}(M^f)=L^k_{\sigma}(M^g)=0$. Substituting these into \eqref{SG2} gives
\begin{equation*}
\mathbb{E}[(k-M^f_\sigma)^2]+(k-h^g)^2<\mathbb{E}[(k-M^g_\sigma)^2]+(k-h^f)^2
\end{equation*}
which, by It\^o-Tanaka, simplifies to
\begin{equation*}
 \mathbb{E}[\langle M^f\rangle_\sigma]<\mathbb{E}[\langle M^g\rangle_\sigma].
\end{equation*}
This is true since $h^f<h^g$, and we finally have that 
\begin{align*}
\textsf{SG}_2&\supseteq\{((f,s),(g,t)):f(s)=g(t),\medspace s<t<\frac{2}{\beta^2}(\beta f(s)-\log(k)) \text{ or }s>t>\frac{2}{\beta^2}(\beta f(s)-\log(k))\}\\
&=\{((f,s),(g,t)):f(s)=g(t),\medspace h(f(s),s)<h(g(t),t)<k \text{ or } h(f(s),s)>h(g(t),t)>k\}.
\end{align*} 

Now, by \cite{Beiglboeck:2013aa}, that there exists a $\gamma$-monotone set $\Gamma\in \mathsf{S}$ with $\mathbb{P}[((B_s)_{s\leq \tau_{R}},\tau_{R})\in\Gamma]=1$, and we can complete our proof.

We know that there is a maximiser, $\tau_R$ of $P_\gamma$ and $P_{\tilde{\gamma}|\gamma}$, and that we can pick a $\gamma$-monotone set $\Gamma\in \mathsf{S}$ supporting $\tau_R$. Define
\begin{gather*}
\underline{\mathcal{R}}_{CL}=\{(m,x):\exists (g,t)\in\Gamma,\medspace h(g(t),t)\leq m<k, g(t)=x\}\\
\underline{\mathcal{R}}_{OP}=\{(m,x):\exists (g,t)\in\Gamma,\medspace h(g(t),t)<m<k, g(t)=x\}\\
\overline{\mathcal{R}}_{CL}=\{(m,x):\exists (g,t)\in\Gamma,\medspace h(g(t),t)\geq m>k, g(t)=x\}\\
\overline{\mathcal{R}}_{OP}=\{(m,x):\exists (g,t)\in\Gamma,\medspace h(g(t),t)>m>k, g(t)=x\}
\end{gather*}
and write $\mathcal{R}_{OP}=\underline{\mathcal{R}}_{OP}\cup \overline{\mathcal{R}}_{OP}$ and $\mathcal{R}_{CL}=\underline{\mathcal{R}}_{CL}\cup \overline{\mathcal{R}}_{CL}$. Denote the corresponding hitting times (by $(M_t(\omega),B_t(\omega))$) of these sets by $\tau_{OP}=\overline{\tau}_{OP}\wedge\underline{\tau}_{OP}$, and similarly for $\tau_{CL}$. We claim that $\tau_{CL}\leq \tau_R \leq \tau_{OP}$, and indeed we immediately see that by the definition of $\mathcal{R}_{CL}$ we have that $\tau_{CL}\leq \tau_R$.

To show the second inequality pick $\omega$ such that $((B_s)_{s\leq\tau_{R}(\omega)},\tau_R(\omega))\in\Gamma$ and assume for contradiction that $\overline{\tau}_{OP}(\omega)<\tau_R(\omega)$ (the argument for $\underline{\tau}_{OP}(\omega)$ is similar). Then $\exists s\in[\overline{\tau}_{OP}(\omega),\tau_R(\omega))$ such that $f:=(B_r(\omega))_{r\leq s}$ has $(h(f(s),s),f(s))\in\overline{\mathcal{R}}_{OP}$. Since $s<\tau_R(\omega)$ we know that $f\in\Gamma^<$. But then by the definition of $\overline{\mathcal{R}}_{OP}$, $\exists (g,t)\in\Gamma$ such that $f(s)=g(t)$ and $h(g(t),t)>h(f(s),s)>k$ which contradicts the $\gamma$-monotonicity of $\Gamma$, since $(g(t),f(s))\in \textsf{SG}_2 \cap (\Gamma^< \times \Gamma)$.

Finally, observe that for $\underline{\tau}$ by the Strong Markov Property, and the fact that one-dimensional Brownian Motion immediately returns to its starting point, that $\underline{\tau}_{CL}=\underline{\tau}_{OP}$. For $\overline{\tau}$ we argue as in the Rost embedding case of \cite[Theorem~2.4]{Beiglboeck:2013aa}.

It is clear that we then have such a domain consisting of a barrier and an inverse barrier seperated by $K(x)$, since when $f(s)=g(t)$ we have that $h^f>h^g\geq m$ $\implies$ $s<t$.
\end{proof}

\begin{remark}
To repeat these arguments for the LETF payoff we have $h(x,t)=x^\beta\exp(-\frac{\beta(\beta-1)}{2}t)$, and instead of $M^f$ and $M^g$ we look at
\begin{align*}
X_u^f&:=h(f(s)+X_u,s+u)=(f(s)+X_u)^\beta\exp\left(-\frac{\beta(\beta-1)}{2}(s+u)\right)\\
X_u^g&:=h(g(t)+X_u,t+u)=(g(t)+X_u)^\beta\exp\left(-\frac{\beta(\beta-1)}{2}(t+u)\right)
\end{align*}
where $(X_u)_u$ is a GBM. For the inverse-barrier argument we have that, since $f(s)=g(t)$, $X_u^g=\alpha X_u^f$ for $\alpha=\exp(-\frac{\beta(\beta-1)}{2}(t-s))$. For $k<h^g<h^f$ we write $X_u^g=h^g\e^{Y_u}$ where $Y_u$ is again a martingale with a negative drift. We can then repeat exactly the arguments above.
\end{remark}

%\begin{remark}
%We have so far only considered the cases were $\beta>0$ for the BM case, or $\beta>1$ in the GBM case, however the above argument will work for any $\beta\neq0$ (or $\beta\neq0,1$ in the GBM case). For $\beta<0$ the curve $K$ has a negative gradient, and we get the barrier and inverse-barrier separated by this curve instead. Similarly for the GBM case, with $0<\beta<1$ the curve $K$ has a negative gradient, but for $\beta<0$ it still slopes upwards.
%\end{remark}

\subsection{Non-uniqueness} \label{nonunique}
We have proven that there is a solution to \eqref{OptSEP} which maximises our expected terminal payoff and is the hitting time of a $K$-cave barrier, but it is important to note that there is not a unique solution to \eqref{SEP} of this form for non-trivial distributions. 

One example of non-uniqueness is a result of having a non-increasing left-hand boundary $l$. In this case there can be areas of $l$ that we do not hit, and so these parts of $l$ could actually take any form, as long as they do not embed any mass. Any choice of $l$ has an increasing equivalent (where on any regions we do not hit, $l$ just remains constant), and to remove this form of non-uniqueness we can assume that we are taking this choice of the left boundary. This is equivalent to the idea of uniqueness of regular barriers, as introduced by Loynes in \cite{Loynes:1970aa}.

Even once we have made this choice of $l$, a more troublesome form of non-uniqueness can occur. Consider for example an atomic distribution with atoms at three points $N$, $-N$, and $z\in (0,N)$. We can change our stopping time by moving the points $r(z)$ and $l(z)$, and it is easy to see that increasing $l(z)$ (moving our left hand boundary at $z$ to the right) increases the amount of mass stopped at $z$. Similarly, decreasing $l(z)$ decreases the amount of paths stopped by this boundary, and moving $r(z)$ has the opposite effect. For certain distributions $\mu$ we will be able to move $l(z)$ and $r(z)$ slightly (in the same direction, without crossing $K$) and still embed $\mu(\{z\})$. Note that the amounts embedded at $N$ and $-N$ will not change since we know that the total mass embedded must sum to one, and we also have a martingale condition on our process, and these two conditions fix the masses embedded at these extreme points once the amount stopped at $z$ is fixed. We could therefore have multiple barriers embedding $\mu$ each with a different stopping time, and therefore a different payoff, so we need some condition on the barriers that gives us the optimal choice. To find this condition we will consider the dual problem.

It is well known that the dual problem of maximising an expected payoff is to find the minimum cost with which we can fund a superhedging portfolio on our claim. To find our optimality condition on $l$ and $r$ we will require strong duality, i.e. no duality gap and dual attainment, and we then use the form of the dual optimisers to give the form of the condition. Standard results do not give the form of the dual optimisers, so we will use PDE arguments to help suggest the form of the dual functions we need.

\section{Heuristic PDE arguments for duality} \label{heuristics}
The following analysis is motivated by \cite[Section~4]{Henry-Labordere:notes}. Suppose we want to superhedge the option with payoff $F(x,t)$, and to fit in with our LETF setup we assume that $X_t$ is a Geometric Brownian Motion (the argument is easily transferrable to the Brownian payoff). Initially we choose some region $\mathcal{D}$ (which will correspond to $\mathcal{R}^\complement$) and a function $\lambda(x)$ representing a static portfolio of call options at all strikes. 

We set our dynamic trading strategy to be
\begin{equation*}
\gamma(x,t):=\begin{cases}
F^\lambda(x,t) &\text{for} (x,t)\notin \mathcal{D}\\
\mathbb{E}^{(x,t)}[F^\lambda(X_{\tau_\mathcal{D}},\tau_\mathcal{D})]  &\text{for}  (x,t)\in \mathcal{D}\end{cases}
\end{equation*}
where $F^\lambda(x,t):=F(x,t)-\lambda(x)$. Then for $\gamma(x,t)+\lambda(x)$ to be a superhedge, we require
\begin{alignat}{2}
\label{supermart} \mathcal{L}\gamma:= \frac{x^2}{2}\partial_x^2\gamma + \partial_t\gamma &\leq 0 \quad && \forall (x,t)\\
\label{superhedge} \gamma &\geq F^\lambda \quad && \forall (x,t).
\end{alignat}
We can see immediately that \eqref{supermart} holds with equality in $\mathcal{D}$, and \eqref{superhedge} holds with equality in $\mathcal{D}^\complement$.

Consider a domain $\mathcal{D}$ which is the continuation region of a $K$-cave barrier, so for $(x,t)\in \mathcal{D}$ we have that $l(x):=\inf\{s<t:\medspace (x,s)\in\mathcal{D}\}$ and $r(x):=\sup\{s>t: \medspace (x,s)\in \mathcal{D}\}$ are independent of $t$. We want our superhedge to match the payoff on the boundary, so we require
\begin{equation}\label{boundaries}
\begin{split}
\gamma(x,l(x))&=F^\lambda(x,l(x))\\
\gamma(x,r(x))&=F^\lambda(x,r(x)).
\end{split}
\end{equation}
Then we wish to find $\mathcal{D}$, $\lambda$ such that
\begin{alignat*}{2}
\mathcal{L}\gamma&=0 \quad && \text{in} \medspace \mathcal{D}\\
\gamma&=F^\lambda \quad &&\text{on} \medspace \partial\mathcal{D}.
\end{alignat*}
Note that with this boundary condition we might expect a `smooth-fit' condition on the boundary, so that $\partial_t\gamma=\partial_t F^\lambda=\partial_t F$ on $\partial\mathcal{D}$, and then, writing $\eta=\partial_t\gamma$, we expect
\begin{alignat*}{2}
\mathcal{L}\eta&=0 \quad && \text{in} \medspace \mathcal{D}\\
\eta&=\partial_t F^\lambda=\partial_t F \quad && \text{on} \medspace \partial\mathcal{D}.
\end{alignat*}
We can then use Dynkin's Formula to deduce that 
\begin{align}
\eta(x,t)&=\mathbb{E}^{(x,t)}[\partial_t F(X_{\tau_\mathcal{D}},\tau_\mathcal{D})]=:M(x,t) \label{M} \\
\intertext{and so}
\gamma(x,t)&= -\int^{r(x)}_t M(x,v) \mathrm{d}v - \xi(x) \label{gammadef}
\end{align}
where $\xi(x)$ is some function, which we will choose to ensure $\mathcal{L} \gamma = 0$. We could take any upper limit, but we will see later that $r(x)$ is a natural choice.

With this form for the function $\gamma$, we can consider the boundary conditions, \eqref{boundaries}. Rearranging \eqref{boundaries}, we see that we must have
\begin{equation*}
  \lambda(x) = F(x,l(x)) - \gamma(x,l(x)) = F(x,r(x)) - \gamma(x,r(x)) 
\end{equation*}
Observing that $F(x,r(x)) =0$, we note that this holds whenever
\begin{equation}
 \label{dual}
 \Gamma(x) := F(x,l(x)) + \int_{l(x)}^{r(x)} M(x,v) \, \mathrm{d}v = 0 \quad \forall x \in \mathcal{D}.
\end{equation}

% We set $\gamma_0(x,t):=-\int^{r(x)}_t M(x,v) \mathrm{d}v$ and write $\overline{\gamma}(x)=\gamma(x,l(x))$, $\underline{\gamma}(x)=\gamma(x, r(x))$, and similarly for $\overline{\gamma}_0$, $\underline{\gamma}_0$. Our boundary conditions in \eqref{boundaries} then become
% \begin{equation*}
% \overline{\gamma}_0(x)-\overline{F}(x)=\underline{\gamma}_0(x)-\underline{F}(x) \quad \forall x \in \mathcal{D}
% \end{equation*}
% where $\underline{F}$ and $\overline{F}$ are defined as with $\gamma$. With either of our payoffs we have that $\underline{F}(x)=0$ and $\underline{\gamma}_0(x)=0$ and so we require
% \begin{equation}
% \label{dual}
% \Gamma(x):=\overline{F}(x)-\overline{\gamma}_0(x)=\overline{F}(x)+ \int^{r(x)}_{l(x)} M(x,v) \mathrm{d}v=0 \quad \forall x \in \mathcal{D}.
% \end{equation}

More generally, if we only require that \eqref{superhedge} holds, a necessary condition on the boundary is that 
\begin{equation*}
  \lambda(x) \ge \max\{F(x,l(x)) + \int_{l(x)}^{r(x)} M(x,v) \, \mathrm{d}v + \xi(x), \xi(x)\},
\end{equation*}
and we see that if $\Gamma(x) = 0$, it is sufficient to take $\xi(x) = \lambda(x)$. Since $\xi$ was chosen to make $\mathcal{L}\gamma=0$, this will effectively fix $\lambda$.  In the next section we will see that it is sufficient for \eqref{superhedge} to hold on the boundaries in order to deduce that it holds in the interior as well.

Then, to summarise this section, given a set $\mathcal{D}$ which is the continuation region of a $K$-cave barrier, we (heuristically) can construct functions $\gamma_{\mathcal{D}}$ (given by \eqref{gammadef}) and $\lambda_{\mathcal{D}}(x) := \max\{F(x,l(x)) + \int_{l(x)}^{r(x)} M(x,v) \, \mathrm{d}v + \xi(x), \xi(x)\}$ such that \eqref{supermart} and \eqref{superhedge} hold. If in addition $\tau$ is a (uniformly integrable) stopping time such that $X_{\tau} \sim \mu$, then:
\begin{align*}
\Ep{F(X_{\tau},\tau)}& \leq \Ep{\gamma_{\mathcal{D}}(X_{\tau},\tau) + \lambda_{\mathcal{D}}(X_{\tau})}\\
  & \leq \gamma_{\mathcal{D}}(X_0,0) + \int \lambda_{\mathcal{D}}(x) \mu(\di x),
\end{align*}
and therefore
\begin{equation}
\label{weakdual}
\sup_{\tau: X_{\tau} \sim \mu}\Ep{F(X_{\tau},\tau)}\leq \inf_{\mathcal{D}} \left\{\gamma_{\mathcal{D}}(X_0,0) + \int \lambda_{\mathcal{D}}(x) \mu(\di x)\right\}.
\end{equation}
Moreover, if $\mathcal{D}$ is such that $X_{\taud} \sim \mu$, $\gamma_{\mathcal{D}}(X_t \wedge \taud,t\wedge \taud)$ is a martingale, and $\Gamma(x) = 0$, the inequalities above are equalities for $\taud$, and so the supremum and the infimum coincide. 

Our aim in the next section will be to make these heuristic arguments rigorous whilst showing that, in fact, any set $\mathcal{D}$ which is the continuation region of a $K$-cave barrier embedding $\mu$ \emph{and} which satisfies \eqref{dual}  (or a slightly refined version of \eqref{dual}) gives equality in \eqref{weakdual}.

\section{Optimality} \label{optimality}
We have introduced the dual problem of choosing a $K$-cave barrier which embeds $\mu$ and such that $\Gamma(x)=0$. In this section we will make these heuristic arguments rigorous, and show that if we have a $K$-cave barrier that satisfies these conditions, then it does indeed give rise to an optimal embedding. We will modify the arguments presented in Cox and Wang \cite{Cox:2013ab}, using the heuristics of the previous section to motivate our choice of functions, and writing our problem for the Brownian motion, rather than GBM (although an essentially identical analysis holds for the GBM case). Hence, for a Brownian motion $B$, we wish to find an embedding $\tau$ of the form given in \thref{SGTheorem} (the hitting time of a $K$-cave barrier), and functions $G(x,t)$ and $H(x)$ such that
\begin{align}
&\bullet F(x,t)\leq G(x,t)+H(x) \text{ everywhere}   \label{D1} \\
&\bullet G(B_t,t) \text{ is a supermartingale} \label{D2} \\
&\bullet F(B_{\tau},\tau)=G(B_{\tau},\tau)+H(B_{\tau})  \label{D3} \\
&\bullet G(B_{t \wedge \tau},t \wedge \tau) \text{ is a martingale.} \label{D4}
\end{align}

We use the previous section to motivate a possible form of our super-replicating portfolio, and we will see that it is highly dependent on the region $\mathcal{D}$. The idea here is that the portfolio we propose, which depends heavily on the stopping region, is \textquoteleft dual feasible' for any stopping region, and then the correct choice of our region $\mathcal{D}$, or equivalently our curves $l,$ $r$, will correspond to satisfying the complementary slackness conditions of our primal-dual problem. The conditions \eqref{D1} and \eqref{D2} are our dual conditions, i.e. our dual problem is to minimise $\Ep{G(B_{\tau},\tau)+H(B_{\tau})}$ over functions $G,$ $H$ such that \eqref{D1}, \eqref{D2} hold. Then \eqref{D3} and \eqref{D4} are the complementary slackness conditions. In Section \ref{discrete} we prove that our choices of $G,$ $H$ are indeed the correct ones, so the condition we give is both necessary and sufficient. 

Consider a $K$-cave barrier $\mathcal{R}$ with continuation region $\mathcal{D}=\mathcal{R}^\complement$ and let $\tau_{\mathcal{D}}$ be the associated exit time, so $\tau_{\mathcal{D}}=\inf\{t\geq 0:t\notin (l(B_t),r(B_t))\}$. Then, looking at \eqref{M} and \eqref{gammadef}, we define
\begin{align*}
G(x,t)&:=G^*(x,t) -Z(x),\\
\text{where}\qquad
G^*(x,t)&:=-\int^{r(x)}_t M(x,s)\mathrm{d}s,\\
M(x,t)&:=\Eps{(x,t)}{\partial^-_t F\left(B_{\taud},\taud\right)}=-\frac{\beta^2}{2}\mathbb{E}^{(x,t)}\left[h(B_{\tau_{\mathcal{D}}},\tau_{\mathcal{D}})\mathbbm{1}\{\tau_{\mathcal{D}}\leq K(B_{\tau_{\mathcal{D}}})\}\right],
\end{align*}
and $Z(x)$ is chosen as above to ensure that $G(B_t,t)$ is a martingale in $\mathcal{D}$. Here we have taken the Brownian motion payoff, and the only difference if we take the LETF payoff is that $\frac{\beta^2}{2}$ becomes $\frac{\beta(\beta-1)}{2}$.

Since $h$ is a non-negative martingale, 
\begin{equation*}
M(x,t)-\frac{\beta^2}{2}\mathbb{E}^{(x,t)}\left[h(B_{\tau_{\mathcal{D}}},\tau_{\mathcal{D}})\mathbbm{1}\{\tau_{\mathcal{D}} > K(B_{\tau_{\mathcal{D}}})\}\right]=-\frac{\beta^2}{2}h(x,t)
\end{equation*}
and then we have
\begin{align*}
M(x,t)=-\frac{\beta^2}{2}h(x,t)\quad \text{for}\thickspace & (x,t)\in\{(x,t):t\leq l(x)\}\\
-\frac{\beta^2}{2}h(x,t)\leq M(x,t)\leq 0\quad \text{for}\thickspace & (x,t)\in D=\{(x,t):l(x)\leq t\leq r(x)\}\\
M(x,t)=0\quad \text{for}\thickspace & (x,t)\in\{(x,t):t\geq r(x)\}.
\end{align*}

Suppose that our $K$-cave barrier embeds $\mu_l$ to the left of $K$, i.e. along $l$, and $\mu_r$ to the right of $K$, along $r$. We say that both barriers are attainable at $x$ if $x\in \mathrm{supp}(\mu_l)\cap\mathrm{supp}(\mu_r)$. Define
\begin{equation*}
\Gamma(x):=F(x,l(x))+ \int^{r(x)}_{l(x)} M(x,v) \mathrm{d}v.
\end{equation*}
Then, from the heuristics in the previous section, we propose the following condition on our barriers $l$ and $r$ for optimality:
\begin{equation}
\label{Gamma} \tag{$\Gamma$}
\begin{aligned}
\Gamma(x)\geq0 \quad &\mu_l \text{-a.s.} \\
\Gamma(x)\leq0 \quad &\mu_r \text{-a.s.}
\end{aligned}
\end{equation}

\begin{theorem}
\thlabel{Opt}
If $\mathcal{R}$ is a $K$-cave barrier that embeds a distribution $\mu$ and also satisfies \eqref{Gamma}, then $\tau_\mathcal{D}$ is optimal.
\end{theorem}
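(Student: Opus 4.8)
The plan is to follow the Cox--Wang pathwise-inequality scheme: construct the dual pair $(G,H)$ suggested by the heuristic section, verify the four properties \eqref{D1}--\eqref{D4}, and then read off optimality of $\tau_\mathcal{D}$ from the resulting sandwich of inequalities. Concretely, set $G(x,t) = G^*(x,t) - Z(x)$ with $G^*(x,t) = -\int_t^{r(x)} M(x,s)\,\mathrm{d}s$ as in the text, and $H(x) = \lambda(x)$ chosen so that the boundary matching $F(x,l(x)) = G(x,l(x)) + H(x)$ holds; the function $Z(x)$ is fixed by demanding that $G(B_t,t)$ be a local martingale while $(B_t,t)\in\mathcal{D}$. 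First I would record the three-regime description of $M$ already displayed in the excerpt (namely $M = -\tfrac{\beta^2}{2}h$ below $l$, $M\in[-\tfrac{\beta^2}{2}h,0]$ inside $\mathcal{D}$, and $M=0$ above $r$), since these sign constraints on $M$ are exactly what makes the supermartingale property work.

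The order of steps would be: (i) Show $G(B_t,t)$ is a martingale on $\mathcal{D}$ by construction of $Z$, and a supermartingale everywhere --- this is \eqref{D2}. Inside $\mathcal{D}$ it is a martingale; to the right of $r$, $G^*\equiv 0$ so $G$ is constant in $t$ (hence trivially a supermartingale there, using that $B$ leaves this region only downward across the barrier); to the left of $l$ one computes $\mathcal{L}G = \partial_t G^* + \tfrac{x^2}{2}\partial_x^2 G^* - \mathcal{L}Z$; since $\partial_t G^* = M(x,t) = -\tfrac{\beta^2}{2}h(x,t)$ there and $h$ solves $\mathcal{L}h=0$, one shows $\mathcal{L}G\le 0$ using that the region below $l$ is a reverse barrier so that $M$ depends on $t$ in the right monotone way. (ii) Establish \eqref{D1}, $F\le G+H$ everywhere: on $\partial\mathcal{D}$ this is the defining equality (using $\Gamma(x)\le 0$ on the right boundary, $\Gamma(x)\ge 0$ on the left boundary --- this is precisely where condition \eqref{Gamma} enters); in the interior of $\mathcal{D}$ and in $\mathcal{R}$ one propagates the boundary inequality using that $G-F+H$ has the appropriate sign of $\mathcal{L}(G-F+H)$ and appeals to a maximum-principle / optional-stopping argument along Brownian paths stopped at $\partial\mathcal{D}$. (iii) Establish \eqref{D3}: at $\tau_\mathcal{D}$ the path is on $\partial\mathcal{D}$, where $F = G + H$ holds with equality by construction --- again \eqref{Gamma} must hold with equality $\mu_l$- and $\mu_r$-a.s., which it does since if it were a strict inequality on a positive-mass set the primal-dual bound below would be strict, contradicting nothing yet but forcing us to note that the embedding property plus \eqref{Gamma} pins down equality where mass is actually deposited. (iv) Deduce \eqref{D4} from (i) together with uniform integrability coming from $X_\tau\sim\mu$ having finite second moment and $h(B_{t\wedge\tau},t\wedge\tau)$ being a UI martingale.

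Once \eqref{D1}--\eqref{D4} are in hand, optimality is the standard three-line argument: for any competitor $\tau$ with $B_\tau\sim\mu$,
\[
\Ep{F(B_\tau,\tau)} \le \Ep{G(B_\tau,\tau) + H(B_\tau)} \le G(B_0,0) + \int H(x)\,\mu(\di x),
\]
using \eqref{D1} then the supermartingale property \eqref{D2} with optional stopping; and for $\tau_\mathcal{D}$ itself both inequalities are equalities, the first by \eqref{D3} and the second by the martingale property \eqref{D4}. Hence $\tau_\mathcal{D}$ attains the supremum.

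The main obstacle I expect is step (ii) in the interior, i.e. pushing the boundary inequality $F\le G+H$ into the bulk of $\mathcal{D}$ and into $\mathcal{R}$ while handling the non-smoothness of $l$ and $r$ (spikes, the infinite section, mere semicontinuity rather than differentiability). The "smooth-fit" heuristic of Section~\ref{heuristics} is only formal, so rather than differentiating across $\partial\mathcal{D}$ I would argue probabilistically: represent $G(x,t)$ inside $\mathcal{D}$ as $\Eps{(x,t)}{F^\lambda(B_{\tau_\mathcal{D}},\tau_\mathcal{D})}$ via \eqref{M}--\eqref{gammadef} and Dynkin, so that $G - F^\lambda$ in $\mathcal{D}$ equals $\Eps{(x,t)}{(F^\lambda(B_{\tau_\mathcal{D}},\tau_\mathcal{D}) - F^\lambda(x,t))}$, and then control the sign of this using convexity of $F^\lambda$ in $x$ for fixed $t$ and the one-sided time-monotonicity of $F$; the $K$-cave geometry guarantees that from a point in $\mathcal{D}$ one exits either through $l$ (to the left of $K$, where the payoff is "active") or through $r$ (to the right of $K$, where $F=0$), and the sign bookkeeping of $\Gamma$ on the two boundary pieces is what closes the estimate. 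A secondary technical point, worth isolating as a lemma, is the justification of interchanging $\partial_t$ and $\E^{(x,t)}$ in \eqref{M} and the integrability needed for $M(x,\cdot)$ to be integrable in $v$ over $[l(x),r(x)]$; this follows from the explicit bound $-\tfrac{\beta^2}{2}h(x,t)\le M(x,t)\le 0$ together with $h(B_{t\wedge\tau},t\wedge\tau)$ being UI.
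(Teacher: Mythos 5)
Your overall architecture (construct $G=G^*-Z$ and a static function $H$, verify \eqref{D1}--\eqref{D4}, then run the standard chain of inequalities) is the same as the paper's, but two essential points are missing or wrong. First, you treat the existence of $Z$ as a definition ("$Z$ is fixed by demanding that $G(B_t,t)$ be a local martingale in $\mathcal{D}$"), and you propose to check the supermartingale property outside $\mathcal{D}$ by a formal computation of $\mathcal{L}G$. This is precisely the hard part, and a PDE computation is not available here: the boundaries $l,r$ are only semicontinuous, and, more importantly, it is not at all obvious that the compensator needed to turn $G^*(B_t,t)$ into a martingale on $\mathcal{D}$ can be taken to be a function of $B_t$ \emph{alone} -- which is what you need in order to absorb it into the static hedge $H$. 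The paper's \thref{Glemma} does exactly this work probabilistically: it shows $G^*(B_t,t)$ is a submartingale in $\mathcal{D}$ (via the sign structure of $M$ and comparison with the exit time after shifting paths in time, exploiting that $r$ bounds a barrier and $l$ an inverse barrier), applies Doob--Meyer, proves the compensator $A_t$ depends only on $B_t$, and then uses the representation of a continuous additive functional through a convex function $f$ to set $Z=f$; the global supermartingale property up to $H_{x^*}$ then requires a further patching argument with sequences of stopping times when the barriers meet. None of these ideas appear in your sketch, so step (i) is a genuine gap rather than a routine verification.

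Second, your choice of $H$ is incorrect: if you force exact matching at the left boundary, $F(x,l(x))=G(x,l(x))+H(x)$ for all $x$, then at the right boundary $G(x,r(x))+H(x)-F(x,r(x))=\Gamma(x)$, which is $\leq 0$ exactly where \eqref{Gamma} puts the mass of $\mu_r$; so \eqref{D1} fails on the right boundary unless $\Gamma=0$ there, and your step (iii) cannot repair this (as written it is circular). The paper instead takes $H(x)=Z(x)+(\Gamma(x))^+$, which guarantees $G+H\geq F$ at \emph{both} boundary points for every $x$, with equality at $l(x)$ when $\Gamma(x)\geq0$ and at $r(x)$ when $\Gamma(x)\leq0$; condition \eqref{Gamma} then delivers \eqref{D3} $\mu_l$- and $\mu_r$-a.s. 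Finally, your step (ii) is more complicated than necessary: once the boundary inequalities hold, no maximum principle or convexity in $x$ is needed -- for fixed $x$ one simply compares $\partial_t F$ with $M=\partial_t G^*$ on either side of $K(x)$ (using $-\tfrac{\beta^2}{2}h\leq M\leq 0$, with $M=-\tfrac{\beta^2}{2}h$ below $l$ and $M=0$ above $r$) to see that $t\mapsto G(x,t)+H(x)-F(x,t)$ attains its minimum at $t=l(x)$ or $t=r(x)$, which propagates the inequality to all $t$.
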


To show this we first need to show that our function $G^*$ is such that we can choose $Z$ and $H$ to give the required properties. First, let $x^*:=\inf\{x: \medspace l(x)=K(x)=r(x) \}$, where we set $\inf \emptyset = \infty$ if our barriers never meet. Note that if $x^*<\infty$, then our distribution $\mu$ embeds no mass above $x^*$ and so any pair of barriers embedding $\mu$ must meet at $x^*$, and in particular our process is always stopped below this point, or before $H_{x^*}=\inf\{t\geq 0: \medspace B_t=x^*\}$.
\begin{lemma}
\thlabel{Glemma}
We can find a function $Z$ such that the process
\begin{equation*}
G(B_{t\land \taud},t\land \taud) \qquad \text{is a martingale,}
\end{equation*}
and
\begin{equation*}
G(B_t,t) \qquad \text{is a supermartingale up to $H_{x^*}$.}
\end{equation*}
\end{lemma}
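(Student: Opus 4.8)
The plan is to choose $Z$ so that, while $(B_t,t)$ sits inside $\mathcal{D}$, the $\di t$-drift of $G^*(B_t,t)$ becomes a function of the spatial variable alone; this drift can then be cancelled against $Z(B_t)$ through the It\^o--Tanaka formula. Once $\mathcal{L}G\equiv 0$ on $\mathcal{D}$, the martingale claim follows from a uniform integrability estimate, and the supermartingale claim off $\mathcal{D}$ reduces to checking the sign of $\mathcal{L}G$ in the two stopping regions. I would begin by collecting the properties of $M$ that drive everything else. By the strong Markov property $M(B_{t\wedge\taud},t\wedge\taud)$ is a martingale, and on the open set $\mathcal{D}$ parabolic regularity makes $M$ smooth with $\mathcal{L}M=0$; the probabilistic representation together with the fact that $h(B_\cdot,\cdot)$ is a martingale yields $M\equiv 0$ on $\{t\geq r(x)\}$, $M=-\tfrac{\beta^2}{2}h$ on $\{t\leq l(x)\}$, and $-\tfrac{\beta^2}{2}h\leq M\leq 0$ on $\mathcal{D}$. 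In particular $M$ is continuous across $\partial\mathcal{D}$ with $M\equiv 0$ along $r$, while $M+\tfrac{\beta^2}{2}h\geq 0$ with equality on $\overline{\mathcal R}=\{t\leq l(x)\}$; the latter forces the right $t$-derivative $\partial_t(M+\tfrac{\beta^2}{2}h)$ to be nonnegative along $l$, and $M\leq 0$ increasing to $0$ along $r$ forces $\partial_t M$ to be nonnegative along $r$.

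\emph{Construction of $Z$ and the martingale property.} Using $\partial_t G^* = M$ and $\mathcal{L}M=0$ in $\mathcal{D}$, one differentiates $G^*(x,t)=-\int_t^{r(x)}M(x,v)\,\di v$ and exploits $M(x,r(x))=0$ to see that $\mathcal{L}G^*$ is independent of $t$ throughout $\mathcal{D}$; the short computation in fact gives $\mathcal{L}G^*(x)=\tfrac12\,r'(x)^2\,\partial_t M(x,r(x)^-)\geq 0$ there. I would then define $Z$ on the $x$-range of $\mathcal{D}$ (which contains $(-\infty,x^*)$) as the convex solution of $\tfrac12 Z''=\mathcal{L}G^*$ with $Z$ and $Z'$ pinned down at a reference point, so that $\mathcal{L}G=\mathcal{L}G^*-\tfrac12 Z''=0$ on $\mathcal{D}$ and $G(B_{t\wedge\taud},t\wedge\taud)$ is a continuous local martingale by It\^o's formula. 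To promote this to a genuine martingale I would dominate the family $\{G(B_{t\wedge\taud},t\wedge\taud)\}$ using $|M|\leq\tfrac{\beta^2}{2}h$, the martingale $h(B_\cdot,\cdot)$, the at most quadratic growth of the convex $Z$, and $X_{\taud}\sim\mu$ having finite second moment (so that $\mathbb{E}[F(B_{\taud},\taud)]$ and the relevant second moments are finite).

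\emph{Supermartingale property off $\mathcal{D}$.} On $\{t\geq r(x)\}$ one has $G^*\equiv 0$, hence $G=-Z$ and $\mathcal{L}G=-\tfrac12 Z''\leq 0$ by convexity of $Z$. On $\{t\leq l(x)\}$, where $M=-\tfrac{\beta^2}{2}h$, the identity $\mathcal{L}h=0$ again makes $\mathcal{L}G^*$ a function of $x$ only, and a computation parallel to the previous step gives
\[
\mathcal{L}G(x)=-\tfrac12\,l'(x)^2\Bigl(\tfrac{\beta^2}{2}\partial_t h(x,l(x))+\partial_t M(x,l(x)^+)\Bigr)\leq 0,
\]
the bracket being nonnegative exactly because $M+\tfrac{\beta^2}{2}h$ vanishes along $l$. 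Since $M$ is continuous with $M(x,r(x))=0$, one checks that $G$ is $C^1$ in $x$ across both curves $\{t=l(x)\}$ and $\{t=r(x)\}$, so It\^o's formula applied to $G(B_t,t)$ produces no singular local-time contributions on $\partial\mathcal{D}$, and $\mathcal{L}G\leq 0$ off $\mathcal{D}$ is all that is needed. The restriction to $[0,H_{x^*})$ is harmless: if $x^*<\infty$ then $\mu$ charges no mass above $x^*$, the process is absorbed before $H_{x^*}$, and the construction of $Z$ need only be carried out on $(-\infty,x^*)$; if $x^*=\infty$ there is nothing to exclude.

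\emph{Main obstacle.} The genuine difficulty is that $l$ and $r$ are only semicontinuous and may have flat stretches, spikes, and infinite branches, so none of the differentiations above is literally legitimate. I would make them rigorous by approximating $\mathcal{R}$ from inside and from outside by $K$-cave barriers with smooth boundaries, establishing the (super)martingale inequalities in weak form and passing to the limit --- essentially the device used for Root barriers in Cox--Wang --- or, alternatively, by recasting $\int_0^t\mathcal{L}G^*(B_s,s)\,\di s$ via the occupation-times formula so that local times absorb the missing smoothness. Controlling the error terms in such an approximation, together with the uniform integrability needed when $\mathcal{D}$ is unbounded, is where the real work lies.
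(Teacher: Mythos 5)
Your calculation of the $t$-independence of $\mathcal{L}G^*$ in $\mathcal{D}$ and the sign analysis in the two stopping regions reproduce, in PDE form, exactly the heuristics of Section \ref{heuristics}; the trouble is that you have deferred the entire proof to the final "main obstacle" paragraph, and that is precisely where the lemma lives. The deferred step is not a routine smoothing argument. First, $r$ and $l$ are only semicontinuous and may jump, so the quantity you want to integrate, $\tfrac12 r'(x)^2\,\partial_t M(x,r(x)^-)$, is not a function but at best a measure, and the pointwise ODE $\tfrac12 Z''=\mathcal{L}G^*$ is ill-posed; any correct formulation must allow $Z''$ to be a (possibly singular) measure. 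Second, your supermartingale verification off $\mathcal{D}$ relies on a smooth-fit/$C^1$ matching of $G$ across $l$ and $r$ to kill local-time terms in It\^o's formula; no such regularity is known here, and in fact $M$ is genuinely discontinuous across $l$ in general: for $(x,t)$ just above $l(x)$ the process may reach $r$ before $l$, so $M(x,l(x)^+)$ can be strictly larger than $-\tfrac{\beta^2}{2}h(x,l(x))$, and the identity "$M+\tfrac{\beta^2}{2}h$ vanishes along $l$" that you use to sign the bracket fails from the $\mathcal{D}$-side. Third, the proposed rescue by approximating $\mathcal{R}$ with smooth $K$-cave barriers is delicate for a reason specific to this paper: perturbing the barrier changes $M$, $r$ and the embedded law simultaneously, and there is no uniqueness or stability theory for $K$-cave barriers to control the limit (non-uniqueness is the central difficulty of the whole problem), so "establish the inequalities in weak form and pass to the limit" is a programme, not a proof.

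For contrast, the paper's argument avoids every regularity issue by staying probabilistic: one shows directly that $G^*(B_t,t)$ is a submartingale in $\mathcal{D}$ (comparing $\int_t^{r(B_t)}M$ with its time-shifted version and using only $M\le 0$, $M=0$ past $r$, and the martingale property of $M$ in $\mathcal{D}$, with care where the shifted path crosses the boundary), extracts the compensator $A$ by Doob--Meyer, proves by a time-shifting argument exploiting the barrier/inverse-barrier structure that $A$ depends only on $B_t$ and that the same comparison yields the global supermartingale inequality up to $H_{x^*}$, and finally invokes the representation of continuous additive functionals of Brownian motion, $A_t=f(B_t)-f(B_0)-\int_0^t f'_-(B_s)\,\mathrm{d}B_s$ with $f$ convex, to define $Z=f$. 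That last step is what your construction is missing: it produces a convex $Z$ whose second derivative is automatically a measure, with no smoothness of $l$, $r$ or boundary fit ever required. To turn your proposal into a proof you would either have to carry out the approximation scheme in full (including stability of the embedding and of $M$ under barrier perturbation and control of the singular parts of $Z''$), or switch to an argument of the paper's type that never differentiates across the boundary.
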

\begin{proof}

We first show that we can find an increasing process $A_t=A(B_t)$, depending only on $B_t$, such that $G^*(B_t,t)-A_t$ is a martingale in $\mathcal{D}$, and a supermartingale in general. We note that, for either of our payoffs, $h(x,t)<\infty$ for any $(x,t)$ and $h$ is integrable on bounded domains. This means that $|G^*|$ is bounded on compact sets in space for all $t\geq0$, and so all of the terms in the following arguments are well defined. In much of what follows we will take our process at some point $(B_t,t)$ and consider letting it run until some stopping time, perhaps $\tau=\inf \{u>0: \medspace |B_{t+u}-B_t|\geq\delta\} \wedge \epsilon$ for some small $\delta$ and $\epsilon$. 

1. \textit{Show} $G^*(B_t,t)$ \textit{is a submartingale in }$\mathcal{D}$: First take $(B_t,t)\in \mathcal{D}$, and $\tau$ a stopping time of the above form such that $t+\tau<\tau_{\mathcal{D}}$, so we remain in the continuation region. Then,
\begin{align*}
\mathbb{E}[G^*(B_{t+\tau},t+\tau)-&G^*(B_t,t)|\mathcal{F}_t]\\
&=\mathbb{E}\left[-\int^{r(B_{t+\tau})}_{t+\tau} M(B_{t+\tau},u)\mathrm{d}u + \int^{r(B_t)}_t M(B_t,u)\mathrm{d}u \Big| \mathcal{F}_t\right]\\
&=\mathbb{E}\left[-\int^{r(B_t)}_t \Big( M(B_{t+\tau},u+\tau)-M(B_t,u) \Big) \mathrm{d}u \Big| \mathcal{F}_t\right]\\
& \qquad \qquad \qquad \qquad \qquad +\mathbb{E}\left[\int^{r(B_t)+\tau}_{r(B_{t+\tau})} M(B_{t+\tau},u) \mathrm{d}u \Big| \mathcal{F}_t\right].
\end{align*}
It is natural to split the integrals up in this way since we know that in the continuation region $M$ is a martingale, and so we hope to use Fubini and the martingale property to argue that the first term is zero. However, our Brownian motion does not stay within $\D$ for all $u\in(t,r(B_t))$, as shown in Figure~\ref{timeindep}, and so we cannot use this martingale property and instead must argue about the sign of this term.
\begin{figure}[h]
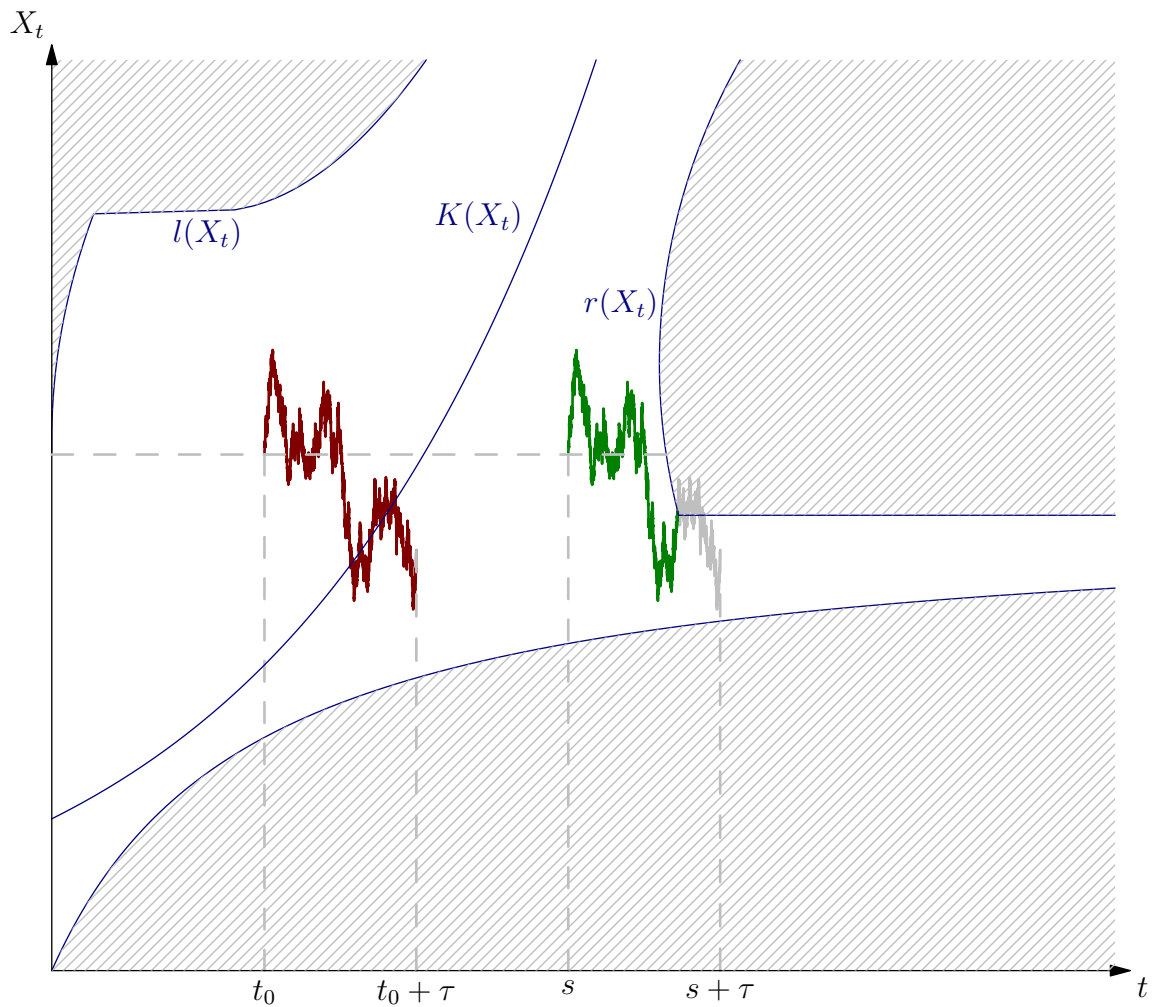

\centering
\begin{asy}[width=\textwidth]
    import graph;
    import stats;
    import patterns;
    
     // We construct a Brownian motion of time length T, with N time-steps
    int N = 3000;
    //int N = 300;
    real T = 3.5;
    real A = 0.5;
    real dt = A/N;
    real B0 = 1.7;
    real t0 = 1.7;
    real s = 1;
    
    real xmax = 3;
    real xmin = 0;
    real r1 = 1.5;
    real l1 = 2.5;
    real x1 = 0.4;
    real x2 = 0.2;
    real eps = 0.05;
    real eps1 = 0.05;
    real kint = 0.5;

    real[] B; // Brownian motion
    real[] t; // Time

    path BM;
    path BMs;
    
    // Seed the random number generator. Delete for a "random" path:
    srand(4);

    B[0] = B0;
    t[0] = t0;

    BM = (t[0],B[0]);
    BMs = (t[0]-s,B[0]);
 // Define a barrier 

    real R1(real y) {return 2*exp(((y-2)**2)/8);}
    real R2(real y) {return (1/(r1-y)-1/r1);}
    real R2inv(real z) {return r1 - 1/(z+1/r1) ;}
    real R(real y) {return (y>r1) ? R1(y) : R2(y);}
    real L1(real y) {return  (y>B0+eps) ? ((y<l1) ? ((y-B0-eps)**2)/4 : sqrt(y-l1)+((1.45^2)/4)) : 0 ;}
    real L(real y) {return (y>B0) ? L1(y) : 0 ;}
    real K(real y) {return (y>0) ? log(y)-log(kint) : 0 ;}

    int H = N+1;
    int H2 = N+1;
    int BMstop;
    int BMstop2 = N+2;
    
      for (int i=1; i<N+1; ++i)
    {
      B[i] = B[i-1] + Gaussrand()*sqrt(dt);
      t[i] = t0+i*dt;
      BM = BM--(t[i],B[i]);
      BMs = BMs--(t[i]-s,B[i]);
    
          if ((H==N+1)&&(t[i]>=R(B[i])))
      {
	H = i;
	BMstop = length(BM);
      }

      if ((H==N+1)&&(t[i]<L(B[i])))
      {
	H = i;
	BMstop = length(BM);
      }
      if ((H2==N+1) && (B[i]<0))
      {
	B[i] = 0;
	BMstop2 = length(BM);
	H2 = i;
      }	
      if ((H2==N+1) && (B[i]>=xmax))
      {
	B[i] = xmax;
	BMstop2 = length(BM);
	H2 = i;
      }			
    }
    
    if (H==N+1)
    BMstop = length(BM);

    pen p = deepgreen + 1;
    pen p2 = mediumgray + 1;
    pen p3 = brown + 1;

    if (H<N+1)
    draw(subpath(BM,BMstop,BMstop2),p2);
    draw(subpath(BMs,BMstop,BMstop2),p3);

    draw(subpath(BM,1,BMstop),p);
    draw(subpath(BMs,1,BMstop),p3);

    pair tau = point(BM,BMstop2);
    pair tau2 = point(BMs,BMstop2);

    draw((t0,0)--(t0,B0),p2+dashed);
    label("$s$",(t0,0),S);
    draw((t0+A,0)--tau,p2+dashed);
    label("$s+\tau$",(t0+A,0),S);
    draw((t0-s,0)--(t0-s,B0),p2+dashed);
    label("$t_0$",(t0-s,0),S);
    draw((t0+A-s,0)--tau2,p2+dashed);
    label("$t_0+\tau$",(t0+A-s,0),S);

    pen q = black + 0.5;
    
    draw("$l(X_t)$",graph(L1,identity,B0,xmax),SE,deepblue+0.5);
    // draw((L1(r1),r1)--(T,r1),NW,deepblue+0.5);
    draw(graph(R2,identity,xmin,R2inv(T)),NW,deepblue+0.5);
    draw("$r(X_t)$",graph(R1,identity,r1,xmax),SW,deepblue+0.5);
    draw((R1(r1),r1)--(T,r1),NW,deepblue+0.5);
    draw(L=Label("$K(X_t)$",Relative(0.8) ,NW),graph(K,identity,kint,xmax),deepblue + 0.5);

    path barrier1 = (graph(L1,identity,xmax,B0+eps)--(0,xmax)--(L(xmax),xmax)--cycle);

    add("hatch",hatch(1mm,mediumgray));

    fill(barrier1,pattern("hatch"));

    draw((0,xmin)--(0,xmax+eps1),q,Arrow);
    draw((0,xmin)--((T+eps1),xmin),q,Arrow);
    label("$t$",(T+eps1,xmin),SE);
    label("$X_t$",(0,(xmax+eps1)),NW);
    
    draw((L1(B0),B0)--(R1(B0),B0),p2+dashed);
    
     path barrier3 = (graph(R1,identity,xmax,r1)--(R1(r1),r1)--(T,r1)--(T,xmax)--cycle);

    path barrier4 = (graph(R2,identity,xmin,R2inv(T))--(T,xmin)--cycle);

    add("hatch",hatch(1mm,mediumgray)); 
    
    fill(barrier3,pattern("hatch"));

    fill(barrier4,pattern("hatch"));

    \end{asy}
\caption{Here we have a path leaving from $(B_{t_0},t_0)$ running for a time $\tau$ inside $\D$ and we consider moving this path along the time axis, so we may now exit $\D$.}
\label{timeindep}
\end{figure}

Note that $M(B_{t+\tau},u)=0$ if $u\geq r(B_{t+\tau})$, and $M(B_{t+\tau},u)\leq 0$ otherwise, so the final term of the above is non-negative. We can also show that the other term in the final expression is non-negative. Let $\taud^{(x,t)}=\inf\{s\geq 0: \medspace (x+B_s,t+s)\notin \D\}=\inf\{s\geq 0: \medspace u+s\geq r(B_{t+s})\}$ be the hitting time of the boundary after we set off from $(x,t)$. Take $u\in (t,r(B_t))$ and let $\hat{\tau}_{\mathcal{D}}=\taud^{(B_t,u)} \wedge \tau$. When $\hat{\tau}_{\mathcal{D}}=\tau$ we have $M(B_{t+\tau},u+\tau)=M(B_{t+\hat{\tau}_{\mathcal{D}}},u+\hat{\tau}_{\mathcal{D}})\leq 0$, and when $\hat{\tau}_{\mathcal{D}}<\tau$ we have that $M(B_{t+\tau},u+\tau)\leq 0 =M(B_{t+\hat{\tau}_{\mathcal{D}}},u+\hat{\tau}_{\mathcal{D}})$. Therefore,
\begin{equation*}
\mathbb{E}\left[M(B_{t+\tau},u+\tau)|\mathcal{F}_t\right] \leq \mathbb{E}\left[M(B_{t+\hat{\tau}_{\mathcal{D}}},u+\hat{\tau}_{\mathcal{D}})|\mathcal{F}_t\right]=M(B_t,u)
\end{equation*}
since $M(B_t,t)$ is a martingale in $\mathcal{D}$. Swapping the expectation and the integral by Tonelli's theorem, we conclude that
\begin{equation}
\label{Mleq}
\mathbb{E}\left[-\int^{r(B_t)}_t \Big( M(B_{t+\tau},u+\tau)-M(B_t,u) \Big) \mathrm{d}u \Big| \mathcal{F}_t\right]\geq 0.
\end{equation}

Provided we have integrability, this tells us that $G^*(B_t,t)$ is a submartingale in $\mathcal{D}$, and therefore the Doob-Meyer Decomposition Theorem tells us that there exists a unique, increasing, predictable process $A_t$ such that $M_t=G^*(B_t,t)-A_t$ is a martingale in $\mathcal{D}$. But,
\begin{equation*}
\Ep{|G^*(B_t,t)|}\leq \Ep{-\int^{r(B_t)}_0 M(B_t,s) \di s}\leq\Ep{F(B_t,0)}<\infty \quad \forall t
\end{equation*}
for either of our payoffs, and so we have integrability.

2. $A_t$ \textit{depends only on} $B_t$: To think more about $A_t$ we consider, as usual, a time $t<\taud$ and then run our process from $t$ up until a small stopping time $\tau$ such that $t+\tau<\taud$, but now we imagine moving this path along the time axis. We then have $t<\taud$, $\tau<\taud^{(B_t,t)}$ and we take $s<r(B_t)$ such that $\tau<\taud^{(B_t,s)}$. Then,
\begin{align} 
\label{Gts}
\Ep{G^*(B_{t+\tau},s+\tau)-G^*(B_t,s) | \F_t}&=\Ep{G^*(B_{t+\tau},t+\tau)-G^*(B_t,t) | \F_t}\\
&\qquad +\Ep{\int^s_t\left(M(B_{t+\tau},u+\tau)-M(B_t,u)\right) \di u \Big| \F_t}.\notag
\end{align}
Since $s<s+\tau<\taud$ and $t<t+\tau<\taud$, and by the shape of our boundaries, we have that $(B_t,u),(B_{t+\tau},u+\tau) \in \D$ for $u\in(t,s)$, and as $M(B_t,t)$ is a martingale in $\D$, we have that 
\begin{equation*}
\Ep{M(B_{t+\tau},u+\tau) | \F_t}=\Ep{M(B_t,u) | \F_t}=M(B_t,u)
\end{equation*}
for all $u\in(t,s)$. By Fubini the final term in \eqref{Gts} is $0$, so
\begin{equation}
\label{tseq}
\Ep{G^*(B_{t+\tau},s+\tau)-G^*(B_t,s) | \F_t}=\Ep{G^*(B_{t+\tau},t+\tau)-G^*(B_t,t) | \F_t}.
\end{equation}
This tells us that in $\D$, $A_t$ depends only on $B_t$ and not directly on $t$. If we now consider taking any $s$, but keeping $t$ such that $t+\tau<\taud$, then we still have \eqref{Gts}, but now we can show that the final term is actually non-positive.

Since we now consider any $s$, we will no longer always be in the continuation region, and we need to consider crossing the boundaries. We know from \thref{SGTheorem} that our right-hand boundary $r$ is a barrier, and $l$ is an inverse barrier. If we have $\tau<\taud^{(B_t,t)}$, then $(B_{t+u},t+u)\in\D$ for every $u\in (0,\tau)$, so in particular we do not cross the left hand boundary $l$. If $t<s$ then, since $l$ is an inverse barrier, we must also have that $s+u>l(B_{t+u})$ for every $u\in (0,\tau)$, so shifting this part of our path to the right cannot cause us to cross $l$. We can however cross $r$, so we need to argue exactly as with \eqref{Mleq} to see that
\begin{equation*}
\Ep{\int^s_t\left(M(B_{t+\tau},u+\tau)-M(B_t,u)\right) \di u \Big| \F_t}\leq0
\end{equation*}
and so
\begin{equation*}
\Ep{G^*(B_{t+\tau},s+\tau)-G^*(B_t,s) | \F_t}\leq\Ep{G^*(B_{t+\tau},t+\tau)-G^*(B_t,t) | \F_t}.
\end{equation*}

If we take $s<t$ then we instead have that $s+u<r(B_{t+u})$ for every $u\in (0,\tau)$, and so we do not cross $r$ but could cross $l$. The argument here is similar in that we let $\tilde{\tau}_{\mathcal{D}}=\taud^{(B_t,u)} \wedge \tau$, take $u\in(s,t)$ and compare $M(B_{t+\tau},u+\tau)$ and $M(B_{t+\tilde{\tau}_\D},u+\tilde{\tau}_\D)$. On $\{\tilde{\tau}_\D=\tau\}$ we clearly have $M(B_{t+\tau},u+\tau)=M(B_{t+\tilde{\tau}_\D},u+\tilde{\tau}_\D)$, but when $\tilde{\tau}_\D<\tau$ we have
\begin{align*}
\Ep{M(B_{t+\tilde{\tau}_\D},u+\tilde{\tau}_\D) | \F_t}&=\Ep{\frac{-\beta^2}{2}h(B_{t+\tilde{\tau}_\D},u+\tilde{\tau}_\D) \Big| \F_t}\\
&= \Ep{\frac{-\beta^2}{2}h(B_{t+\tau},u+\tau) \Big| \F_t}\\
&\leq \Ep{M(B_{t+\tau},u+\tau) | \F_t}
\end{align*}
by the Optional Sampling Theorem, since both our stopping times are bounded. Combining these as before and using Fubini, we again have
\begin{equation*}
\Ep{\int^s_t\left(M(B_{t+\tau},u+\tau)-M(B_t,u)\right) \di u \Big| \F_t}\leq0
\end{equation*}
and so for $t<t+\tau<\taud$ and any $s$, we have that
\begin{equation}
\label{tsleq}
\Ep{G^*(B_{t+\tau},s+\tau)-G^*(B_t,s) | \F_t}\leq\Ep{G^*(B_{t+\tau},t+\tau)-G^*(B_t,t) | \F_t}=\Ep{A_{t+\tau}-A_t |\F_t}.
\end{equation}

3. $G(B_t,t)$ \textit{has the desired properties}: We now combine the above two results to show that we have the supermartingale property we require, noting that we already have the martingale property in $\D$ as this is how we chose $A$. Consider now arbitrary $s$ and $\tau$ and suppose that we can fix a $t$ such that $(s,B_t)\in\D$ and $\tau<\taud^{(B_t,s)}$. Then from \eqref{tseq} and \eqref{tsleq} we have
\begin{equation*}
\Ep{G^*(B_{t+\tau},t+\tau)-G^*(B_t,t) | \F_t}\leq\Ep{G^*(B_{t+\tau},s+\tau)-G^*(B_t,s) | \F_t}.
\end{equation*}
We can use this to give the following:
\begin{align*}
\Ep{G^*(B_{t+\tau},t+\tau)-A_{t+\tau} | \F_t}&\leq\Ep{G^*(B_{t+\tau},s+\tau)-A_{t+\tau} | \F_t}\\ 
& \qquad \qquad \qquad \qquad +G^*(B_t,t)-G^*(B_t,s)\\
&=G^*(B_t,t) + \Ep{G^*(B_{t+\tau},s+\tau)-G^*(B_t,s) | \F_t}\\
& \qquad \qquad \qquad \qquad -\Ep{A_{t+\tau} | \F_t}\\
&= G^*(B_t,t) + \Ep{A_{t+\tau}-A_t | \F_t} -\Ep{A_{t+\tau} | \F_t}\\
&=G^*(B_t,t)-A_t,
\end{align*}
which is exactly the supermartingale property we are looking for. 

It will not always be the case that we can find such a $t$ as above, in fact for a given $t$, $\tau$ we may find that $\tau>\taud^{(B_t,s)} \medspace \forall s$ such that $(B_t,s)\in\D$. We then need to find a sequence of stopping times that sum to $\tau$ and use the above on each of the intervals. Suppose first that our curves $l$, $r$ do not meet, or they do so well away from $t$ and $t+\tau$. We can then choose some $s\in(l(B_t),r(B_t))$ (we will usually take $s=K(B_t)$ for simplicity unless we have $l(B_t)=K(B_t)$ or $r(B_t)=K(B_t)$) and we run the process from $(B_t,s)$ until we hit a boundary, call this stopping time $\sigma_1$. We then move back into our continuation region and set off from $(K(B_{s+\sigma_1}),B_{s+\sigma_1})$, and run again for a time $\sigma_2$ until we hit the boundary. Provided our barriers do not meet we can continue this until we reach $s+\tau$ in a finite number of steps. We can then write $\Ep{G^*(B_{t+\tau},t+\tau)-G^*(B_t,t) | \F_t}$ as a telescoping sum and show the inequality as before. From the exact argument above with $\tau$ when we do not leave the region, we have that
\begin{equation*}
\Ep{G^*(B_{t+\sigma_1},t+\sigma_1)-A_{t+\sigma_1 }| \F_t}\leq G^*(B_t,t)-A_t,
\end{equation*}
and also
\begin{equation*}
\Ep{G^*(B_{t+\sigma_{j+1}},t+\sigma_{j+1})-A_{t+\sigma_{j+1} }| \F_t}\leq\Ep{G^*(B_{t+\sigma_j},t+\sigma_j)-A_{t+\sigma_j }| \F_t}
\end{equation*}
for our stopping times $\{\sigma_j\}_j$ where $\sigma_j=\tau$ for some $j$. We then combine these results in our telescoping sum to get the supermartingale property as before. If $B_{t+\tau}<x^*$ then we can always find a finite sequence of stopping times that sum to $\tau$. The only other case is where $B_{t+\tau}=x^*$. In this case we again require a sequence of stopping times, but this time we will could have infinitely many, with the sum converging to $\tau$, but then we can work as before but using Fubini to interchange our expectation and the infinite sum.

We now know that we can find an increasing process $A_t$, dependent only on $B_t$, such that $G^*(B_t,t)-A_t$ is a martingale up until $\taud$ and a supermartingale in general. We know (\cite[Chapter~X, Section~2]{Revuz:1999aa}) that any continuous additive functional $A_t$ of linear Brownian Motion can be written as
\begin{equation} \label{CAF}
A_t=f(B_t)-f(B_0)-\int^t_0 f'_-(B_s)\di B_s
\end{equation}
for some convex function $f$. Then we must have that for any $s,t$,
\begin{equation*} 
\Ep{A_t-A_s | \F_t}=\Ep{f(B_t)-f(B_s) | \F_t}.
\end{equation*}
We therefore choose $Z(x)=f(x)$ to give the result.
\end{proof}

%\comment{Up to here, I've removed $Z$, but I think this has been rewritten, and I'm not sure how to edit this section (in particular with general condition on the boundary...}
We now return to proving \thref{Opt} by choosing the function $H$.
\begin{proof}[Proof of \thref{Opt}]
Our choice of $H$ should be to give $F=G+H$ on the boundaries, and $F\leq G+H$ in general. We have
\begin{equation*}
G(x,t)+Z(x)=G^*(x,t)=-\int_t^{r(x)} M(x,s) \di s,
\end{equation*}
so for any $x,t$ 
\begin{align*}
t<K(x) \quad & \implies \quad F_t(x,t)=-\frac{\beta^2}{2}h(x,t)\leq M(x,t)=G^*_t(x,t),\\
t>K(x) \quad & \implies \quad F_t(x,t)=0\geq M(x,t)=G^*_t(x,t).
\end{align*}
From these derivatives we can see that if $G(x,l(x))+H(x)\geq F(x,l(x))$ and $G(x,r(x))+H(x)\geq F(x,r(x))$ (where $l(x),r(x)$ are possibly $0,\infty$ respectively), then $G(x,t)+H(x)\geq F(x,t)$ everywhere, as required. 

Let $H(x)=Z(x)+\left(\Gamma(x)\right)^+$, so $G(x,t)+H(x)=G^*(x,t)+\left(\Gamma(x)\right)^+$. This is a pathwise superhedging strategy since
\begin{align*}
\Gamma(x)>0 &\implies 
\begin{cases}
G(x,l(x))+H(x)=F(x,l(x)) \\
G(x,r(x))+H(x)=\Gamma(x)>F(x,r(x)),
\end{cases} \\
\Gamma(x)<0 &\implies 
\begin{cases}
G(x,l(x))+H(x)=F(x,l(x))-\Gamma(x)>F(x,l(x)) \\
G(x,r(x))+H(x)=F(x,r(x)),
\end{cases} \\
\Gamma(x)=0 &\implies
\begin{cases}
G(x,l(x))+H(x)=F(x,l(x))\\
G(x,r(x))+H(x)=F(x,r(x)).
\end{cases}
\end{align*}

For $x\in\mathrm{supp}(\mu_r)$ we require $G(x,r(x))+H(x)=F(x,r(x))$, which holds by the above when $\Gamma(x)\leq 0$. Similarly, for $x\in\mathrm{supp}(\mu_l)$ we have $G(x,l(x))+H(x)=F(x,l(x))$ when $\Gamma(x)\geq 0$. Also note that for $x\notin\mathrm{supp}(\mu_l)\cup\mathrm{supp}(\mu_r)$ we can choose any $H(x)$ that gives the superhedging property.

We now have the desired properties for $G$ and $H$ and prove our theorem as follows. Let $\tau'$ be any other stopping time that embeds $\mu$. Then,
\begin{align*}
\mathbb{E}[F(B_{\tau_\mathcal{D}},\tau_\mathcal{D})]&=\mathbb{E}[G(B_{\tau_\mathcal{D}},\tau_\mathcal{D})]+\mathbb{E}[H(B_{\tau_\mathcal{D}})]\\
&=G(B_0,0)+\int_\mathbb{R} H(x) \mu(\mathrm{d}x)\\
&\geq \mathbb{E}[G(B_{\tau'},\tau')]+\int_\mathbb{R} H(x) \mu(\mathrm{d}x)\\
&\geq \mathbb{E}[F(B_{\tau'},\tau')].
\end{align*}
The first equality follows from our assumption \eqref{Gamma}, so, as we have shown above, our processes $G(B_t,t)+H(B_t)$ and $F(B_t,t)$ agree on the boundary. Also note that $\Ep{H(B_{\tau_\mathcal{D}})}<\infty$ since $A_{\tau_\mathcal{D}}$ is integrable. In the second line we use the martingale property of $G(B_t,t)$ in $\mathcal{D}$ and rewrite the $H$ term as an integral to make it clear that this term does not change, since both stopping times embed $\mu$. The inequality then follows since $G(B_t,t)$ is a supermartingale up to $H_{x^*}$ and we know that for any embedding $\tau'$ of $\mu$ we have that $B_{\tau'}\leq x^*$. The final inequality is true since we have shown above that $G+H\geq F$ everywhere.
\end{proof}

\begin{remark}
The case for geometric Brownian motion is similar, noting that the measure associated with a continuous additive functional of a geometric Brownian motion is a Radon measure, and therefore we again have the representation \eqref{CAF} (see \cite[Chapter~X, Section~2]{Revuz:1999aa}). 
%Need theorem 2.9 to hold (should definitely be true) and proposition 2.7 to hold for GBM, but proof of 2.7 uses 2.5 and 2.6 which are both proved for any X, and then the continuity part also doesn't use properties of BM.
\end{remark}

\begin{remark}
We can see immediately that if $\mu$ allows us to choose $r=K$ as our right-hand barrier, then the condition $\Gamma=0$ is always satisfied, since $\partial_t^-F(B_t,t)$ is a martingale before crossing $K$, and so this is the optimal choice.
\end{remark}

\section{Necessity of Condition $(\Gamma)$ via Linear Programming} \label{discrete}
Our aim now is to show the converse of \thref{Opt}, that is if we have a $K$-cave barrier that does not satisfy \eqref{Gamma}, then it does not give the optimal embedding. To do this we show that the functions $G$, $H$ we have chosen are the correct choice of the functions in our \textquoteleft dual' problem of finding the cheapest superhedging portfolio. We have proposed one feasible superhedging portfolio, and this portfolio gives the sufficient condition \eqref{Gamma}, but other feasible dual formulations could give different conditions, so we show that our condition is also necessary. To show this we require some form of strong duality result, which furthermore gives the form of the dual optimisers. To the best of our knowledge these results are not available in our current setup, but we can discretise our problem and then use standard results from infinite-dimensional linear programming. 

In \cite{Cox:2016ab}, we consider discretising an optimal Skorokhod embedding problem to create an optimal stopping problem for a random walk, which can then be considered as a linear programming problem. This problem has a well-defined Fenchel dual and we are able to prove a strong duality result in this discrete setting. We also show that as we let the step size of our random walk shrink to zero, we can recover the optimal continuous time solution in certain cases. In particular, if we are maximising the expected value of a convex or concave function of our stopping time, then we recover the Rost or Root embeddings respectively. In \cite{Beiglboeck:2013aa} the authors introduce the cave embedding solution to the Skorokhod embedding problem, which can be seen as the combination of a Root and a Rost barrier, as is the case with the LETF problem. We show in \cite{Cox:2016ab} that we reproduce this cave embedding result also, and here we argue that the $K$-cave barriers can be done similarly. 

Suppose now that our target measure $\mu$ is bounded, with $x^*$ the smallest $x$ such that $\mu((x,\infty))=0$, and $x_*$ the largest $x$ such that $\mu((-\infty,x_*))=0$. We work on the grid $\left(x^N_j,t_n^N\right)=\left(\frac{j}{\sqrt{N}}, \frac{n}{N} \right)$ for $j\in\{\lfloor x_*\sqrt{N}\rfloor,\lfloor x_*\sqrt{N}\rfloor+1,\ldots,\lfloor x^*\sqrt{N}\rfloor\}=:\mathcal{J}$ and $n\geq 0$. Let $j^N_0:=\lfloor x_*\sqrt{N}\rfloor$, $j^N_1:=\lfloor x_*\sqrt{N}\rfloor+1$, $\ldots$, $j^N_L:=\lfloor x^*\sqrt{N} \rfloor$, where $L\sim\sqrt{N}$, so $\mathcal{J}=\{j^N_0,j^N_1,\ldots,j^N_L\}$. We also define $\mathcal{J}'=\{j^N_1,\ldots,j^N_{L-1}\}$, and $\mathcal{J}''=\{j^N_2, \ldots,j^N_{L-2}\}$. For each $N$ we choose $j^{*,N}\in\mathcal{J}$ so that $x_{j^*}^N=\frac{j^{*,N}}{\sqrt{N}} \rightarrow 0$ as $N\rightarrow \infty$. If $Y^N$ is the SSRW on this grid, started at $x_{j^*}^N$, then by Donsker's Theorem, $Y^N_{\lfloor Nt\rfloor}$ converges in distribution to a Brownian motion started at $0$. In the case of geometric Brownian motion we take $x_j^N=\e^{\frac{j}{\sqrt{N}}}$.

We also need a discretised version of our payoff $F$, say $\bar{F}^N$, chosen so that $\bar{F}^N(\lfloor \sqrt{N}x\rfloor,\lfloor Nt\rfloor)\rightarrow F(x,t)$ everywhere. In the Brownian setup our continuous-time payoff function is $F(x,t)=\\ \left(\mathrm{e}^{\beta x} \mathrm{e}^{-\frac{\beta^2}{2}t} -k\right)^+=(h(x,t)-k)^+$, where $h(X_t,t)$ is a martingale, and we write the discretised version with a similar martingale term. We have
\begin{align*}
\Ep{\exp(\beta Y^N_{n+1}) | Y^N_n} &= \exp(\beta Y^N_n) \left(\frac{1}{2}\exp\left(\frac{\beta}{\sqrt{N}}\right) + \frac{1}{2}\exp\left(-\frac{\beta}{\sqrt{N}}\right)\right) \\
&=\exp(\beta Y^N_n) \cosh\left(\frac{\beta}{\sqrt{N}}\right),
\end{align*}
and so $\bar{F}^N(j,t)=\bar{F}^N_{j,t}=\left(\mathrm{e}^{\beta x^N_j} \left(\cosh\left(\frac{\beta}{\sqrt{N}}\right)\right)^{-t} -k\right)^+$ has the same form as before. Note now that $\bar{F}_{j,n}^N\approx F(x^N_j,n \Delta t)=F(\frac{j}{\sqrt{N}},\frac{n}{N})$, or $F(x,t)\approx \bar{F}^N_{\lfloor x \sqrt{N} \rfloor, \lfloor  tN \rfloor}$, since $\left(\cosh\left(\frac{\beta}{\sqrt{N}}\right)\right)^{-Nt} \rightarrow \mathrm{e}^{-\frac{\beta^2}{2} t}$, as $N \rightarrow\infty$. In the case of Geometric Brownian motion the arguments are the same.

If $\ttau$ is a stopping time of our random walk $Y^N$, we can define the probabilities 
\begin{align*}
\pjt^N&=\P\left(Y_t^N=x^N_j, \medspace \tilde{\tau}\geq t+1 \right)\\
\qjt^N&=\P\left(Y_t^N=x^N_j, \medspace \tilde{\tau}=t\right).
\end{align*}
We can optimise over these using the one-to-one correspondence between (randomised) stopping times $\ttau$ and the probabilities $p,q$. In \cite{Cox:2016ab} we give the following primal-dual pair of problems:
\begin{align*}
\mathcal{P}^N: \medspace \sup_{p} & \Bigg\{ \sum_{\substack{j\in\mathcal{J}'' \\ t\geq 2}} \bar{F}_{j,t}^N \left(\half\left(p_{j-1,t-1}+p_{j+1,t-1}\right)-p_{j,t}\right) + \sum_{t\geq2} \bar{F}_{j^N_L,t}^N \half p_{j^N_{L-1},t-1} +  \sum_{t\geq2} \bar{F}_{j^N_0,t}^N \half p_{j^N_1,t-1}  \\
& \qquad \qquad + \sum_{t\geq2} \bar{F}_{j^N_{L-1},t}^N\left( \half p_{j^N_{L-2},t-1} - p_{j^N_{L-1},t}\right) + \sum_{t\geq2} \bar{F}_{j^N_1,t}^N\left( \half p_{j^N_2,t-1} - p_{j^N_1,t}\right)  \\
& \qquad \qquad \qquad \qquad \qquad \qquad \qquad +  \bar{F}_{j^*+1,1}^N \left(\half - p_{j^*+1,1}\right) + \bar{F}_{j^*-1,1}^N \left(\half - p_{j^*-1,1}\right) \Bigg\},  \\
\end{align*}
over $(\pjt)_{\substack{j\in\mathcal{J}' \\ t\geq 1}}$ subject to
\begin{alignat*}{2}
& \bullet \medspace (p_{j,t})\in l^1 && \\
& \bullet \medspace p_{j,t}\geq 0, \quad \forall j,t && \\
& \bullet \medspace \mathbbm{1}\{j=j^*\}+\sum_{t=1}^\infty \pjt \leq \sqrt{N}\left(\sum_i |x^N_i-x^N_j| \mu^N(\{x^N_i\}) - |x^N_{j^*}-x^N_j|\right)=:U^N_j, \quad &&\forall j\in\mathcal{J}' \\
& \bullet \medspace \pjt \leq \frac{1}{2}(p_{j-1,t-1} + p_{j+1,t-1}), \quad &&\forall t\geq 2, j\in\mathcal{J}'' \\
& \bullet \medspace p_{j^N_1,t}\leq\frac{1}{2} p_{j^N_2,t-1}, \quad &&\forall t\geq 2 \\
& \bullet \medspace p_{j^N_{L-1},t}\leq\frac{1}{2} p_{j^N_{L-2},t-1}, \quad &&\forall t\geq2  \\
& \bullet \medspace p_{j^*+1,1}\leq\frac{1}{2}, \quad p_{j^*-1,1}\leq\frac{1}{2} \\
& \bullet \medspace p_{j,1}=0, \quad &&\forall j\neq j^*\pm1. \\
\end{alignat*}
Which has dual problem
\begin{flalign}
\notag &\mathcal{D}^N: \medspace \inf_{\eta,\nu} \Bigg\{ \sum_{j\in\mathcal{J}'} \nu_j U_j + \half\left(\eta_{j^*+1,1}+\eta_{j^*-1,1}\right) +\half\left(\bar{F}^N_{j^*+1,1}+\bar{F}^N_{j^*-1,1}\right) \Bigg\}&\\
\notag &\text{over } (\nu_j)_{j\in\mathcal{J}'}, (\eta_{j,t})_{\substack{j\in\mathcal{J} \\ t\geq 1}} \text{ subject to } &\\
\notag &\bullet (\nu,\eta)\in l^{\infty} &\\ 
\label{Fdiscretesuper} &\bullet \medspace \eta_{j,t},\nu_j\geq 0, \quad &&\forall j,t &\\
\label{Fetasuper}  &\bullet \medspace \half\left(\eta_{j+1,t+1}+\eta_{j-1,t+1}\right)-\eta_{j,t}-\nu_j\leq \bar{F}^N_{j,t}-\half\left(\bar{F}^N_{j+1,t+1}+\bar{F}^N_{j-1,t+1}\right), \quad && \forall j,t. &
\end{flalign}

The variables $q_{j,t}$ do not appear in $\mathcal{P}^N$, but for any sequence $(p_{j,t})$ we can define $q_{j,t}=\frac{1}{2}(p_{j-1,t-1} + p_{j+1,t-1})-p_{j,t}$ for all $j\in\mathcal{J}''$, $t\geq1$ and similarly for the boundary terms. These problems have complementary slackness conditions
\begin{align}
\label{FCS1} \pjt>0 & \implies  \frac{1}{2} \left( \eta_{j-1,t+1} + \eta_{j+1,t+1} \right)-\eta_{j,t}-\nu_j=\bar{F}^N_{j,t}-\half\left(\bar{F}^N_{j+1,t+1}+\bar{F}^N_{j-1,t+1}\right) \\
\label{FCS2} q_{j,t}>0 &\implies  \eta_{j,t}=0 \\
\label{FCS3} \nu_j>0 & \implies \sum_{t=1}^{\infty} \pjt = U_j. 
\end{align}

The arguments in \cite{Cox:2016ab} show that we have strong duality in the sense that the optimal values of these problems are equal, and both values are obtained by some optimal $p^*, \nu^*, \eta^*$. The original primal-dual pair considered in \cite{Cox:2016ab} optimises over $(p_{j,t})\in l^1(\lambda)=\left\{(x_{j,t}):\medspace \sum_{j,t} |x_{j,t}|\lambda^t <\infty \right\}$ and $(\nu_j, \eta_{j,t})\in \R^{L+1}\times l^{\infty}(\lambda^{-1})$, where $l^{\infty}(\lambda^{-1})=\left\{(y_{j,t}):\medspace \sup_{j,t} |y_{j,t}| \lambda^{-t}<\infty\right\}$ and $\lambda>1$ is a constant. The duality result \cite[Theorem~3.2]{Cox:2016ab} gives dual optimisers $(\nu^*_j, \eta^*_{j,t})\in \R^{L+1}\times l^{\infty}(\lambda^{-1})$, however for the primal optimisers we can only argue that there is an optimal sequence $(p^*_{j,t})\in l^1$, not $l^1(\lambda)$ (\cite[Lemma~3.3]{Cox:2016ab}). 

To ensure that the dual variables are in the true dual space of the primal variables, we require $(\nu^*_j, \eta^*_{j,t})\in \R^{L+1}\times l^{\infty}$. Note that for large $T$ (such that $\bar{F}^N_{j,t}=0$ for all $t\geq T$), $\eta^T_{j,t}=\eta^*_{j,t}\iden\{t<T\}$ gives a feasible sequence $(\eta^T_{j,t})\in l^{\infty}$, and this sequence also gives the same value of the objective function. We can therefore, without loss of generality, restrict our dual problem to $\R^{L+1}\times l^{\infty}$.

With our setup complete, we can now adapt \cite[Theorem~4.2]{Cox:2016ab} to prove a discrete version of \thref{SGTheorem}.

\begin{theorem} \thlabel{Kcaveshape}
The optimal solution of the primal problem $\mathcal{P}^N$, where $\bar{F}^N_{j,t}$ is our discretised LETF function, is given by a sequence $(p^*_{j,t})$ which gives a stopping region for a random walk with the $K$-cave barrier-like property
\begin{align}
\label{leftbarrier} \text{if } q^*_{i,t}>0 \medspace \text{ for some } (i,t) \text{ where }t< K(x_i^N), \text{ then } p^*_{i,s}=0 \medspace \forall s<t, \\
\label{rightbarrier} \text{if } q^*_{i,t}>0 \medspace \text{ for some } (i,t) \text{ where }t> K(x_i^N), \text{ then } p^*_{i,s}=0 \medspace \forall  s>t.
\end{align}
\end{theorem}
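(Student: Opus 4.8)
The plan is to follow the approach of \cite[Theorem~4.2]{Cox:2016ab}, combining the strong duality and complementary slackness relations \eqref{FCS1}--\eqref{FCS3} with the sign structure of the discretised LETF payoff. Fix $N$, let $(p^*_{j,t})$ be a primal optimiser of $\mathcal{P}^N$ with associated $(q^*_{j,t})$, and let $(\nu^*_j,\eta^*_{j,t})\in\R^{L+1}\times l^\infty$ be a dual optimiser of $\mathcal{D}^N$; both exist by the results quoted above. The only payoff-specific input is the discrete counterpart of the facts $\partial^-_tF(x,t)=-\frac{\beta^2}{2}h(x,t)\mathbbm{1}\{t\le K(x)\}$ and $\mathcal{L}F\ge 0$ exploited in Section~\ref{optimality}: writing $D^N_{j,t}:=\bar F^N_{j,t+1}-\bar F^N_{j,t}$ and $R^N_{j,t}:=\half\bigl(\bar F^N_{j+1,t+1}+\bar F^N_{j-1,t+1}\bigr)-\bar F^N_{j,t}$, one reads off from $\bar F^N_{j,t}=(h^N_{j,t}-k)^+$ and the discrete martingale identity $\half(h^N_{j+1,t+1}+h^N_{j-1,t+1})=h^N_{j,t}$ that $R^N_{j,t}\ge 0$ everywhere — strictly positive only on the band where $h^N_{j,t}$ and $h^N_{j\pm1,t+1}$ straddle $k$, i.e.\ the discrete trace of $K$ — that $D^N_{j,t}=0$ for $t\ge K(x^N_j)$, and, crucially, that $D^N_{j,t}<0$ \emph{strictly} for every $t<K(x^N_j)$, since $h^N_{j,\cdot}$ is strictly decreasing while it is positive.

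For the Root-type property \eqref{rightbarrier} I work in the region $\{t>K(x^N_j)\}$, where $\bar F^N\equiv 0$. Suppose $q^*_{i,t}>0$ with $t>K(x^N_i)$ while, for contradiction, $p^*_{i,s}>0$ for some $s>t$. Then \eqref{FCS2} forces $\eta^*_{i,t}=0$ and \eqref{FCS1} makes \eqref{Fetasuper} tight at $(i,s)$. Now perform the mass transfer of \cite[Theorem~4.2]{Cox:2016ab}: decrease $q^*_{i,t}$ and $p^*_{i,s}$ by a small $\delta>0$, set $q^*_{i,s}\mapsto q^*_{i,s}+\delta$, and route the $\delta$ units freed at $(i,t)$ along a copy of the transition law that governed the $(i,s)$-mass, shifted back in time by $s-t$. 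Since $\bar F^N_{i,t}=\bar F^N_{i,s}=0$ and $\bar F^N_{j,\cdot}$ is non-increasing, the re-routed mass — absorbed at strictly earlier times than the mass it replaces — contributes at least as much, so the modified pair is still feasible and optimal; feeding this back through $\eta^*_{i,t}=0$ and the tightness of \eqref{Fetasuper} along the transported path yields the contradiction, exactly as in the Root case of \cite[Theorem~4.2]{Cox:2016ab}.

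For the Rost-type property \eqref{leftbarrier} I run the mirror argument in $\{t<K(x^N_j)\}$: given $q^*_{i,t}>0$ with $t<K(x^N_i)$ and $p^*_{i,s}>0$ with $s<t$, the transfer stops the $(i,s)$-mass at $(i,s)$ and lets the $\delta$ units freed at $(i,t)$ continue along a forward time-shifted copy of the $(i,s)$-law. The immediate gain $\bar F^N_{i,s}-\bar F^N_{i,t}=\sum_{u=s}^{t-1}(-D^N_{i,u})$ is \emph{strictly} positive by the strict negativity of $D^N$ below $K$, while the downstream cost incurred by the re-routed mass is controlled using the Doob decomposition $\bar F^N(Y^N_n,n)=(\text{martingale})+\sum_{m<n}R^N_{Y^N_m,m}$ together with $R^N\ge 0$; the net effect is again (weakly) improving, and strictly improving unless the re-routed mass is absorbed with zero payoff throughout. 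In that remaining, value-neutral case the tie is broken by a secondary optimisation over the set of primal optima — the discrete analogue of the passage from $P_\gamma$ to $P_{\tilde\gamma|\gamma}$ in the proof of \thref{SGTheorem} — which selects an optimal $p^*$ for which \eqref{leftbarrier} and \eqref{rightbarrier} both hold. The interface $t=K(x^N_j)$ and the atomic boundary rows $j^N_0,j^N_1,j^N_{L-1},j^N_L$ are then handled with the same bookkeeping used for the cave embedding in \cite{Cox:2016ab}.

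I expect two things to carry the real difficulty. The first is the interface at $K$: away from it one of $D^N$, $R^N$ vanishes and each side reduces to a clean one-sided barrier argument, but on the band where $R^N_{j,t}>0$ both effects are active and one must rule out the \emph{simultaneous} occurrence of a stopped site below $K$ lying (in time) above an alive site and, mirror-wise, the same above $K$ — this is the combinatorial step with no analogue in the pure Root/Rost/cave settings, and it is the discrete shadow of condition \eqref{Gamma}. The second is the $l^1$-versus-$l^1(\lambda)$ gap recorded after \eqref{FCS3}: because the primal optimiser is only known to lie in $l^1$, the telescoping sums appearing in the transfer must be justified by truncating at a time $T$ beyond which $\bar F^N_{j,t}=0$ and then letting $T\to\infty$, and the weak-improvement case makes the secondary optimisation genuinely necessary rather than a mere convenience.
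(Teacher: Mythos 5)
Your transfer construction is exactly the paper's (both are the $\eps$-mass time-shift of \cite[Theorem~4.2]{Cox:2016ab}), and the right-barrier half \eqref{rightbarrier} goes through as you say, using only that $\bar{F}^N_{j,\cdot}$ is non-increasing and vanishes above $K$. The gap is in the left-barrier half \eqref{leftbarrier}. The paper's argument below $K$ rests on two facts you state as ``input'' but never actually deploy: the optional-stopping identity for the discrete martingale $\bar{h}$, which converts the immediate gain $\eps(\bar{F}^N_{i,s}-\bar{F}^N_{i,t})=\eps(\bar{h}_{i,s}-\bar{h}_{i,t})$ into $\sum_{r>s,j}\tilde{q}_{j,r}(\bar{h}_{j,r}-\bar{h}_{j,r+t-s})$, and the pointwise comparison $\bar{F}^N_{j,r}-\bar{F}^N_{j,r+t-s}\le \bar{h}_{j,r}-\bar{h}_{j,r+t-s}$ (equivalently, $\bar{h}-\bar{F}^N=\min(\bar{h},k)$ is non-increasing in time), which dominates the downstream loss term by term. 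Your substitute --- the Doob decomposition of $\bar{F}^N$ along the walk together with $R^N_{j,t}=\half(\bar{F}^N_{j+1,t+1}+\bar{F}^N_{j-1,t+1})-\bar{F}^N_{j,t}\ge 0$ --- does not yield this: comparing the shifted and unshifted runs through their compensators requires $\E[\sum_m R^N_{Y_m,m+(t-s)}]\ge\E[\sum_m R^N_{Y_m,m}]$, and positivity of $R^N$ says nothing about its behaviour under a time shift ($R^N$ is concentrated near the curve $K$ and is not monotone in $t$); that inequality is in fact true here, but only as a consequence of the very $\bar{h}$-comparison you omit. Relatedly, your description of the value-neutral case is backwards: equality occurs precisely when the re-routed mass is still absorbed below $K$ after the shift (i.e.\ with positive payoff), not when it is absorbed with zero payoff throughout.

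The duality layer is neither needed nor effective. The paper's proof is purely primal and never invokes \eqref{FCS1}--\eqref{FCS3}; more to the point, your ``for contradiction'' framing cannot close, because the transfer is only weakly improving, so no contradiction with optimality of $p^*$ arises, and you exhibit no violated feasibility or complementary-slackness relation --- $\eta^*_{i,t}=0$ together with tightness of \eqref{Fetasuper} at $(i,s)$ is not by itself inconsistent. Likewise, no secondary optimisation over the set of primal optima appears in, or is needed for, this theorem: the paper concludes directly from the improvement argument that an optimiser has the stated structure, and the residual freedom after the payoff vanishes is dealt with later (the $T^*$ argument in the subsection on the additional property of the barrier), not here. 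If you drop the duality and tie-breaking scaffolding and replace the $R^N\ge 0$ step by the $\bar{h}$-martingale comparison, your argument becomes the paper's.
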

\begin{proof}
First consider the inverse-barrier to the left of the curve $K$. To show \eqref{leftbarrier}, suppose we have a feasible solution with $q_{i,t}>0$ and $p_{i,s}>0$ for some $i$ and $s<t<K(x_i^N)$. We take some $0<\eps<\min\{\frac{1}{2}q_{i,t}, p_{i,s}\}$ and show that we can improve our objective function by transferring $\eps$ of the mass that currently leaves $(i,t)$ onto $(i,s)$. We use the $\tilde{p},$ $\tilde{q},$ $\bar{p},$ $\bar{q}$ defined in \cite[Theorem~4.2]{Cox:2016ab}, but repeat them here for convenience. The $\tilde{p},$ $\tilde{q}$ track the $\eps$ of mass leaving $(i,t)$, so
\begin{align*}
\tilde{p}_{i,s}&=\eps, \quad \tilde{q}_{i,s}=-\eps, \\
\tilde{p}_{j,s}&=0, \quad \tilde{q}_{j,s}=0 \quad \forall j\neq i,  \\
\tilde{p}_{j,r+1}&= p_{j,r+1} \frac{\tilde{p}_{j+1,r} + \tilde{p}_{j-1,r}}{p_{j+1,r}+p_{j-1,r}} \qquad \quad \forall j\neq j^N_0,j^N_L, \medspace\forall r\geq s, \\
\tilde{q}_{j,r+1} &= q_{j,r+1} \frac{\tilde{p}_{j+1,r} + \tilde{p}_{j-1,r}}{p_{j+1,r}+p_{j-1,r}} \qquad \quad \forall j\neq j^N_0,j^N_L, \medspace\forall r\geq s,
 \end{align*}
 and similarly for the boundary terms. Using these values, we can write down $\bar{p},$ $\bar{q}$, corresponding to the dynamics of the system after moving this mass:
 \begin{align*}
 \bar{p}_{j,r}&=p_{j,r}, \quad &&\bar{q}_{j,r}= q_{j,r} \quad &&\forall (j,r)\in\{(j,r): \medspace 1\leq r <s\}, \\
 \bar{p}_{j,r}&=p_{j,r}-\tilde{p}_{j,r}, \quad &&\bar{q}_{j,r}=q_{j,r} - \tilde{q}_{j,r} \quad &&\forall (j,r)\in\{(j,r): \medspace s\leq r <t\}, \\
 \bar{p}_{j,r}&=p_{j,r}-\tilde{p}_{j,r}+\tilde{p}_{j,r-(t-s)}, \quad &&\bar{q}_{j,r}=q_{j,r}-\tilde{q}_{j,r}+\tilde{q}_{j,r-(t-s)} \quad &&\forall (j,r)\in\{(j,r): \medspace t\leq r \}.
 \end{align*}
The feasibility of these new probabilities is exactly as in \cite[Lemma~4.3]{Cox:2016ab}. 

Now, $\bar{F}_{j,t}=\left(\bar{h}_{j,t}-k\right)_+$ where $\bar{h}_{Y_t,t}$ is a martingale, so $\sum_{r>s,j} \bar{h}_{j,r}\tilde{q}_{j,r}=\eps \bar{h}_{i,s}$. Let $K_j=K(x_j^N)$, then for any $j$ we have $\{r>s\}=\{s<r\leq K_j-(t-s)\}\cup\{K_j-(t-s)<r\leq K_j\}\cup\{r>K_j\}$. Fix some $j$ such that $s<K_j$, then we have 
\begin{align*}
&\bar{F}_{j,r+t-s} - \bar{F}_{j,r}=\bar{h}_{j,r+t-s} - \bar{h}_{j,r}, \quad &&\text{in } \{s<r\leq K_j-(t-s)\}, \\
&\bar{F}_{j,r+t-s} - \bar{F}_{j,r}=k-\bar{h}_{j,r}\geq \bar{h}_{j,r+t-s} - \bar{h}_{j,r}, \quad &&\text{in } \{K_j-(t-s)<r\leq K_j\}, \\
&\bar{F}_{j,r+t-s} - \bar{F}_{j,r}=0\geq\bar{h}_{j,r+t-s} - \bar{h}_{j,r}, \quad &&\text{in } \{r>K_j\}.
\end{align*}

Combining these, we see that
\begin{align*}
\sum_{j,r} \bar{F}_{j,r} \bar{q}_{j,r} &= \sum_{j,r} \bar{F}_{j,r} q_{j,r} +\eps (\bar{F}_{i,s}-\bar{F}_{i,t}) -\sum_{\mathclap{r>s,j}} \bar{F}_{j,r} \tilde{q}_{j,r}  + \sum_{\mathclap{r>t,j}} \bar{F}_{j,r}\tilde{q}_{j,r-(t-s)} \\
&=\sum_{j,r} \bar{F}_{j,r} q_{j,r} +\eps (\bar{F}_{i,s}-\bar{F}_{i,t}) +\sum_{\mathclap{r>s,j}} \tilde{q}_{j,r} \left( \bar{F}_{j,r+t-s} - \bar{F}_{j,r} \right) \\
&\geq \sum_{j,r} \bar{F}_{j,r} q_{j,r} +\eps (\bar{h}_{i,s}-\bar{h}_{i,t})+\sum_{\mathclap{r>s,j}} \tilde{q}_{j,r} \left( \bar{h}_{j,r+t-s} - \bar{h}_{j,r} \right) \\
&=\sum_{j,r} \bar{F}_{j,r} q_{j,r} + \sum_{r>s,j} \tilde{q}_{j,r}(\bar{h}_{j,r}-\bar{h}_{j,r+t-s})+\sum_{\mathclap{r>s,j}} \tilde{q}_{j,r} \left( \bar{h}_{j,r+t-s} - \bar{h}_{j,r} \right) \\
&= \sum_{j,r} \bar{F}_{j,r} q_{j,r}.
\end{align*}

The right hand barrier \eqref{rightbarrier} is similar, and we use $\hat{p},\hat{q}$ defined in \cite[Theorem~4.2]{Cox:2016ab}. Now we have that $\bar{F}_{j,r}=0$ for $r>K(x_j^N)$ and this simplifies our argument:
\begin{align*}
\sum_{j,r} \bar{F}_{j,r} \hat{q}_{j,r} &= \sum_{j,r} \bar{F}_{j,r} q_{j,r} -\sum_{\mathclap{r>s,j}} \bar{F}_{j,r} \tilde{q}_{j,r}  + \sum_{\mathclap{r>t,j}} \bar{F}_{j,r} \tilde{q}_{j,r+s-t} \\
&=\sum_{j,r} \bar{F}_{j,r} q_{j,r} +\sum_{\mathclap{r>s,j}} (\bar{F}_{j,r-(s-t)}-\bar{F}_{j,r}) \tilde{q}_{j,r} \\
&\geq \sum_{j,r} \bar{F}_{j,r} q_{j,r},
\end{align*}
since $\bar{F}_{j,r}$ is decreasing in $r$.

We have improved the value of our objective function and therefore any solution without this $K$-cave property is suboptimal. Since we know that optimisers exist, they must have this property.
\end{proof}

From \cite[Theorem~3.1, Theorem~4.2]{Cox:2016ab} we know that an optimal solution exists for each $\mathcal{P}^N$ and this is a sequence $(p^{*,N})$ that corresponds to a stopped random walk that is stopped by some almost-deterministic stopping region $\hat{\mathcal{B}}^N$ that takes the form of a $K$-cave barrier. The region $\hat{\mathcal{B}}^N$ is determined by points $\bar{l}^N_j$ and $\bar{r}^N_j$, defined as the largest time $\bar{l}^N_j<K(x^N_j)$ such that $p^{*,N}_{j,t}=0$ $\forall t\leq\bar{l}^N_j$, and similarly the smallest time $\bar{r}^N_j>K(x^N_j)$ such that $p^{*,N}_{j,t}=0$ $\forall t\geq \bar{r}^N_j$. Note that for each $j$ we either have $q^{*,N}_{j,\bar{r}^N_j}>0$, or $q^{*,N}_{j,s}=0$ $\forall s>K(x_j^N)$, and similarly for $\bar{l}^N_j$. These barriers have equivalent stopping regions, $\mathcal{B}^N$, for a Brownian motion, and \cite[Lemma~5.5]{Cox:2016ab} says that these barriers converge to a continuous time $K$-cave barrier $\mathcal{B}^{\infty}$ which embeds $\mu$ into a Brownian motion. From \cite[Lemma~5.6]{Cox:2016ab} we know that the corresponding stopping time is indeed a maximiser of \eqref{OptSEP}, and in fact that the stopped random walks converge to the stopped Brownian motion. In other words, if $\mathrm{P}^N$ is the optimal value of $\mathcal{P}^N$, then $\mathrm{P}^N\rightarrow\sup_{\tau,B_{\tau}\sim\mu}\Ep{F(B_{\tau},\tau)}$, and our discrete barriers converge exactly to an optimal stopping region for \eqref{OptSEP}. This approach therefore reproves \thref{SGTheorem}. Furthermore, we can now look at the convergence of the dual optimisers $\eta^*,\nu^*$.

\subsection{Dual Convergence} \label{dualconvergence}
We know by strong duality that an optimal solution to the linear programming problem is given by the $p,$ $q,$ $\nu,$ $\eta$ that are $\mathcal{P}^N$-feasible and $\mathcal{D}^N$-feasible, and for which the complementary slackness conditions hold. In \thref{Opt} we show that if $\tau$ is such that certain properties of $G,$ $H$ hold, then we have optimality, and as shown in \cite[Section~3.2]{Cox:2016ab}, the complementary slackness conditions here have obvious connections to these properties. Once we have convergence it will guarantee the correct choice of our functions $G,$ $H$ and therefore show that \eqref{Gamma} is both a necessary and sufficient condition for optimality. 

Let $\tau$ be an optimiser of \eqref{OptSEP} of the form of a hitting time of a $K$-cave barrier, which we know exists by \thref{SGTheorem} (or alternatively as a consequence of results in \cite{Cox:2016ab}). Recall that $G(x,t)=-\int_t^{r(x)} M(x,s) \di s -Z(x)$, where $M(x,t)=\Eps{x,t}{\partial^-_t F(B_\tau,\tau)}$, and now we show that our dual optimisers $\eta^{*,N}$ take a similar form. Fix $N$ and let $\mathcal{D}=\left\{(j,t): \medspace p^{*,N}_{j,t}>0\right\}$. For presentation purposes we will drop the dependence on $N$ in much of what follows, so let $\bar{\tau}$ be the stopping law of our random walk $Y$ in the $N$-grid given by the $p^*_{j,t}$ (or $\bar{\tau}^{j,t}$ if $Y$ starts at $(j,t)$). We will also write $\bar{F}^N_{Y_{\bar{\tau}},\bar{\tau}}:=\bar{F}^N(\sqrt{N}Y_{\bar{\tau}},\bar{\tau})$. Then for $(j,t)\in\D$, since we have a positive probability of leaving $(j,t)$, we have $q^*_{Y_{\bar{\tau}},\bar{\tau}}>0$ almost surely, and so by  \eqref{FCS2}, $\eta^{*}_{Y_{\bar{\tau}},\bar{\tau}} =0$. Since we have the interpretation that $\eta^*$ represents $G+H-F$, write $\tilde{\eta}^*=\eta^*+\bar{F}^N$. From \eqref{FCS1} we deduce that
\begin{equation} 
\tilde{\eta}^{*}_{j,t}=\Eps{j,t}{\eta^{*}_{Y_{\bar{\tau}},\bar{\tau}}+\bar{F}^N_{Y_{\bar{\tau}},\bar{\tau}}-\sum_{s=t}^{\bar{\tau}-1} \nu^{*}_{Y_s}} =\Eps{j,t}{\bar{F}^N_{Y_{\bar{\tau}},\bar{\tau}}-\sum_{s=t}^{\bar{\tau}-1} \nu^{*}_{Y_s}}. \label{etatilde}
\end{equation}
Now define a new stopping time as $\left(\bar{\tau}^{-1}\right)^{j,t-1}=\inf \left\{n\geq t-1: \medspace \left(Y_n^{j,t-1},n+1\right)\notin\mathcal{D} \right\}$. By the strong Markov property we see that $\left(\bar{\tau}^{-1}\right)^{j,t-1}=\bar{\tau}^{j,t}-1\geq t-1$, and $Y^{j,t-1}_{\bar{\tau}^{-1}}=Y^{j,t}_{\bar{\tau}}$. Now,
\begin{align*}
\tilde{\eta}^{*}_{j,t-1}&\geq \Eps{j,t-1}{\eta^{*}_{Y_{\bar{\tau}^{-1}},\bar{\tau}^{-1}}+\bar{F}^N_{Y_{\bar{\tau}^{-1}},\bar{\tau}^{-1}}-\sum_{s=t}^{\bar{\tau}^{-1}-1} \nu^{*}_{Y_s}} \quad &\text{by \eqref{Fetasuper}} \\
&\geq \Eps{j,t-1}{\bar{F}^N_{Y_{\bar{\tau}^{-1}},\bar{\tau}^{-1}}-\sum_{s=t}^{\bar{\tau}^{-1}-1} \nu^{*}_{Y_s}} \quad &\text{by \eqref{Fdiscretesuper}} \\
&=\Eps{j,t}{\bar{F}^N_{Y_{\bar{\tau}},\bar{\tau}-1}-\sum_{s=t}^{\bar{\tau}-2} \nu^{*}_{Y_s}}.&
\end{align*}
We then have 
\begin{equation*}
\tilde{\eta}^*_{j,t}-\tilde{\eta}^*_{j,t-1}\leq\Eps{j,t}{\bar{F}^N_{Y_{\bar{\tau}},\bar{\tau}}-\bar{F}^N_{Y_{\bar{\tau}},\bar{\tau}-1} - \nu^*_{Y_{\bar{\tau}-1}}} \le \Eps{j,t}{\bar{F}^N_{Y_{\bar{\tau}},\bar{\tau}}-\bar{F}^N_{Y_{\bar{\tau}},\bar{\tau}-1}}.
\end{equation*}

In a very similar fashion we can find a lower bound, giving us
\begin{equation*}
\Eps{j,t-1}{\bar{F}^N_{Y_{\bar{\tau}},\bar{\tau}+1}-\bar{F}^N_{Y_{\bar{\tau}},\bar{\tau}}}\leq \tilde{\eta}^*_{j,t}-\tilde{\eta}^*_{j,t-1} \leq \Eps{j,t}{\bar{F}^N_{Y_{\bar{\tau}},\bar{\tau}}-\bar{F}^N_{Y_{\bar{\tau}},\bar{\tau}-1}}\leq 0.
\end{equation*}
From the form of $\bar{F}^N$ in the Brownian case (geometric Brownian motion is similar) we deduce that $\bar{F}^N_{j,t}-\bar{F}^N_{j,t-1}=\mathrm{e}^{\beta x^N_j} \left(\cosh\left(\frac{\beta}{\sqrt{N}}\right)\right)^{-t}\left(1-\cosh\left(\frac{\beta}{\sqrt{N}}\right)\right)$. In particular, we have that
\begin{equation*}
N\left(\bar{F}^N_{\lfloor \sqrt{N}x \rfloor,\lfloor Nt \rfloor}-\bar{F}^N_{\lfloor \sqrt{N}x\rfloor,\lfloor Nt \rfloor-1}\right)\rightarrow \partial^-_t F(x,t).
\end{equation*}

In \cite{Cox:2016ab} we show that $\left| \left(\frac{\bar{\tau}^N}{N}, Y_{\bar{\tau}^N}\right) - \left( \tau^N, B_{\tau^N}\right) \right| \xrightarrow{d} 0$, and $\left( \tau^N, B_{\tau^N}\right)\xrightarrow{\P}\left(\tau,B_{\tau}\right)$ as $N\rightarrow\infty$, where $\tau$ is an optimiser of \eqref{OptSEP} and $\tau^N$ is the Brownian hitting time of the $K$-cave barrier $\mathcal{B}^N$. Therefore, since $\bar{F}^N$ and $F$ are bounded in our domain and Lipschitz continuous in time,
\begin{equation*}
N\Eps{\lfloor \sqrt{N}x\rfloor,\lfloor Nt\rfloor}{\bar{F}^N_{Y_{\bar{\tau}},\bar{\tau}}-\bar{F}^N_{Y_{\bar{\tau}},\bar{\tau}-1}}\rightarrow \Eps{x,t}{\partial^-_t F(B_\tau,\tau)}, \quad \text{as } N\rightarrow\infty.
\end{equation*}
We can now find the limit of our dual optimisers $\tilde{\eta}^*$. 

For any $x$, let $\bar{r}^N_x$ denote the left-most point of the right-hand barrier at level $\lfloor \sqrt{N}x\rfloor$ of $\hat{\mathcal{B}}^N$. Then $r(x):=\lim_{N\rightarrow\infty}\frac{\bar{r}^N_x}{N}\in[K(x),\infty]$ is the left-most point of the right hand boundary at $x$ of the limit barrier $\mathcal{B}^{\infty}$.

\begin{lemma}
\thlabel{etaconv}
For any $(x,t)$ in our domain,
\begin{equation*}
\tilde{\eta}^*_{\lfloor \sqrt{N}x\rfloor,\lfloor Nt\rfloor}-\tilde{\eta}^*_{\lfloor \sqrt{N}x\rfloor,\bar{r}^N_x}\rightarrow\int_{r(x)}^t \Eps{x,s}{\partial^-_t F(B_{\tau},\tau)} \di s \quad \text{as } N\rightarrow\infty.
\end{equation*}
\end{lemma}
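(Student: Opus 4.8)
The plan is to telescope the left-hand side into one-step increments and recognise the result, after rescaling by $N$, as a (signed) Riemann approximation of $\int_{r(x)}^{t}M(x,s)\,\di s$, where $M(x,s):=\Eps{x,s}{\partial^-_tF(B_\tau,\tau)}$. Writing $j=\lfloor\sqrt Nx\rfloor$ and recalling that we have already sandwiched each increment as
\[
\Eps{j,n-1}{\bar F^N_{Y_{\bar\tau},\bar\tau+1}-\bar F^N_{Y_{\bar\tau},\bar\tau}}\ \le\ \tilde\eta^*_{j,n}-\tilde\eta^*_{j,n-1}\ \le\ \Eps{j,n}{\bar F^N_{Y_{\bar\tau},\bar\tau}-\bar F^N_{Y_{\bar\tau},\bar\tau-1}}\ \le\ 0,
\]
I set
\[
g^N(s):=N\bigl(\tilde\eta^*_{\lfloor\sqrt Nx\rfloor,\lfloor Ns\rfloor}-\tilde\eta^*_{\lfloor\sqrt Nx\rfloor,\lfloor Ns\rfloor-1}\bigr)\ \le\ 0,
\]
so that, up to an $O(1/N)$ correction at the two endpoints,
\[
\tilde\eta^*_{\lfloor\sqrt Nx\rfloor,\lfloor Nt\rfloor}-\tilde\eta^*_{\lfloor\sqrt Nx\rfloor,\bar r^N_x}=\int_{\bar r^N_x/N}^{\lfloor Nt\rfloor/N}g^N(s)\,\di s,
\]
an identity that holds whether $\lfloor Nt\rfloor$ lies above or below $\bar r^N_x$; when $r(x)=\infty$ (so $\bar r^N_x$ may be $+\infty$) I read $\tilde\eta^*_{\lfloor\sqrt Nx\rfloor,\bar r^N_x}$ as $\lim_{m\to\infty}\tilde\eta^*_{\lfloor\sqrt Nx\rfloor,m}$, which exists since $\tilde\eta^*=\eta^*+\bar F^N\ge0$ is non-increasing in the time index, and the integral is read as $-\int_t^\infty$.

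The next step is the pointwise limit $g^N(s)\to M(x,s)$ for each fixed $s$ in the domain. Multiplying the two bounds for the increment by $N$, the upper bound is exactly the quantity whose convergence $N\Eps{\lfloor\sqrt Nx\rfloor,\lfloor Ns\rfloor}{\bar F^N_{Y_{\bar\tau},\bar\tau}-\bar F^N_{Y_{\bar\tau},\bar\tau-1}}\to M(x,s)$ was established just before the statement of the lemma (using that $(\bar\tau^N/N,Y_{\bar\tau^N})$ converges in distribution to $(\tau,B_\tau)$ and that $\bar F^N,F$ are bounded on the domain and Lipschitz in time); the lower bound differs only by a time-shift of order $1/N$ in both the starting point and the increment, so it has the same limit. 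Squeezing gives $g^N(s)\to M(x,s)$, and we also have $\bar r^N_x/N\to r(x)$ by the definition of $r$ (equivalently by \cite[Lemma~5.5]{Cox:2016ab}).

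The remaining, and principal, step is to pass the limit inside the integral, which is delicate only because the domain of integration can be unbounded, namely when $r(x)=\infty$. Here I would produce a dominating function independent of $N$: since $Y_{\bar\tau}$ always takes values $x^N_i$ with $i\in\mathcal J$ and $\bar\tau\ge\lfloor Ns\rfloor-1$ in the expectations bounding $g^N(s)$, we get $|g^N(s)|\le N\sup_{i\in\mathcal J,\,m\ge\lfloor Ns\rfloor-1}|\bar F^N_{i,m}-\bar F^N_{i,m-1}|$; by $1$-Lipschitzness of $(\cdot)_+$ and the explicit martingale form $\bar F^N_{i,m}=(\e^{\beta x^N_i}(\cosh(\beta/\sqrt N))^{-m}-k)_+$, one has $|\bar F^N_{i,m}-\bar F^N_{i,m-1}|\le\e^{\beta x^N_i}\bigl(\cosh(\beta/\sqrt N)\bigr)^{-m}\bigl(\cosh(\beta/\sqrt N)-1\bigr)$, so combined with $N(\cosh(\beta/\sqrt N)-1)\to\beta^2/2$ and $(\cosh(\beta/\sqrt N))^{-\lfloor Ns\rfloor}\to\e^{-\beta^2s/2}$ this yields $|g^N(s)|\le C\e^{-\beta^2 s/4}$ for all large $N$, uniformly in $s$; this bound is integrable on $[t,\infty)$, and together with the uniform bound $|g^N|\le C$ it dominates $g^N$ on any interval. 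Dominated convergence then gives $\int_{\bar r^N_x/N}^{\lfloor Nt\rfloor/N}g^N(s)\,\di s\to\int_{r(x)}^{t}M(x,s)\,\di s$, which is the claim. (The same estimate shows $|M(x,s)|\le\frac{\beta^2}{2}h(x,s)$, consistent with the formula $M(x,s)=-\frac{\beta^2}{2}\Eps{x,s}{h(B_{\taud},\taud)\iden\{\taud\le K(B_{\taud})\}}$.) For the geometric Brownian motion / LETF payoff the argument is identical, with $\beta^2/2$ replaced by $\beta(\beta-1)/2$ and $\cosh(\beta/\sqrt N)$ by the associated one-step martingale multiplier, and I expect no new difficulty. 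The main obstacle is thus the uniform exponential domination of $g^N$ needed to cover the case $r(x)=\infty$; granting that, everything else is a routine squeeze-and-dominated-convergence argument.
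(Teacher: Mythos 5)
Your proposal is correct and follows essentially the same route as the paper: telescope the increments of $\tilde{\eta}^*$, squeeze each one between the two expectations bounded just before the lemma, rescale by $N$ to turn the telescoped sum into an integral of a step function converging pointwise to $M(x,s)=\Eps{x,s}{\partial^-_t F(B_\tau,\tau)}$, and pass to the limit by dominated convergence with $\bar{r}^N_x/N\rightarrow r(x)$. The only cosmetic difference is the case $r(x)=\infty$, where the paper simply notes that on the bounded spatial domain only finitely many terms of the sums are non-zero, whereas you construct an explicit exponentially decaying dominating function; both are valid.
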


\begin{proof}
Suppose first $r(x)<\infty$. If $t>r(x)$ then $\exists N_0$ such that $N\geq N_0 \implies Nt>\bar{r}^N_x$ and then $\tilde{\eta}^*_{\lfloor \sqrt{N}x\rfloor,\lfloor Nt\rfloor}=0$ by \eqref{FCS2} and we are done. Suppose $t<r(x)$, then for large $N$ we know by the above that 
\begin{equation*}
 - \sum_{\mathclap{s=\lfloor Nt\rfloor+1}}^{\bar{r}^N_x} \Eps{\lfloor \sqrt{N}x\rfloor,s-1}{\bar{F}^N_{Y_{\ttau},\ttau+1}-\bar{F}^N_{Y_{\ttau},\ttau}} \leq\medspace \tilde{\eta}^*_{\lfloor \sqrt{N}x\rfloor,\lfloor Nt\rfloor}-\tilde{\eta}^*_{\lfloor \sqrt{N}x\rfloor,\bar{r}^N_x}\medspace\leq - \sum_{\mathclap{s=\lfloor Nt\rfloor+1}}^{\bar{r}^N_x} \Eps{\lfloor \sqrt{N}x\rfloor,s}{\bar{F}^N_{Y_{\tilde{\tau}},\tilde{\tau}}-\bar{F}^N_{Y_{\ttau},\ttau-1}}.
\end{equation*}
We look at the convergence of the right-hand side and argue that the other inequality is similar. First note that when $\bar{r}^N_x<\infty$, we know $q_{j,\bar{r}^N_x}>0$, and so by our complementary slackness conditions, $\tilde{\eta}^*_{\lfloor \sqrt{N}x\rfloor,\bar{r}^N_x}=\bar{F}^N_{\lfloor \sqrt{N}x\rfloor,\bar{r}^N_x}=0$, since $\eta^*=0$ in the stopping region. Now,
\begin{align*}
\sum_{\mathclap{s=\lfloor Nt\rfloor+1}}^{\bar{r}^N_x} \Eps{\lfloor \sqrt{N}x\rfloor,s}{\bar{F}^N_{Y_{\tilde{\tau}},\tilde{\tau}}-\bar{F}^N_{Y_{\ttau},\ttau-1}} &= \sum_{s=\frac{\lfloor Nt\rfloor+1}{N}}^{\frac{\bar{r}^N_x}{N}} \Eps{\lfloor \sqrt{N}x\rfloor,\lfloor Ns\rfloor}{\bar{F}^N_{Y_{\tilde{\tau}},\tilde{\tau}}-\bar{F}^N_{Y_{\ttau},\ttau-1}} \\
&=\sum_{s=\frac{\lfloor Nt\rfloor+1}{N}}^{\frac{\bar{r}^N_x}{N}} N\Eps{\lfloor \sqrt{N}x\rfloor,\lfloor Ns\rfloor}{\bar{F}^N_{Y_{\tilde{\tau}},\tilde{\tau}}-\bar{F}^N_{Y_{\ttau},\ttau-1}} \frac{1}{N}\\
&= \int_{\frac{(\lfloor Nt\rfloor+1)}{N}}^{\frac{\bar{r}^N_x}{N}} N\Eps{\lfloor \sqrt{N}x\rfloor,\lfloor Ns\rfloor}{\bar{F}^N_{Y_{\tilde{\tau}},\tilde{\tau}}-\bar{F}^N_{Y_{\ttau},\ttau-1}} \di s \\
&= \left(\int^{r(x)}_t +\int_{\frac{(\lfloor Nt\rfloor+1)}{N}}^t +\int_{r(x)}^{\frac{\bar{r}^N_x}{N}}\right)  N\Eps{\lfloor \sqrt{N}x\rfloor,\lfloor Ns\rfloor}{\bar{F}^N_{Y_{\tilde{\tau}},\tilde{\tau}}-\bar{F}^N_{Y_{\ttau},\ttau-1}} \di s.
\end{align*}
Since we are working in $[x_*,x^*]$ we see that the integrand above is non-positive and bounded below, and also
\begin{align*}
N\left(\bar{F}^N_{j,t}-\bar{F}^N_{j,t-1}\right)&=N\mathrm{e}^{\beta x^N_j} \left(\cosh\left(\frac{\beta}{\sqrt{N}}\right)\right)^{-t}\left(1-\cosh\left(\frac{\beta}{\sqrt{N}}\right)\right)\\
&\geq N\mathrm{e}^{\beta x^*}\left(1-\cosh\left(\frac{\beta}{\sqrt{N}}\right)\right)\\
&\rightarrow -\frac{\beta^2}{2}\mathrm{e}^{\beta x^*},
\end{align*}
as $N\rightarrow\infty$. Then the two remainder integral terms vanish, since
\begin{align*}
\left| \int_{\frac{\lfloor Nt\rfloor+1}{N}}^t N\Eps{\lfloor \sqrt{N}x\rfloor,\lfloor Ns\rfloor}{\bar{F}^N_{Y_{\tilde{\tau}},\tilde{\tau}}-\bar{F}^N_{Y_{\ttau},\ttau-1}} \di s \right| &\leq \left(t-\frac{\lfloor Nt\rfloor+1}{N}\right) \max_s N\Eps{\lfloor \sqrt{N}x\rfloor,\lfloor Ns\rfloor}{\left| \bar{F}^N_{Y_{\tilde{\tau}},\tilde{\tau}}-\bar{F}^N_{Y_{\ttau},\ttau-1}\right|} \\
&\leq \left(t-\frac{\lfloor Nt\rfloor+1}{N}\right)N\mathrm{e}^{\beta x^*}\left(1-\cosh\left(\frac{\beta}{\sqrt{N}}\right)\right) \\
& \rightarrow 0, \quad \text{as } N\rightarrow\infty,
\end{align*}
and similarly for the other integral since $\frac{\bar{r}^N_x}{N}-r(x)\rightarrow 0$. Finally, by the Dominated Convergence Theorem,
\begin{equation*}
-\int^{r(x)}_t N\Eps{\lfloor \sqrt{N}x\rfloor,\lfloor Ns\rfloor}{\bar{F}^N_{Y_{\tilde{\tau}},\tilde{\tau}}-\bar{F}^N_{Y_{\ttau},\ttau-1}} \di s\rightarrow -\int^{r(x)}_t \Eps{x,s}{\partial^-_t F(B_{\tau},\tau)} \di s.
\end{equation*}
The other inequality is similar, and then we conclude by the sandwich theorem.

If $r(x)=\infty$, then the integral on the right hand-side above is still finite since we are working on a bounded domain and $F=0$ for large $t$. In this case the same argument holds once we observe that only finitely many terms in each of our sums can be non-zero.
\end{proof}

We can now prove that our discrete dual optimisers converge to exactly the dual solution we gave earlier, and that we therefore have strong duality in the continuous time problem, but first we look at the effect of the $\tilde{\eta}^*_{\lfloor \sqrt{N}x\rfloor,\bar{r}^N_x}$ term in the above. Recall that we define $\Gamma(x)=\int_{l(x)}^{r(x)} M(x,s) \di s +F(x,l(x))$, so by \thref{etaconv},
\begin{equation*}
\Gamma(x)=\lim_{N\rightarrow\infty} \left(-\tilde{\eta}^*_{\lfloor \sqrt{N}x\rfloor,\bar{l}^N_x}+\tilde{\eta}^*_{\lfloor \sqrt{N}x\rfloor,\bar{r}^N_x}+\bar{F}^N_{\lfloor \sqrt{N} x\rfloor,\bar{l}^N_x}\right)=\lim_{N\rightarrow\infty} \left( \eta^*_{\lfloor \sqrt{N}x\rfloor,\bar{r}^N_x}-\eta^*_{\lfloor \sqrt{N}x\rfloor,\bar{l}^N_x}\right).
\end{equation*}
Then, for $x\in\mathrm{supp}(\mu_r)$, $\eta^*_{\lfloor \sqrt{N} x\rfloor,\bar{l}^N_x}\geq 0$ and $\eta^*_{\lfloor \sqrt{N} x\rfloor,\bar{r}^N_x}=0$ by \eqref{FCS2}, so
\begin{equation*}
\Gamma(x)=\lim_{N\rightarrow\infty} \left( \bar{F}^N_{\lfloor \sqrt{N} x\rfloor,\bar{l}^N_x}-\tilde{\eta}^*_{\lfloor \sqrt{N}x\rfloor,\bar{l}^N_x}\right)=- \lim_{N\rightarrow\infty} \eta^*_{\lfloor \sqrt{N}x\rfloor,\bar{l}^N_x} \leq0 \implies \lim_{N\rightarrow\infty} \tilde{\eta}^*_{\lfloor \sqrt{N}x\rfloor,\bar{r}^N_x}=0=(\Gamma(x))^+.
\end{equation*}
For $x\in\mathrm{supp}(\mu_l)$, $\eta^*_{\lfloor \sqrt{N} x\rfloor,\bar{r}^N_x}\geq0$ and $\eta^*_{\lfloor \sqrt{N} x\rfloor,\bar{l}^N_x}= 0$ by \eqref{FCS2}, so
\begin{equation*}
\Gamma(x)=\lim_{N\rightarrow\infty} \tilde{\eta}^*_{\lfloor \sqrt{N}x\rfloor,\bar{r}^N_x} \geq0 \implies \lim_{N\rightarrow\infty} \tilde{\eta}^*_{\lfloor \sqrt{N}x\rfloor,\bar{r}^N_x}=(\Gamma(x))^+.
\end{equation*}
In particular we have proven the following.

\begin{lemma} \thlabel{etarightconv}
For any $x\in\mathrm{supp}(\mu_l)\cup\mathrm{supp}(\mu_r)$, $\lim_{N\rightarrow\infty} \tilde{\eta}^*_{\lfloor \sqrt{N}x\rfloor,\bar{r}^N_x} = (\Gamma(x))^+$. Furthermore, in the limiting $K$-cave barrier $\mathcal{B}^{\infty}$, \eqref{Gamma} holds.
\end{lemma}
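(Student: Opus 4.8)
The plan is to read off the lemma from \thref{etaconv}, the complementary slackness relation \eqref{FCS2}, the dual feasibility \eqref{Fdiscretesuper}, and the convergence of the discrete $K$-cave barriers to $\mathcal{B}^{\infty}$ from \cite{Cox:2016ab}, and then to split into the two cases $x\in\mathrm{supp}(\mu_r)$ and $x\in\mathrm{supp}(\mu_l)$. Throughout write $M(x,s)=\Eps{x,s}{\partial^-_t F(B_\tau,\tau)}$, so $\Gamma(x)=F(x,l(x))+\int_{l(x)}^{r(x)}M(x,s)\,\di s$, and I may assume $l(x)<r(x)$ (if $l(x)=r(x)=K(x)$ then $\Gamma(x)=F(x,K(x))=0$, and since the two discrete boundary points at level $\lfloor\sqrt Nx\rfloor$ then differ by $O(1)$ and both converge to $K(x)$, with complementary slackness forcing $\eta^*=0$ at one of them, one gets $\tilde\eta^*_{\lfloor\sqrt Nx\rfloor,\bar r^N_x}\to F(x,K(x))=0=(\Gamma(x))^+$ directly).

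First I would apply \thref{etaconv} with the time variable taken to be $\bar l^N_x/N\to l(x)$ rather than a fixed $t$, obtaining $\tilde\eta^*_{\lfloor\sqrt Nx\rfloor,\bar l^N_x}-\tilde\eta^*_{\lfloor\sqrt Nx\rfloor,\bar r^N_x}\to-\int_{l(x)}^{r(x)}M(x,s)\,\di s$. Then, since $\tilde\eta^*=\eta^*+\bar F^N$ with $\bar F^N_{\lfloor\sqrt Nx\rfloor,\bar l^N_x}\to F(x,l(x))$ and $\bar F^N_{\lfloor\sqrt Nx\rfloor,\bar r^N_x}\to F(x,r(x))=0$ (the latter because $r(x)\ge K(x)$), this rearranges to $\eta^*_{\lfloor\sqrt Nx\rfloor,\bar r^N_x}-\eta^*_{\lfloor\sqrt Nx\rfloor,\bar l^N_x}\to\Gamma(x)$.

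Next I would split into cases. If $x\in\mathrm{supp}(\mu_r)$, then for $N$ large the right barrier at level $\lfloor\sqrt Nx\rfloor$ is attained, i.e.\ $q^*_{\lfloor\sqrt Nx\rfloor,\bar r^N_x}>0$, so \eqref{FCS2} gives $\eta^*_{\lfloor\sqrt Nx\rfloor,\bar r^N_x}=0$; hence $\tilde\eta^*_{\lfloor\sqrt Nx\rfloor,\bar r^N_x}=\bar F^N_{\lfloor\sqrt Nx\rfloor,\bar r^N_x}\to0$, and by the previous paragraph $\eta^*_{\lfloor\sqrt Nx\rfloor,\bar l^N_x}\to-\Gamma(x)$, which forces $\Gamma(x)\le0$ since $\eta^*\ge0$ by \eqref{Fdiscretesuper}; thus $(\Gamma(x))^+=0=\lim_N\tilde\eta^*_{\lfloor\sqrt Nx\rfloor,\bar r^N_x}$. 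If instead $x\in\mathrm{supp}(\mu_l)$, then for $N$ large $q^*_{\lfloor\sqrt Nx\rfloor,\bar l^N_x}>0$, so \eqref{FCS2} gives $\eta^*_{\lfloor\sqrt Nx\rfloor,\bar l^N_x}=0$ and $\tilde\eta^*_{\lfloor\sqrt Nx\rfloor,\bar l^N_x}=\bar F^N_{\lfloor\sqrt Nx\rfloor,\bar l^N_x}\to F(x,l(x))$; combining with the first paragraph, $\tilde\eta^*_{\lfloor\sqrt Nx\rfloor,\bar r^N_x}\to\Gamma(x)$, and since $\tilde\eta^*_{\lfloor\sqrt Nx\rfloor,\bar r^N_x}=\eta^*_{\lfloor\sqrt Nx\rfloor,\bar r^N_x}+\bar F^N_{\lfloor\sqrt Nx\rfloor,\bar r^N_x}\ge0$ this yields $\Gamma(x)\ge0$, so $\lim_N\tilde\eta^*_{\lfloor\sqrt Nx\rfloor,\bar r^N_x}=\Gamma(x)=(\Gamma(x))^+$. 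This proves the first assertion, and the ``Furthermore'' is then immediate: we have shown $\Gamma(x)\ge0$ for every $x\in\mathrm{supp}(\mu_l)$ and $\Gamma(x)\le0$ for every $x\in\mathrm{supp}(\mu_r)$, which is (pointwise on the supports, hence a fortiori $\mu_l$- and $\mu_r$-almost surely) precisely condition \eqref{Gamma}.

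The hard part is the two places where the finite-$N$ combinatorics meets the limit. First, \thref{etaconv} is stated at a fixed time $t$, whereas I need it at the $N$-dependent time $\bar l^N_x/N$; I would handle this either by re-running the sandwich estimate in the proof of \thref{etaconv} with $\lfloor Nt\rfloor$ replaced by $\bar l^N_x$ and absorbing the $O(|\bar l^N_x/N-l(x)|)$ error into its already-vanishing remainder terms, or by inserting a fixed intermediate $t\in(l(x),r(x))$, applying \thref{etaconv} on $[t,r(x)]$ and the same estimate on $[l(x),t]$. Second, and the real obstacle, is confirming that complementary slackness actually activates at the discrete boundary, i.e.\ that $q^*_{\lfloor\sqrt Nx\rfloor,\bar r^N_x}>0$ for all large $N$ when $x\in\mathrm{supp}(\mu_r)$ (and symmetrically for $\mu_l$), ruling out the degenerate alternative noted after \thref{Kcaveshape} in which no mass is stopped on the relevant side of $K$ at that level. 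This is where the convergence of the discrete barriers to the $\mu$-embedding $\mathcal{B}^{\infty}$ from \cite{Cox:2016ab} does the work: positivity of $\mu_r$ in every neighbourhood of $x$ forces mass onto the right barrier at grid levels approaching $\lfloor\sqrt Nx\rfloor$, hence $q^*>0$ there eventually. Everything else is routine rearrangement of the convergences already in hand.
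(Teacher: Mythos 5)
Your argument is essentially the paper's own proof: apply \thref{etaconv} across $[\bar l^N_x,\bar r^N_x]$ to identify $\Gamma(x)=\lim_{N}\bigl(\eta^*_{\lfloor\sqrt Nx\rfloor,\bar r^N_x}-\eta^*_{\lfloor\sqrt Nx\rfloor,\bar l^N_x}\bigr)$, and then use \eqref{FCS2} together with the nonnegativity \eqref{Fdiscretesuper} in the two cases $x\in\mathrm{supp}(\mu_r)$ and $x\in\mathrm{supp}(\mu_l)$ to pin down the sign of $\Gamma(x)$ and the limit of $\tilde\eta^*_{\lfloor\sqrt Nx\rfloor,\bar r^N_x}$. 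The two technical points you flag (using \thref{etaconv} at the $N$-dependent time $\bar l^N_x$, and verifying that $q^*>0$ at the discrete boundary so that \eqref{FCS2} actually applies) are passed over silently in the paper, so your treatment is, if anything, more careful than the original.
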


We have shown that the condition \eqref{Gamma} holds in our limiting stopping region, and all that remains to show is that with our functions $G$ and $H$ from \thref{Opt} there is no duality gap.

\begin{theorem}\thlabel{fullconvergence}
With $G(x,t)$, $H(x)$ defined as in \thref{Opt},
\begin{equation*}
\sup_{\tau,B_{\tau}\sim\mu} \Ep{F(B_{\tau},\tau)}=\Ep{G(B_{\tau},\tau)+H(B_{\tau})}.
\end{equation*}
\end{theorem}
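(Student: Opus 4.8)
The statement is essentially already contained in the proof of \thref{Opt}, so the plan is to assemble the pieces established above rather than to prove anything genuinely new. First I would record that the hypotheses of \thref{Opt} are met by the barrier produced in Section~\ref{discrete}: by \cite[Lemma~5.5, Lemma~5.6]{Cox:2016ab} the limit $\mathcal{B}^{\infty}$ of the discrete optimal stopping regions is a $K$-cave barrier that embeds $\mu$ and whose hitting time $\taud$ attains $\sup_{\tau,\,B_{\tau}\sim\mu}\Ep{F(B_{\tau},\tau)}$, and by \thref{etarightconv} this barrier also satisfies \eqref{Gamma}. Hence \thref{Glemma} and \thref{Opt} apply to $\mathcal{B}^{\infty}$, producing functions $G$ and $H$ for which \eqref{D1}--\eqref{D4} hold.

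Next I would re-read the displayed chain in the proof of \thref{Opt} as a strong-duality identity. The lower estimate is that, for every stopping time $\tau'$ with $B_{\tau'}\sim\mu$, condition \eqref{D1} gives $F(B_{\tau'},\tau')\le G(B_{\tau'},\tau')+H(B_{\tau'})$, while \eqref{D2} together with $B_{\tau'}\le x^{*}$ (so $\tau'\le H_{x^{*}}$) gives $\Ep{G(B_{\tau'},\tau')}\le G(B_0,0)$, whence $\Ep{F(B_{\tau'},\tau')}\le G(B_0,0)+\int_{\R}H(x)\,\mu(\di x)$. The upper estimate is that, because $\mathcal{B}^{\infty}$ satisfies \eqref{Gamma}, condition \eqref{D3} holds, so $\Ep{F(B_{\taud},\taud)}=\Ep{G(B_{\taud},\taud)+H(B_{\taud})}$, and \eqref{D4} with the integrability of $A_{\taud}$ from \thref{Glemma} gives $\Ep{G(B_{\taud},\taud)}=G(B_0,0)$; hence $\Ep{F(B_{\taud},\taud)}=G(B_0,0)+\int_{\R}H(x)\,\mu(\di x)$. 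Combining, all three of $\Ep{F(B_{\taud},\taud)}$, $\sup_{\tau,\,B_{\tau}\sim\mu}\Ep{F(B_{\tau},\tau)}$ and $G(B_0,0)+\int_{\R}H(x)\,\mu(\di x)$ coincide, which is the claim with $\tau=\taud$. Since the lower estimate uses only \eqref{D1}, \eqref{D2} and the now-established value $G(B_0,0)+\int_{\R}H(x)\,\mu(\di x)=\sup_{\tau,\,B_{\tau}\sim\mu}\Ep{F(B_{\tau},\tau)}$, the same argument in fact gives $\Ep{G(B_{\tau},\tau)+H(B_{\tau})}=\sup_{\tau,\,B_{\tau}\sim\mu}\Ep{F(B_{\tau},\tau)}$ for \emph{any} optimiser $\tau$ of \eqref{OptSEP}.

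Finally, to make the ``no duality gap'' reading transparent I would connect this with the discrete picture of Section~\ref{dualconvergence}: the discrete strong duality \cite[Theorem~3.2]{Cox:2016ab} gives that the optimal values of $\mathcal{P}^{N}$ and $\mathcal{D}^{N}$ agree, $\mathrm{P}^{N}\to\sup_{\tau,\,B_{\tau}\sim\mu}\Ep{F(B_{\tau},\tau)}$ is already known, and \thref{etaconv} together with \thref{etarightconv} shows $\tilde{\eta}^{*}_{\lfloor\sqrt{N}x\rfloor,\lfloor Nt\rfloor}\to G^{*}(x,t)+(\Gamma(x))^{+}=G(x,t)+H(x)$, so the discrete dual values converge to $G(B_0,0)+\int_{\R}H(x)\,\mu(\di x)$, consistently with the identity just proved. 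I expect the only genuinely delicate point to be the integrability and uniform integrability needed to pass from the supermartingale property of $G(B_{t},t)$ up to $H_{x^{*}}$ and the martingale property of $G(B_{t\wedge\taud},t\wedge\taud)$ to the corresponding statements at the (possibly unbounded) times $\tau'$ and $\taud$; as in \thref{Glemma}, this rests on $\Ep{|G^{*}(B_t,t)|}\le\Ep{F(B_t,0)}<\infty$ and the integrability of $A_{\taud}$, with the domination $|G^{*}(B_t,t)|\le F(B_t,0)$ supplying the uniform integrability along $t\wedge\taud\to\taud$.
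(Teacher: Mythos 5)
Your proposal is correct (at the same level of rigour as the paper itself on the optional-stopping and integrability points you flag), but it takes a genuinely different route from the paper's proof. You reduce the identity to \thref{Opt}: since the limiting barrier $\mathcal{B}^{\infty}$ embeds $\mu$, its hitting time is an optimiser of \eqref{OptSEP}, and the second assertion of \thref{etarightconv} says \eqref{Gamma} holds for it, the equality chain in the proof of \thref{Opt}, namely $\Ep{F(B_{\taud},\taud)}=\Ep{G(B_{\taud},\taud)+H(B_{\taud})}=G(B_0,0)+\int H(x)\,\mu(\di x)\geq \Ep{F(B_{\tau'},\tau')}$ for every embedding $\tau'$, immediately gives the theorem; your remark that $\Ep{G(B_{\tau},\tau)+H(B_{\tau})}$ takes the same value along any optimiser (by \eqref{D1} on one side and the supermartingale bound on the other) is also correct, and there is no circularity since \thref{Opt}, \thref{Glemma}, \thref{etaconv} and \thref{etarightconv} are all proved independently of \thref{fullconvergence}. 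The paper argues differently: it identifies the limit of the \emph{discrete dual values}, decomposing $\tilde{\eta}^{*}$ into a martingale part converging to $G$ and a time-independent part converging to $H$, performing a double summation by parts on $\sum_{j}\nu_j U_j$, approximating the potential $U_{\delta_0}-U_{\mu}$ by smooth functions via Stone--Weierstrass and passing to the limit by monotone convergence to get $\mathrm{D}^N\rightarrow G(B_0,0)+\Ep{H(B_{\tau})}$, and then concludes from the discrete strong duality $\mathrm{P}^N=\mathrm{D}^N$ together with $\mathrm{P}^N\rightarrow\sup_{\tau}\Ep{F(B_{\tau},\tau)}$. What your shortcut buys is brevity: once \thref{etarightconv} is available, the identity is a corollary of the sufficiency theorem. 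What the paper's longer computation buys is the actual content Section~\ref{dualconvergence} is after, namely that the continuous superhedge $(G,H)$ is the limit of the discrete dual optimisers, including the convergence of the static multipliers $\nu^{*}$ (through the potential) to the $H$-part of the portfolio; that dual-attainment picture does not follow from your argument, and note also that the paper's route uses only the first assertion of \thref{etarightconv} (the identification of the boundary limit with $(\Gamma(x))^{+}$), whereas yours relies essentially on \eqref{Gamma} itself holding for $\mathcal{B}^{\infty}$. Finally, be aware that both proofs establish the statement only for the $G$, $H$ built from the limiting barrier $\mathcal{B}^{\infty}$, so your version neither loses nor gains anything regarding the scope needed in the subsequent necessity argument.
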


\begin{proof}
By \thref{etaconv} and \thref{etarightconv}, $\lim_{N\rightarrow\infty} \tilde{\eta}^*_{\lfloor \sqrt{N}x\rfloor,\lfloor Nt\rfloor}=G^*(x,t)+(\Gamma(x))^+=G(x,t) + H(x)$ for $G$, $H$ as in \thref{Opt} (for $x\notin\mathrm{supp}(\mu_l)\cup\mathrm{supp}(\mu_r)$ we can ensure this by our freedom of choice of $H(x)$). We can write $\tilde{\eta}^*_{j,t}=\tilde{\eta}^{*,G}_{j,t}+\tilde{\eta}^{*,H}_{j,t}$ such that $\tilde{\eta}^{*,G}_{j,t}$ is a martingale and $\lim_{N\rightarrow\infty} \tilde{\eta}^{*,G}_{\lfloor \sqrt{N}x\rfloor,\lfloor Nt\rfloor}=G(x,t)$ for any $(x,t)$. Then clearly $\lim_{N\rightarrow\infty} \tilde{\eta}^{*,H}_{\lfloor \sqrt{N}x\rfloor,\lfloor Nt\rfloor}=H(x)$ for any $(x,t)$, and so $\tilde{\eta}^{*,H}_{j,t}=\tilde{\eta}^{*,H}_j$ is independent of $t$. 

When $N$ is sufficiently large, for every $j\neq j^N_0,j^N_L$ there is some $t$ such that $\pjt>0$, so from \eqref{FCS1} we have
\begin{equation*}
\nu_j= \half\left(\tilde{\eta}^*_{j+1,t+1}+\tilde{\eta}^*_{j-1,t+1}\right)-\tilde{\eta}^*_{j,t} =\half\left(\tilde{\eta}^{*,H}_{j+1}+\tilde{\eta}^{*,H}_{j+1}\right)-\tilde{\eta}^{*,H}_{j}.
\end{equation*}

From the ideas in \cite[Section~3.2]{Cox:2016ab} we suspect that $N\nu_{\lfloor \sqrt{N}x\rfloor} \rightarrow\half H''(x)$ as $N\rightarrow\infty$. Since we cannot argue the convergence of derivatives, the corresponding summation is
\begin{align*}
\sum_{m=1}^{i} \sum_{k=1}^{m} \nu_{j_k} &= \sum_{m=1}^{i} \sum_{k=1}^{m} \half\left(\tilde{\eta}^{*,H}_{j_{k+1}}+\tilde{\eta}^{*,H}_{j_{k-1}}\right)-\tilde{\eta}^{*,H}_{j_k} \\
&= \sum_{m=1}^{i} \left(\half \sum_{k=2}^{m+1} \tilde{\eta}^{*,H}_{j_k} +\half \sum_{k=0}^{m-1} \tilde{\eta}^{*,H}_{j_k} -\half \sum_{k=1}^{m} \tilde{\eta}^{*,H}_{j_k}\right) \\
&= \half \sum_{m=1}^i \left( \left(\tilde{\eta}^{*,H}_{j_{m+1}}-\tilde{\eta}^{*,H}_{j_m}\right)-\left(\tilde{\eta}^{*,H}_{j_1}-\tilde{\eta}^{*,H}_{j_0}\right)\right) \\
&=\half\left(\tilde{\eta}^{*,H}_{j_{i+1}}-\tilde{\eta}^{*,H}_{j_1}\right) -\half i\left(\tilde{\eta}^{*,H}_{j_1}-\tilde{\eta}^{*,H}_{j_0}\right),
\end{align*}
and in particular,
\begin{equation*}
\lim_{N\rightarrow\infty} \sum_{m=1}^{\lfloor \sqrt{N}x\rfloor} \sum_{j=j_0}^{m} \nu_{j} = \half\left(H(x)-H(x_*)\right) -\half \lim_{N\rightarrow\infty} \lfloor \sqrt{N}x\rfloor \left(\tilde{\eta}^{*,H}_{j_1}-\tilde{\eta}^{*,H}_{j_0}\right).
\end{equation*}

Our aim is to rewrite $\sum_{j\in\mathcal{J}'} \nu_j U_j$ to incorporate this double sum by an integration by parts type argument and work instead with the derivatives of $U$. Let $V_{j_i}=U_{j_{i+1}}-U_{j_i}$, $W_{j_i}=V_{j_{i+1}}-V_{j_i}$ for $i=0,\ldots,L-1$, $V_{j_L}=W_{j_L}=0$, and $\nu_0=0$. Then, noting that $U_{j_L}=0$,
\begin{align*}
\sum_{j\in\mathcal{J}'} \nu_j U_j&=\sum_{i=1}^{L-1} \left(\sum_{k=0}^i \nu_{j_k} - \sum_{k=0}^{i-1} \nu_{j_k} \right) U_{j_i} \\
&=- \sum_{i=1}^{L-1} \left(\sum_{k=0}^i \nu_{j_k}\right)V_{j_i} + \left(\sum_{k=1}^{L-1} \nu_{j_k}\right)U_{j_L} \\
&=- \sum_{i=1}^{L-1} \left(\sum_{m=0}^i \sum_{k=0}^m \nu_{j_k} - \sum_{m=0}^{i-1} \sum_{k=0}^m \nu_{j_k}\right) V_{j_i} \\
&=\sum_{i=1}^{L-2} \left( \sum_{m=1}^i \sum_{k=1}^m \nu_{j_k}\right) W_{j_i} -\left(\sum_{i=1}^{L-1} \sum_{k=1}^i \nu_{j_k}\right) V_{j_{L-1}}.
\end{align*}

Substituting in our expression for the double summation of $\nu$, the first term becomes
\begin{align*}
\sum_{i=1}^{L-2} \left( \sum_{m=1}^i \sum_{k=1}^m \nu_{j_k}\right) W_{j_i}& = \sum_{i=1}^{L-2}\left(\half\left(\tilde{\eta}^{*,H}_{j_{i+1}}-\tilde{\eta}^{*,H}_{j_1}\right) -\half i\left(\tilde{\eta}^{*,H}_{j_1}-\tilde{\eta}^{*,H}_{j_0}\right)\right)W_{j_i}\\
&= \sum_{i=1}^{L-2} \half\left(\tilde{\eta}^{*,H}_{j_{i+1}}-\tilde{\eta}^{*,H}_{j_1}\right)W_{j_i} -\sum_{i=1}^{L-2} \half i\left(\tilde{\eta}^{*,H}_{j_1}-\tilde{\eta}^{*,H}_{j_0}\right)\left(V_{j_{i+1}}-V_{j_i}\right) \\
&= \sum_{i=1}^{L-2} \half\left(\tilde{\eta}^{*,H}_{j_{i+1}}-\tilde{\eta}^{*,H}_{j_1}\right)W_{j_i} - \half\left(\tilde{\eta}^{*,H}_{j_1}-\tilde{\eta}^{*,H}_{j_0}\right) \left((L-1)V_{j_{L-1}}-\sum_{i=1}^{L-1} V_{j_i}\right) \\
&= \sum_{i=1}^{L-2} \half\left(\tilde{\eta}^{*,H}_{j_{i+1}}-\tilde{\eta}^{*,H}_{j_1}\right)W_{j_i} - \half\left(\tilde{\eta}^{*,H}_{j_1}-\tilde{\eta}^{*,H}_{j_0}\right) \left(V_{j_0}+(L-1)V_{j_{L-1}}\right).
\end{align*}
For the second term,
\begin{align*}
\left(\sum_{i=1}^{L-1} \sum_{k=1}^i \nu_{j_k}\right) V_{j_{L-1}}&= \half\left(\tilde{\eta}^{*,H}_{j_{L}}-\tilde{\eta}^{*,H}_{j_1}\right) V_{j_{L-1}}-\half \left(\tilde{\eta}^{*,H}_{j_1}-\tilde{\eta}^{*,H}_{j_0}\right) (L-1)V_{j_{L-1}},
\end{align*}
and so
\begin{align*}
\sum_{j\in\mathcal{J}'} \nu_j U_j&= \sum_{i=1}^{L-2} \half\left(\tilde{\eta}^{*,H}_{j_{i+1}}-\tilde{\eta}^{*,H}_{j_1}\right)W_{j_i}-\half\left(\tilde{\eta}^{*,H}_{j_1}-\tilde{\eta}^{*,H}_{j_0}\right) V_{j_0} -\half\left(\tilde{\eta}^{*,H}_{j_{L}}-\tilde{\eta}^{*,H}_{j_1}\right) V_{j_{L-1}}.
\end{align*}

To work with the derivatives of the potential, we now approximate it by smooth functions. From our choice of $U^N$, for each $x$ we know $\frac{1}{\sqrt{N}}U^N(\lfloor \sqrt{N}x\rfloor)\rightarrow U_{\delta_0}(x) - U_{\mu}(x)$, the difference in the potential functions of the distributions $\delta_0$ and $\mu$. These function are continuous and concave so by the Stone-Weierstrass theorem there exists a decreasing sequence of functions, $(\tilde{U}^n)_n$, in $C^{\infty}$ converging uniformly to $U_{\delta_0} - U_{\mu}$ with $\tilde{U}^n(x_*)=\tilde{U}^n(x^*)=0$ for all $n$. For a given $n$ we can find discrete approximations $\tilde{U}^{n,N}$ of $\tilde{U}^n$ (and the associated $\tilde{V}^{n,N}$, $\tilde{W}^{n,N}$) such that $\frac{1}{\sqrt{N}}\tilde{U}^{n,N}_{\lfloor \sqrt{N}x\rfloor}\rightarrow\tilde{U}^n(x)$, $\tilde{V}^{n,N}_{\lfloor \sqrt{N}x\rfloor}\rightarrow\frac{\di \tilde{U}^n}{\di x}(x)$, and $\sqrt{N}\tilde{W}^{n,N}_{\lfloor \sqrt{N}x\rfloor}\rightarrow\frac{\di^2 \tilde{U}^n}{\di x^2}(x)$ for all $x$ as $N\rightarrow\infty$. We can also, without loss of generality, choose $\tilde{U}^{n,N}$ such that $\tilde{U}^{n,N}_{j_0}=\tilde{U}^{n,N}_{j_L}=0$ and $\tilde{U}^{n,N}\geq \tilde{U}^n$.

Then, by the above,
\begin{align*}
\sum_{j\in\mathcal{J}'} \nu_j \tilde{U}^{n,N}_j&=\sum_{i=1}^{L-2} \half\left(\tilde{\eta}^{*,H}_{j_{i+1}}-\tilde{\eta}^{*,H}_{j_1}\right)\tilde{W}^{n,N}_{j_i}  -\half\left(\tilde{\eta}^{*,H}_{j_1}-\tilde{\eta}^{*,H}_{j_0}\right) \tilde{V}^{n,N}_{j_0} - \half\left(\tilde{\eta}^{*,H}_{j_{L}}-\tilde{\eta}^{*,H}_{j_1}\right) \tilde{V}^{n,N}_{j_{L-1}} \\
&=\int_{\frac{1}{\sqrt{N}}}^{\frac{L-2}{\sqrt{N}}} \half\left(\tilde{\eta}^{*,H}_{\lfloor \sqrt{N}x\rfloor+1}-\tilde{\eta}^{*,H}_{j_1}\right)\sqrt{N}\tilde{W}^{n,N}_{\lfloor \sqrt{N}x\rfloor} \di x-\half\left(\tilde{\eta}^{*,H}_{j_1}-\tilde{\eta}^{*,H}_{j_0}\right) \tilde{V}^{n,N}_{j_0}\\
&\qquad\qquad\qquad\qquad\qquad\qquad\qquad\qquad\qquad\qquad\qquad\qquad\qquad- \half\left(\tilde{\eta}^{*,H}_{j_{L}}-\tilde{\eta}^{*,H}_{j_1}\right) \tilde{V}^{n,N}_{j_{L-1}} \\
&\rightarrow \int \half \left(H(x)-H(x_*)\right)\frac{\di^2 \tilde{U}^n}{\di x^2}(x)\dx -\half\left(H(x^*)-H(x_*)\right)\frac{\di \tilde{U}^n}{\di x}(x^*), \quad \text{as }N\rightarrow\infty.
\end{align*}
Since $H$ is convex, it is differentiable almost everywhere and has a second derivative in the sense of distributions. Using integration by parts again,
\begin{align*}
\int \half \left(H(x)-H(x_*)\right)\frac{\di^2 \tilde{U}^n}{\di x^2}(x)\dx&= \int \half H(x)\frac{\di^2 \tilde{U}^n}{\di x^2}(x) \dx - \half H(x_*)\left(\frac{\di \tilde{U}^n}{\di x}(x^*)-\frac{\di \tilde{U}^n}{\di x}(x_*)\right) \\
&=\int \half H''(x)\tilde{U}^n(x) \dx +\half\left(H(x^*)-H(x_*)\right)\frac{\di \tilde{U}^n}{\di x}(x^*).
\end{align*}
Therefore,
\begin{equation*}
\sum_{j\in\mathcal{J}'} \nu_j \tilde{U}^{n,N}_j\rightarrow \int \half H''(x)\tilde{U}^n(x) \dx \quad \text{as }N\rightarrow\infty,
\end{equation*}
and so by monotone convergence
\begin{equation*}
\lim_{n,N\rightarrow\infty}\sum_{j\in\mathcal{J}'} \nu_j \tilde{U}^{n,N}_j= \int \half H''(x) U(x) \dx 
=\int \half H''(x) \left(U_{\delta_0}(x)-U_{\mu}(x)\right) \dx
=\Ep{H(B_{\tau})}-H(B_0).
\end{equation*}

By our choice of approximation we know that $\lim_{N\rightarrow\infty} U^N= U\leq \tilde{U}^n =\lim_{N\rightarrow\infty} \tilde{U}^{n,N}$ for all $n$, and so without loss of generality we can choose $\tilde{U}^{n,N}\geq U^N$ for large $n,N$. Then, since $\nu_j\geq 0$ for all $j$, by monotone convergence it follows that
\begin{equation*}
\left| \sum_{j\in\mathcal{J}'} \nu_j \tilde{U}^{n,N}_j -\sum_{j\in\mathcal{J}'} \nu_j U^n \right| \rightarrow 0, \quad \text{as }n,N\rightarrow\infty.
\end{equation*}
Finally recall that $\mathrm{D}^N=\sum_{j\in\mathcal{J}'} \nu_j U_j + \half\left(\eta_{j^*+1,1}+\eta_{j^*-1,1}\right) +\half\left(\bar{F}^N_{j^*+1,1}+\bar{F}^N_{j^*-1,1}\right)$, so
\begin{equation*}
\lim_{N\rightarrow\infty}\mathrm{D}^N= \Ep{H(B_{\tau})}-H(B_0) + G^*(B_0,0) = \Ep{H(B_{\tau})} + G(B_0,0) = \Ep{G(B_{\tau},\tau)+H(B_{\tau})}.
\end{equation*}
Then by the above and the results of \cite{Cox:2016ab},
\begin{equation*}
\sup_{\tau,B_{\tau}\sim\mu} \Ep{F(B_{\tau},\tau)}=\lim_{N\rightarrow\infty} \mathrm{P}^N=\lim_{N\rightarrow\infty} \mathrm{D}^N=G(B_0,0)+\Ep{H(B_{\tau})}.
\end{equation*}
\end{proof}

\begin{theorem}
For a $K$-cave stopping time $\tau$ given by curves $l$, $r$, the condition \eqref{Gamma} is necessary for optimality.
\end{theorem}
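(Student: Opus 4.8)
The plan is to read off necessity from the strong-duality package built in this section. Write $\mathcal{B}^\infty$ for the $K$-cave barrier obtained in the limit of the discrete optimisers, with hitting time $\tau_0$, curves $l_0,r_0$, potential $M_0$ and associated $\Gamma_0$; by \thref{etarightconv} this barrier satisfies \eqref{Gamma}, and by \thref{fullconvergence} the functions $G_0,H_0$ built from it as in \thref{Opt} attain the primal value $\sup_{\tau,B_\tau\sim\mu}\Ep{F(B_\tau,\tau)}$ with no duality gap. Now suppose $\tau$ is a $K$-cave stopping time with curves $l,r$ that embeds $\mu$ and is optimal. Feeding $\tau$ through the superhedge $(G_0,H_0)$ and using that $F\le G_0+H_0$ everywhere, that $G_0(B_t,t)$ is a supermartingale with $\Ep{G_0(B_\tau,\tau)}\le G_0(B_0,0)$ (as in the proof of \thref{Opt}, since $B_\tau\le x^*$ a.s.), and that $\Ep{H_0(B_\tau)}=\int H_0\,\di\mu$ is common to all embeddings of $\mu$, we get the chain
\[
\sup_{\tau',B_{\tau'}\sim\mu}\Ep{F(B_{\tau'},\tau')}=\Ep{F(B_\tau,\tau)}\le\Ep{G_0(B_\tau,\tau)+H_0(B_\tau)}\le G_0(B_0,0)+\int H_0(x)\,\mu(\di x),
\]
whose two ends coincide. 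Hence every inequality is an equality, so $F(B_\tau,\tau)=G_0(B_\tau,\tau)+H_0(B_\tau)$ $\P$-a.s.\ and $G_0(B_{t\wedge\tau},t\wedge\tau)$ is a martingale: this is the continuous-time complementary slackness between the optimal primal $\tau$ and the optimal dual $(G_0,H_0)$.

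It then remains to turn the pathwise identity $F(B_\tau,\tau)=G_0(B_\tau,\tau)+H_0(B_\tau)$ into \eqref{Gamma} for $l,r$. Writing $D(x,t):=G_0(x,t)+H_0(x)-F(x,t)=G_0^*(x,t)+(\Gamma_0(x))^+-F(x,t)\ge0$ and comparing $\partial_t^-F$ with $M_0$ (they agree on $\{t\le l_0(x)\}$, then $\partial_t^-F\le M_0\le 0$ on $\{l_0(x)\le t\le K(x)\}$ and $\partial_t^-F=0\ge M_0$ on $\{t\ge K(x)\}$), one sees that for fixed $x$ the function $t\mapsto D(x,t)$ is constant equal to $(\Gamma_0(x))^-$ on $[0,l_0(x)]$, unimodal with interior peak at $K(x)$, and constant equal to $(\Gamma_0(x))^+$ on $[r_0(x),\infty)$; strict monotonicity of $M_0$ just above $l_0(x)$ and just below $r_0(x)$ (there is positive probability of leaving through the opposite boundary) pins the zero set of $D(x,\cdot)$ down to $[0,l_0(x)]$ when $\Gamma_0(x)\ge0$ and to $[r_0(x),\infty)$ when $\Gamma_0(x)<0$. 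Since $\tau$ stops at $(x,l(x))$ with $l(x)\le K(x)<r_0(x)$ for $\mu_l$-a.e.\ $x$, and at $(x,r(x))$ with $r(x)\ge K(x)>l_0(x)$ for $\mu_r$-a.e.\ $x$, this forces $\Gamma_0(x)\ge0$ $\mu_l$-a.s.\ and $\Gamma_0(x)\le0$ $\mu_r$-a.s., together with $l\le l_0$ on $\mathrm{supp}(\mu_l)$ and $r\ge r_0$ on $\mathrm{supp}(\mu_r)$; combined with the fact that $\tau$ and $\tau_0$ embed the same $\mu$, this in turn forces $l=l_0$ $\mu_l$-a.s.\ and $r=r_0$ $\mu_r$-a.s.\ (up to the inconsequential non-uniqueness of Section~\ref{nonunique}), so $M=M_0$ and $\Gamma=\Gamma_0$ on $\mathrm{supp}(\mu)$ and \eqref{Gamma} holds for $\tau$.

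The main obstacle is this last matching step — excluding an optimal $K$-cave barrier whose boundaries genuinely differ from $l_0,r_0$ on a set of positive $\mu$-mass — which relies on the strict monotonicity of $M_0$ near the boundaries and on careful bookkeeping of how two $\mu$-embedding barriers may differ only where they embed no mass. A cleaner, fully rigorous alternative is to argue at the discrete level: for each $N$ the linear program $\mathcal{P}^N$ enjoys strong duality and genuine complementary slackness \eqref{FCS1}--\eqref{FCS3}, so \emph{any} optimal $p^{*,N}$ — in particular the discretisation of an optimal continuous $\tau$ — satisfies the discrete analogue of \eqref{Gamma} via the computation preceding \thref{etarightconv}; letting $N\to\infty$ as in \thref{etaconv} and \thref{etarightconv} then yields \eqref{Gamma} for $\tau$. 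Contrapositively, a $K$-cave barrier violating \eqref{Gamma} is not optimal, which is the claim.
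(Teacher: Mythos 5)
Your first paragraph is sound: feeding an optimal $\tau$ through the superhedge $(G_0,H_0)$ built from the limiting barrier $\mathcal{B}^{\infty}$ does force $F(B_\tau,\tau)=G_0(B_\tau,\tau)+H_0(B_\tau)$ a.s. The genuine gap is in converting this into \eqref{Gamma} for $\tau$'s \emph{own} data $(l,r,M,\Gamma)$, which is what the theorem asserts. Your zero-set analysis of $D(x,\cdot)$ at best yields sign information for $\Gamma_0$ against $\tau$'s stopping measures, and even the inequalities $l\le l_0$ on $\mathrm{supp}(\mu_l)$ and $r\ge r_0$ on $\mathrm{supp}(\mu_r)$ rely on a strict monotonicity of $M_0$ just off the boundaries that you assert but do not prove (it amounts to a positive probability of exiting strictly on the other side of $K$, which can fail where the barriers pinch). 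The decisive step, that $l\le l_0$, $r\ge r_0$ $\mu$-a.e.\ together with both barriers embedding $\mu$ forces $l=l_0$, $r=r_0$ $\mu$-a.e.\ and hence $M=M_0$, $\Gamma=\Gamma_0$, is precisely a uniqueness statement for $K$-cave barriers: it is not proved in your sketch, it is of the same order of difficulty as the paper's separate uniqueness section (which only obtains agreement on $\mathcal{S}^{\tau}$ and is itself derived \emph{after} this theorem, using \thref{fullconvergence}), and even granting $l=l_0$, $r=r_0$ $\mu$-a.e., the jump to $M=M_0$ on all of $[l(x),r(x)]$ is unjustified, since $M(x,s)$ depends on the barrier away from $\mathrm{supp}(\mu)$ and at points $(x,s)$ that need not be reachable. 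The proposed discrete ``fully rigorous alternative'' does not repair this: the discretisation of an optimal continuous $\tau$ is not an exact optimiser of $\mathcal{P}^N$, so the complementary slackness conditions \eqref{FCS1}--\eqref{FCS3} simply do not apply to it, and the limits in \thref{etaconv} and \thref{etarightconv} are taken along the discrete optimisers $p^{*,N}$, whose barriers converge to $\mathcal{B}^{\infty}$, not to the barrier of the given $\tau$.

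The paper's proof avoids all of this by never comparing $\tau$ with $\mathcal{B}^{\infty}$. It takes $G$, $H$ built from $\tau$'s own curves as in \thref{Opt} (neither $G+H\ge F$ nor the martingale/supermartingale properties require \eqref{Gamma}), invokes \thref{fullconvergence} for these functions to get $\Ep{G(B_\tau,\tau)+H(B_\tau)}=\sup_{\tau',\,B_{\tau'}\sim\mu}\Ep{F(B_{\tau'},\tau')}$, and then observes that if \eqref{Gamma} fails then $G(x,r(x))+H(x)>F(x,r(x))$ where $\Gamma(x)>0$ and $G(x,l(x))+H(x)>F(x,l(x))$ where $\Gamma(x)<0$, so $\tau$ stops with positive $\mu_l$- or $\mu_r$-mass where $G+H>F$ strictly, giving $\Ep{F(B_\tau,\tau)}<\sup_{\tau'}\Ep{F(B_{\tau'},\tau')}$, i.e.\ $\tau$ is not optimal. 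This one-step contrapositive is exactly what removes the barrier-identification problem on which your argument founders; as it stands, your proposal does not constitute a proof.
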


\begin{proof}
By \thref{fullconvergence} our functions $G$ and $H$ give no duality gap. We know that $G(x,t)+H(x)\geq F(x,t)$ everywhere, but also
\begin{align*}
\Gamma(x)>0 &\implies G(x,r(x))+H(x)>F(x,r(x)), \\
\Gamma(x)<0 &\implies G(x,l(x))+H(x)>F(x,l(x)),
\end{align*}
so if \eqref{Gamma} does not hold then $\Ep{F(B_{\tau},\tau)}>G(B_0,0)+\Ep{H(B_{\tau})}$, contradicting \thref{fullconvergence}.
\end{proof}

\subsection{An Additional Property of the Barrier}
We have seen that the linear programming approach to this problem allows us to recover the condition \eqref{Gamma}, but it also reveals additional information about our continuous problem. As mentioned previously, for any dual optimisers $(\nu^*_j,\eta^*_{j,t})\in\R^{L+1}\times l^{\infty}$, the sequence $(\nu^*_j, \eta^T_{j,t})\in\R^{L+1}\times l^{\infty}$, where $\eta^T_{j,t}=\eta^*_{j,t} \iden\{t\leq T\}$, is also dual feasible when $T$ is such that $\bar{F}^N_{j,t}=0$ for $t\geq T$. Furthermore, this new dual solution is also optimal. Since we work on the bounded domain $[x_*,x^*]$, there exists $T^*=\min\{t:\medspace \bar{F}^N_{j,t}=0 \medspace \forall t\geq T, \forall j\}$. Anything that happens after $T^*$ does not affect our payoff, and we therefore have some freedom past this point. We can also see this from our proof of \thref{SGTheorem} if we work on $[x_*,x^*]$. For $t\geq T^*$ we have $L^K_{\infty}(B)=0$, so we have equality in the primary optimisation problem \eqref{SGleq} and require the secondary problem \eqref{SG2} to get the $K$-cave barrier shape. 

In the discrete problem this freedom arises in the following way: if $p_{j,t}>0$ for some $j$ and $t\geq T^*$ then we can stop mass at $(j,t)$, decreasing our local time everywhere (so remaining primal-feasible) without affecting optimality. This allows us to prove the following.

\begin{lemma}
Let $\mu^N_l$ and $\mu^N_r$ be the distributions embedded by our optimal $(p_{j,t})$ to the left and right of $K$ respectively. Then for any $j$,
\begin{equation*}
j\in\mathrm{supp}(\mu^N_l) \medspace \implies \medspace \bar{r}^N_j\leq T^*.
\end{equation*}
In particular, this means that $\mathrm{supp}(\mu^N_r)=\mathrm{supp}(\mu^N)$.
\end{lemma}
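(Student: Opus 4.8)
The plan is to prove the displayed implication by contradiction via a local ``stop‑and‑release'' exchange, and then read off the last sentence as a short corollary. Two facts drive the exchange: (i) beyond $T^*$ the discretised payoff vanishes identically, so any mass still in motion at a time $\ge T^*$ contributes nothing to the objective of $\mathcal P^N$ no matter where it is eventually stopped; and (ii) $j\in\mathrm{supp}(\mu^N_l)$ means a strictly positive amount of mass is stopped at column $j$ to the left of $K$. So suppose $j\in\mathrm{supp}(\mu^N_l)$ while $\bar r^N_j>T^*$. Since $\bar r^N_j$ is the least time exceeding $K(x^N_j)$ after which $p^{*,N}_{j,\cdot}$ is zero, and $\bar r^N_j>T^*\ge K(x^N_j)$, there is $t_1\ge T^*$ (e.g. $t_1=\bar r^N_j-1$) with $p^{*,N}_{j,t_1}>0$, and by hypothesis there is $s_0<K(x^N_j)$ with $q^{*,N}_{j,s_0}>0$.

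The naive move is to transfer the late mass $p^{*,N}_{j,t_1}$ onto the stopped mass at $(j,t_1)$: this keeps the objective ($=\mathrm P^N$) unchanged (all affected payoffs are zero) and stays $\mathcal P^N$‑feasible (every local time $\sum_t p_{j',t}$ only decreases), but it concentrates the embedded measure and so no longer embeds $\mu^N$. One therefore compensates by simultaneously releasing an equal amount $\eps\in\bigl(0,\min\{p^{*,N}_{j,t_1},\,q^{*,N}_{j,s_0}\}\bigr)$ of the mass stopped at $(j,s_0)$ so that it continues: releasing mass at $(j,s_0)$ does not decrease the objective, since $\bar F^N=(\bar h^N-k)_+$ with $\bar h^N$ a martingale and $\bar h^N_{j,s_0}>k$ (because $s_0<K(x^N_j)$), so Jensen's inequality along the martingale continuation of the released mass shows it collects at least $\eps(\bar h^N_{j,s_0}-k)=\eps\,\bar F^N_{j,s_0}$, exactly what it gave up. The downstream bookkeeping for both moves is the tracking/perturbation scheme of \cite[Lemma~4.3,~Theorem~4.2]{Cox:2016ab}. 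Granting that the two moves can be balanced so that the competitor $(\bar p,\bar q)$ is again exactly an embedding of $\mu^N$, we obtain an optimal embedding of $\mu^N$ whose continuation region at column $j$ has been strictly reduced; since our $(p^{*,N})$ is the regular $K$‑cave barrier optimiser of \cite[Theorem~4.2]{Cox:2016ab} — the one embedding $\mu^N$ with the smallest continuation region in the sense of Loynes — this contradicts $\bar r^N_j>T^*$.

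For the final assertion, trivially $\mathrm{supp}(\mu^N_r)\subseteq\mathrm{supp}(\mu^N)=\mathrm{supp}(\mu^N_l)\cup\mathrm{supp}(\mu^N_r)$, so it suffices to show $\mathrm{supp}(\mu^N_l)\subseteq\mathrm{supp}(\mu^N_r)$. Take $x^N_j\in\mathrm{supp}(\mu^N_l)$; by the implication just proved $\bar r^N_j\le T^*<\infty$. If $x^N_j\notin\mathrm{supp}(\mu^N_r)$ then $q^{*,N}_{j,s}=0$ for all $s>K(x^N_j)$, i.e. the right barrier at column $j$ catches no mass; but then in the regular optimiser this (empty) right barrier would be pushed down to $K(x^N_j)$, or removed, without altering the embedding or the objective, contradicting the minimality that makes a genuinely finite $\bar r^N_j$ meaningful. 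Hence $x^N_j\in\mathrm{supp}(\mu^N_r)$, as claimed.

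The step I expect to be the main obstacle is precisely the reconciliation: balancing the ``stop late / release left'' moves so that the competitor again embeds $\mu^N$ and not merely some $\bar\mu^N\preceq_{cx}\mu^N$. Because the late mass and the released left mass land in different columns, the per‑column totals do not automatically cancel, and one must either iterate the exchange, re‑spreading column by column, or invoke more of the complementary‑slackness structure — the dual variables $\nu^*,\eta^*$ effectively vanish past $T^*$, since for $t>T^*$ one gets $\eta^{*}_{j,t}=\tilde\eta^{*}_{j,t}\le 0$ while $\eta^*\ge 0$ — to see that the reconciliation can always be carried out. A secondary point is fixing carefully the sense in which ``our optimal $(p^{*,N})$'' is the regular solution, since it is regularity (minimality of the continuation region among embeddings of $\mu^N$) that converts ``we may modify without loss'' into an actual contradiction.
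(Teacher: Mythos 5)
Your exchange contains the same germ as the paper's argument (freeze mass that is still running after $T^*$, where the payoff vanishes, and release mass stopped strictly to the left of $K$, where $\bar h^N>k$), but the step you yourself flag as the main obstacle --- rebalancing so that the competitor again embeds $\mu^N$ exactly --- is a genuine gap, and it is also the wrong target. In $\mathcal{P}^N$ the embedding is not a feasibility requirement: feasibility only asks for the local-time bounds $\mathbbm{1}\{j=j^*\}+\sum_t p_{j,t}\leq U^N_j$ together with the transition inequalities, so a competitor does not need to embed $\mu^N$; it only needs to be feasible with a larger objective to contradict optimality of $(p^{*,N})$ in the LP. The paper exploits exactly this: it stops $\eps$ of the running mass at $(j,\bar r^N_j-1)$ (objective unchanged, since all affected payoffs are zero past $T^*$, and local times only decrease) and simultaneously releases $\eps$ of the mass stopped at $(j,\bar l^N_j-1)$, stopping it after \emph{exactly one step} at $(j\pm1,\bar l^N_j)$; feasibility is then immediate (no new $p$ appears at columns $j\pm1$, and $\sum_t\bar p_{j,t}\leq\sum_t p_{j,t}$), the submartingale property of $\bar F^N$ shows the payoff does not decrease, and iterating the exchange pushes $\bar l^N_j$ up or $\bar r^N_j$ down until either $\bar r^N_j\leq T^*$ or $j\notin\mathrm{supp}(\mu^N_l)$. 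Your version leaves the stopping rule of the released mass unspecified, so even the $U_j$-feasibility of your competitor is unchecked, and the exact re-embedding you then ``grant'' has no mechanism behind it (as you note, the two displaced parcels land in different columns).

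A second, independent problem is the contradiction you draw once the reconciliation is granted: you appeal to $(p^{*,N})$ being ``the regular $K$-cave optimiser with the smallest continuation region in the sense of Loynes''. No such minimality of the discrete optimal continuation region is established in \cite{Cox:2016ab} or in this paper --- Loynes-type regularity appears only later, for the continuous problem, as a normalisation of the left boundary used in the uniqueness discussion --- and the lemma is stated for an arbitrary optimal $(p_{j,t})$, so the conclusion cannot be made to rest on a special choice of optimiser. The correct contradiction is with optimality of the LP value itself, through the payoff gained by the release move. The same unproved minimality is what your argument for the final sentence leans on (``the empty right barrier would be pushed down''); the paper instead obtains $\mathrm{supp}(\mu^N_r)=\mathrm{supp}(\mu^N)$ directly from the iteration (together with the dichotomy recorded earlier in Section 5 that either $q^{*,N}_{j,\bar r^N_j}>0$ or no mass stops to the right of $K$ at $j$), with no appeal to a distinguished regular optimiser.
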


\begin{proof}
Take $j\in\mathrm{supp}(\mu^N_l)$, so $q_{j,\bar{l}^N_j-1}>0$, $p_{j,\bar{l}^N_j-1}=0$ and suppose that $T^*<\bar{r}^N_j<\infty$, so $p_{j,\bar{r}^N_j}=0$ and $p_{j,\bar{r}^N_j-1}>0$. Let $\eps=\min\{q_{j,\bar{l}^N_j-1},p_{j,\bar{r}^N_j-1}\}$. We define new primal variables corresponding to stopping $\eps$ of mass at $(j,\bar{r}^N_j-1)$, and releasing $\eps$ of mass at $(j,\bar{l}^N_j-1)$ which we stop after one step. Let
\begin{align*}
\bar{p}_{j,\bar{l}^N_j-1}&=\eps, \quad &\bar{q}_{j,\bar{l}^N_j-1}&=q_{j,\bar{l}^N_j-1}-\eps,\\
\bar{p}_{j+1,\bar{l}^N_j}&=p_{j+1,\bar{l}^N_j}, \quad &\quad\bar{q}_{j+1,\bar{l}^N_j}&=\bar{q}_{j+1,\bar{l}^N_j}+\frac{\eps}{2}, \\
\bar{p}_{j-1,\bar{l}^N_j}&=p_{j-1,\bar{l}^N_j}, \quad &\quad\bar{q}_{j-1,\bar{l}^N_j}&=\bar{q}_{j-1,\bar{l}^N_j}+\frac{\eps}{2}, \\
\bar{p}_{j,\bar{r}^N_j-1}&=p_{j,\bar{r}^N_j-1}-\eps, \quad &\bar{q}_{j,\bar{r}^N_j-1}&=q_{j,\bar{r}^N_j-1}+\eps,\\
\bar{p}_{j,r}&=p_{j,r}-\tilde{p}_{j,r} \quad \forall r>s, \quad &\bar{q}_{j,r}&=q_{j,r}+\tilde{q}_{j,r} \quad \forall r>s,\\
\bar{p}_{j,r}&=p_{j,r} \quad \text{otherwise}, \quad &\bar{q}_{j,r}&=q_{j,r} \quad \text{otherwise},
\end{align*}
where the $\tilde{p}_{j,r}$ are defined as in \thref{Kcaveshape} with $s=\bar{r}^N_j-1$. We can check that these new variables are primal feasible, noting in particular that $\sum_t \bar{p}_{j,t} \leq \sum_t p_{j,t}$. Furthermore, $\bar{F}^N$ is a submartingale, and so by releasing mass at $(j,\bar{l}^N_j-1)$ we improve our payoff: $\sum \bar{F}^N_{j,t}\bar{q}_{j,t}\geq\sum \bar{F}^N_{j,t}q_{j,t}$.

We can repeat this process until either $q_{j,\bar{l}^N_j-1}=0$ or $p_{j,\bar{r}^N_j-1}=0$. If $p_{j,\bar{r}^N_j-1}=0$ first then we have moved $\bar{r}^N_j\rightarrow \bar{r}^N_j-2$ and can repeat the above from this new value of $\bar{r}^N_j$. Similarly, if $q_{j,\bar{l}^N_j-1}=0$ first then we have moved $\bar{l}^N_j\rightarrow \bar{l}^N_j+2$ and can repeat the above. This will continue until either $\bar{r}^N_j\leq T^*$ or $j\notin\mathrm{supp}(\mu^N_l)$. Since we improve our payoff at each step, any optimiser must have one of these properties at each $j$.

If $\bar{r}^N_j=\infty$ for some $j\in\mathrm{supp}(\mu^N_l)$ then we can run the above argument with any $t>T^*$ in place of $\bar{r}^N_j$ and come to the same conclusion. In particular we have a right-hand barrier whenever $j\in\mathrm{supp}(\mu^N_l)$, and obviously also for $j\in\mathrm{supp}(\mu^N)\setminus\mathrm{supp}(\mu^N_l)$.
\end{proof}

The conclusion $\mathrm{supp}(\mu^N_r)=\mathrm{supp}(\mu^N)$ means that we have a right-hand barrier at any atom of $\mu^N$, so in particular $\tilde{\eta}^*_{j,\bar{r}^N_j}=0$ for all $j\in\mathrm{supp}(\mu^N)$. Then, for any $x\in\mathrm{supp}(\mu)$,
\begin{equation*}
\Gamma(x) = \lim_{N\rightarrow\infty} \left(-\tilde{\eta}^*_{\lfloor \sqrt{N}x\rfloor,\bar{l}^N_x}+\tilde{\eta}^*_{\lfloor \sqrt{N}x\rfloor,\bar{r}^N_x}+\bar{F}^N_{\lfloor \sqrt{N} x\rfloor,\bar{l}^N_x}\right)=\lim_{N\rightarrow\infty} -\eta^*_{\lfloor \sqrt{N}x\rfloor,\bar{l}^N_x} \leq 0.
\end{equation*}

In particular, \eqref{Gamma} now becomes
\begin{equation}
\label{Gamma'} \tag{$\Gamma'$}
\begin{aligned}
\Gamma(x)\leq 0 \quad &\mu \text{-a.s.} \\
\Gamma(x)=0 \quad &\mu_l \text{-a.s.}
\end{aligned}
\end{equation}
and we have proved the following.

\begin{theorem}
For a $K$-cave stopping time $\tau$ given by curves $l$, and $r$, the condition \eqref{Gamma'} is both necessary and sufficient for optimality.
\end{theorem}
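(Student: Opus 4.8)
Section: Plan of proof

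The plan is to read this statement off the results already in hand: \thref{Opt} shows that \eqref{Gamma} is sufficient for optimality, the preceding theorem shows \eqref{Gamma} is necessary, and the last Lemma (together with the computation of $\Gamma(x)$ immediately following it) upgrades the ``$\mu_r$-a.s.'' half of \eqref{Gamma} to a ``$\mu$-a.s.'' statement for an optimal barrier. So the proof is really a two-line assembly once one checks how \eqref{Gamma} and \eqref{Gamma'} relate in the optimal regime; \eqref{Gamma'} is genuinely the sharper of the two (it demands $\Gamma\le 0$ on all of $\mathrm{supp}(\mu)$, not just $\mathrm{supp}(\mu_r)$, and $\Gamma=0$ — not merely $\Gamma\ge 0$ — on $\mathrm{supp}(\mu_l)$), so this is not a vacuous rephrasing.

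For sufficiency I would first observe that \eqref{Gamma'} implies \eqref{Gamma}: $\Gamma=0$ $\mu_l$-a.s. trivially gives $\Gamma\ge 0$ $\mu_l$-a.s., and since $\mu=\mu_l+\mu_r$ as measures, every $\mu$-null set is $\mu_r$-null, so $\Gamma\le 0$ $\mu$-a.s. gives $\Gamma\le 0$ $\mu_r$-a.s. Hence a $K$-cave barrier embedding $\mu$ and satisfying \eqref{Gamma'} also satisfies \eqref{Gamma}, and \thref{Opt} yields optimality of $\tau_{\mathcal D}$.

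For necessity, take a $K$-cave stopping time $\tau$ with curves $l,r$ that is optimal. By the preceding theorem \eqref{Gamma} holds, i.e.\ $\Gamma\ge 0$ $\mu_l$-a.s.\ and $\Gamma\le 0$ $\mu_r$-a.s.; it remains to strengthen the second inequality to $\Gamma\le 0$ $\mu$-a.s. This is exactly what the last Lemma provides: for an optimal discrete solution $\mathrm{supp}(\mu^N_r)=\mathrm{supp}(\mu^N)$, so there is a genuine right-hand barrier at $\mu$-a.e.\ point, hence $\tilde\eta^*_{j,\bar r^N_j}=\bar F^N_{j,\bar r^N_j}=0$ at every atom of $\mu^N$, and by \thref{etaconv}
\[
\Gamma(x)=\lim_{N\to\infty}\bigl(-\tilde\eta^*_{\lfloor\sqrt N x\rfloor,\bar l^N_x}+\tilde\eta^*_{\lfloor\sqrt N x\rfloor,\bar r^N_x}+\bar F^N_{\lfloor\sqrt N x\rfloor,\bar l^N_x}\bigr)=\lim_{N\to\infty}\bigl(-\eta^*_{\lfloor\sqrt N x\rfloor,\bar l^N_x}\bigr)\le 0
\]
for every $x\in\mathrm{supp}(\mu)$, because $\eta^*\ge 0$. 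Combining $\Gamma\le 0$ $\mu$-a.s.\ with $\Gamma\ge 0$ $\mu_l$-a.s.\ forces $\Gamma=0$ $\mu_l$-a.s., and together these two facts are precisely \eqref{Gamma'}.

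The main obstacle is this ``$\mu$-a.s.'' strengthening in the necessity direction. It does not follow from the pathwise superhedging picture on its own, which only separates the behaviour of the bound on $\mathrm{supp}(\mu_l)$ from that on $\mathrm{supp}(\mu_r)$; it genuinely uses the extra freedom exposed by the linear programme — that mass embedded after $T^*$ is irrelevant to the payoff, which lets one insert a right-hand barrier above $K$ at every atom of $\mu$ without losing optimality (the improvement argument in the last Lemma) — and then the convergence of the discrete dual optimisers $\eta^{*,N}$ established in \thref{etaconv} and \thref{etarightconv} to transport this back to the continuous-time $\Gamma$. A secondary point to keep clean is that these conclusions were obtained for the particular optimal barrier arising as the limit of the discrete optimisers; to apply them to an \emph{arbitrary} optimal $K$-cave barrier with curves $l,r$ I would argue as in the preceding necessity theorem, using the absence of a duality gap (\thref{fullconvergence}) together with the fact that the functions $G,H$ built from any $K$-cave barrier via \thref{Glemma} form a feasible superhedge, so that optimality of $\tau$ forces $G(B_\tau,\tau)+H(B_\tau)=F(B_\tau,\tau)$ a.s.\ and hence the required sign conditions on $\Gamma$.
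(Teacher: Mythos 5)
Your proposal is correct and follows essentially the same route as the paper: sufficiency by noting \eqref{Gamma'} implies \eqref{Gamma} and invoking \thref{Opt}, and necessity by combining the previous necessity theorem for \eqref{Gamma} with the final Lemma ($\mathrm{supp}(\mu^N_r)=\mathrm{supp}(\mu^N)$) and the limit computation via \thref{etaconv} showing $\Gamma(x)\leq 0$ on $\mathrm{supp}(\mu)$, which is exactly the assembly the paper performs before stating the theorem. Your closing remark about transferring the conclusion from the LP-limit barrier to an arbitrary optimal $K$-cave barrier is a point the paper itself passes over silently, so it is a welcome clarification rather than a deviation.
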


\section{Uniqueness}
In \thref{SGTheorem} we proved that there is a $K$-cave barrier whose stopping time solves \eqref{OptSEP}, however we also argued in Section \ref{nonunique} that there are multiple $K$-cave barriers solving \eqref{SEP}. We have now characterised the optimal solutions and can ask if there are multiple $K$-cave barriers that are also optimal.

Similarly to Loynes \cite{Loynes:1970aa} we define regular $K$-cave barriers. Take a $K$-cave barrier $\mathcal{R}$ with boundary curves $l$ and $r$. Recall that we define $x^*$ to be the smallest $x$ such that $\mu((x,\infty))=0$, and $x_*$ the largest $x$ such that $\mu((-\infty,x_*))=0$.
\begin{definition}
The $K$-cave barrier $\mathcal{R}$ is \emph{regular} if 
\begin{itemize}
\item $l$ is increasing,
\item $l(x)=K(x)=r(x)$ for all $x\notin[x_*,x^*]$ (where $l$ and $K$ exist),
\item $r(x)=0$ for all $x<x_*\wedge\frac{1}{\beta}\ln k$.
\end{itemize}
\end{definition}

\begin{theorem}
There is a unique regular $K$-cave barrier whose stopping time solves \eqref{OptSEP}.
\end{theorem}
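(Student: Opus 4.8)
The plan is to treat existence and uniqueness separately, both resting on the characterisation of optimality by \eqref{Gamma'} and on the strong duality of \thref{fullconvergence}.

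\emph{Existence of a regular optimiser.} I would start from the optimal $K$-cave barrier $\mathcal{R}$ of \thref{SGTheorem}, with curves $l,r$, and regularise it in three steps, checking at each step that the law of the associated hitting time (hence optimality) and the embedded measure $\mu$ are unchanged. First, replace $l$ by $x\mapsto\sup_{y\le x}l(y)$: on the set where this strictly raises $l$ the reverse barrier $\overline{\mathcal{R}}$ is never visited before the process stops, so the hitting time is unaffected — this is the gauge-fixing of Section~\ref{nonunique}, in the spirit of Loynes \cite{Loynes:1970aa}. Second, since $\mu$ charges nothing outside $[x_*,x^*]$ and the nonnegative martingale $h(X_{t\wedge\taud},t\wedge\taud)$ together with $X_{\taud}\sim\mu$ force $X_{\taud}\in[x_*,x^*]$ a.s., no level outside $[x_*,x^*]$ is visited in the continuation region, so one may put $l=K=r$ there. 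Third, for $x<x_*\wedge\tfrac1\beta\ln k$ one has $K(x)<0$, so such a level carries no left region and contributes $0$ to the payoff on $\{X_{\taud}=x\}$; hence $r(x)$ may be lowered to $0$, consistently with $\underline{\mathcal{R}}$ being a barrier and with embedding $\mu$. The result is a regular optimal barrier; one should additionally record — this is where the Additional Property argument of Section~\ref{dualconvergence} enters — that a regular \emph{optimal} barrier has $\mathrm{supp}(\mu_r)=\mathrm{supp}(\mu)$, so that a right-barrier point (with $r$ finite) sits over every point of $\mathrm{supp}(\mu)$.

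\emph{Uniqueness.} Let $\mathcal{R}^1,\mathcal{R}^2$ be regular optimal $K$-cave barriers, with curves $(l^i,r^i)$, continuation regions $\mathcal{D}^i$, hitting times $\tau^i$, and dual functions $(G^i,H^i)$ as in \thref{Opt}; by \thref{fullconvergence} each $(G^i,H^i)$ is dual-optimal. The key move is to cross-apply the dual of one barrier to the optimiser of the other: since $\tau^1$ is primal-optimal and $(G^2,H^2)$ dual-optimal, the chain
\begin{align*}
\Ep{F(B_{\tau^1},\tau^1)}&\le\Ep{G^2(B_{\tau^1},\tau^1)}+\int H^2\,\di\mu\\
&\le G^2(B_0,0)+\int H^2\,\di\mu=\sup_{\tau:B_\tau\sim\mu}\Ep{F(B_\tau,\tau)}
\end{align*}
must collapse to equalities. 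The first gives $F(B_{\tau^1},\tau^1)=(G^2+H^2)(B_{\tau^1},\tau^1)$ a.s., and the second — $G^2(B_{t\wedge\tau^1},t\wedge\tau^1)$ being a bounded supermartingale with $\Ep{G^2(B_{\tau^1},\tau^1)}=G^2(B_0,0)$ — forces it to be a true martingale up to $\tau^1$, i.e.\ $\mathcal{L}G^2=0$ on $\mathcal{D}^1$. Using the sign comparison from \thref{Opt} ($\partial_t^-F\le M^2$ below $K$, $\partial_t^-F=0\ge M^2$ above $K$) one finds that the equality set of $G^2+H^2=F$ is exactly $\{t\le l^2(x),\ \Gamma^2(x)\ge0\}\cup\{t\ge r^2(x),\ \Gamma^2(x)\le0\}$, so $\tau^1$ a.s.\ stops there: at $\mu^1_l$-a.e.\ level $\tau^1$ stops on $l^1$ inside $\{t\le l^2(x)\}$, giving $l^1\le l^2$ (and $\Gamma^2=0$) there, while at $\mu^1_r$-a.e.\ level $\tau^1$ stops on $r^1$ inside $\{t\ge r^2(x)\}$, giving $r^1\ge r^2$ there (the constraint $\mathcal{L}G^2=0$ on $\mathcal{D}^1$ is a consistent cross-check, saying $\mathcal{D}^1$ avoids the levels where $Z^2$ and $\Gamma^2+Z^2$ carry curvature, which are essentially $\mathrm{supp}(\mu^2_l)\cup\mathrm{supp}(\mu^2_r)$). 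Interchanging $1$ and $2$ yields the reverse inequalities on the corresponding supports; since $\mathrm{supp}(\mu^i_r)=\mathrm{supp}(\mu)$, combining $r^1\ge r^2$ and $r^2\ge r^1$ gives $r^1=r^2$ on $\mathrm{supp}(\mu)$. With the right curve fixed on $\mathrm{supp}(\mu)$, the embedding constraint then pins the left curve there — at each $x\in\mathrm{supp}(\mu)$ the $\mu$-mass absorbed at $x$ is strictly increasing in $l(x)$ (and decreasing in $r(x)$), as in Section~\ref{nonunique} — so $l^1=l^2$ on $\mathrm{supp}(\mu)$. Finally no mass is embedded off $\mathrm{supp}(\mu)$; together with the regularity normalisation and the forced behaviour on gaps of $\mathrm{supp}(\mu)$ (where $l=0$, $r=\infty$), this determines $l^1=l^2$ and $r^1=r^2$ at every level, so $\mathcal{R}^1=\mathcal{R}^2$.

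\emph{Main obstacle.} The delicate part is the endgame: converting the $\mu_l$- and $\mu_r$-a.e.\ statements produced by complementary slackness and the martingale property into \emph{pointwise} relations between the curves at each relevant level, and then upgrading these — via the embedding constraint, the fact $\mathrm{supp}(\mu_r)=\mathrm{supp}(\mu)$, and regularity — to full rigidity on the whole domain. This is the analogue of Loynes' regular-barrier uniqueness for the Root and Rost embeddings, but with the genuinely new feature that embedding $\mu$ is \emph{not} sufficient (there is a continuum of $K$-cave barriers embedding $\mu$): all the extra rigidity must be extracted, at every level, from \eqref{Gamma'} and from the harmonicity of $G^2$ in $\mathcal{D}^1$. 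Care is also needed to exclude the degenerate modification in which a left-support level is assigned no right barrier (or a superfluous one beyond $T^*$); ruling this out for regular optimal barriers — using lower semi-continuity of $r$ and the argument behind the Additional Property lemma — is a necessary ingredient for the stated uniqueness.
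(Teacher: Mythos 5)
Your opening move is the same as the paper's: cross-apply the dual pair $(G^2,H^2)$ of one optimiser to the stopping time $\tau^1$ of the other, use \thref{fullconvergence} to collapse the superhedging chain to equalities, and conclude that $\tau^1$ must stop a.s. in the set where $G^2+H^2=F$. Up to that point the proposal is sound. The genuine gap is in how you finish. The paper draws from this step only that the two \emph{inverse} barriers coincide (since stopping points of $\tau^1$ with positive payoff potential cannot lie in the strict-inequality set $\mathcal{S}^{2}_{\mathcal{D}}$), and then proves that, given the common inverse barrier, the right barrier embedding $\mu$ is unique by a Loynes-type union argument: if $r_1,r_2$ both work, the union barrier $r_1\wedge r_2$ is shown to embed $\mu$ as well, whence $\Ep{\tau_{\mathcal{R}}\wedge\tau_{\mathcal{R}_1}\wedge\tau_{\mathcal{R}_2}}=\Ep{\tau_{\mathcal{R}}\wedge\tau_{\mathcal{R}_1}}$ and the stopping times agree a.s. You instead try to fix the right curve first (via $\mathrm{supp}(\mu_r)=\mathrm{supp}(\mu)$) and then ``pin'' the left curve by asserting that the $\mu$-mass absorbed at each $x$ is strictly increasing in $l(x)$ and decreasing in $r(x)$. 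That assertion is not a proof: the mass stopped at a level is a functional of the \emph{entire} barrier, not of $l(x)$ at that level (changing $l$ on a $\mu$-null set changes nothing, and raising $l$ at one level redistributes mass at all others), so it cannot deliver the pointwise identity $l^1=l^2$; the heuristic in Section~\ref{nonunique} concerns a three-atom example and does not generalise as stated. The rigorous substitute for this step is exactly the Loynes union argument (or its inverse-barrier analogue), which your proposal does not supply -- your own ``main obstacle'' paragraph names the difficulty but leaves it open.

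Two further points. First, the a.e. inequalities you extract ($l^1\le l^2$ $\mu^1_l$-a.e., $r^1\ge r^2$ $\mu^1_r$-a.e., and their mirrors) are with respect to \emph{different} splittings $\mu=\mu^i_l+\mu^i_r$, so even combining them with $\mathrm{supp}(\mu^i_r)=\mathrm{supp}(\mu)$ only yields relations on full-measure sets, not equality of curves; the upgrade to pointwise statements needs the semicontinuity/regularity machinery you defer, whereas the paper's route avoids it by reducing to uniqueness of the Root part given the Rost part. Second, $\mathrm{supp}(\mu_r)=\mathrm{supp}(\mu)$ is established in the paper for the discrete optimisers and their limit barrier; using it for an \emph{arbitrary} regular optimal $K$-cave barrier, as your argument requires for both $\mathcal{R}^1$ and $\mathcal{R}^2$, is an additional claim you would have to justify. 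The paper's proof needs neither of these ingredients: duality gives coincidence of the inverse barriers, and Loynes' argument finishes the uniqueness.
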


\begin{proof}
Suppose $\tau$, $\sigma$ are both optimisers of \eqref{OptSEP} and hitting times of $K$-cave barriers with continuation regions $\mathcal{D}^{\tau}$ and $\mathcal{D}^{\sigma}$ respectively. By \thref{fullconvergence} these stopping times have dual optimisers $G^{\tau}$, $H^{\tau}$ and $G^{\sigma}$, $H^{\sigma}$, where the functions $G^{\tau}$, $G^{\sigma}$ take the form $G(x,t)=-\int_t^{r(x)} M(x,s) \di s$ for the corresponding $r$ and $M$. Then, 
\begin{align*}
\Ep{F(B_{\tau},\tau)}&=\Ep{G^{\tau}(B_{\tau},\tau)+H^{\tau}(B_{\tau})} \\
&=G^{\tau}(B_0,0)+\int H^{\tau}(x) \mu(\dx) \\
&\geq \Ep{G^{\tau}(B_{\sigma}, \sigma)+H^{\tau}(B_{\sigma})} \\
&\geq \Ep{F(B_{\sigma},\sigma)},
\end{align*}
since $G^{\tau}(B_t,t)$ is a supermartingale, and $G^{\tau}(x,t)+H^{\tau}(x)\geq F(x,t)$ everywhere. However, since both stopping times are optimisers, $\Ep{F(B_{\tau},\tau)}=\Ep{F(B_{\sigma},\sigma)}$, and we have equality in the above, so
\begin{equation*}
\Ep{F(B_{\sigma},\sigma)}=\Ep{G^{\tau}(B_{\sigma}, \sigma)+H^{\tau}(B_{\sigma})}.
\end{equation*}

In Section \ref{optimality} we argue that $G^{\tau}(x,t)+H^{\tau}(x)\geq F(x,t)$ since for $M^{\tau}(x,s)=\Eps{(x,s)}{\partial^-_t F\left(B_{\tau},\tau\right)}$ we have
\begin{align*}
t<K(x) \quad & \implies \quad F_t(x,t)=-\frac{\beta^2}{2}h(x,t)\leq M^{\tau}(x,t)\\
t>K(x) \quad & \implies \quad F_t(x,t)=0\geq M^{\tau}(x,t).
\end{align*}
It is easy to see that these inequalities are strict on $\mathcal{S}^{\tau}_{\mathcal{D}}=\left\{(x,t)\in\mathcal{D}^{\tau}:\medspace \P^{(x,t)}\left(F(B_{\tau},\tau)>0\right)>0\right\}$, and so $G^{\tau}(B_{\sigma}, \sigma)+H^{\tau}(B_{\sigma})>F(B_{\sigma},\sigma)$ for $(B_{\sigma},\sigma)\in\mathcal{S}^{\tau}_{\mathcal{D}}$. Therefore $(B_{\sigma},\sigma)\notin\mathcal{S}^{\tau}_{\mathcal{D}}$ almost surely, and similarly $(B_{\tau},\tau)\notin\mathcal{S}^{\sigma}_{\mathcal{D}}$ almost surely. In particular, this means that the inverse barriers given by $l^{\tau}$ and $l^{\sigma}$ must coincide. 

Furthermore, the argument of Loynes \cite{Loynes:1970aa} proves that for a given inverse barrier bounded by $l^{\tau}$, there is a unique barrier given by some $r^{\tau}$ that gives the correct embedding. This argument runs as follows: suppose we have two Root barriers $\mathcal{R}_1$ and $\mathcal{R}_2$ given by curves $r_1$ and $r_2$ respectively. If our inverse barrier is $\mathcal{R}$, then $B_{\tau_{\mathcal{R}}\wedge\tau_{\mathcal{R}_1}}\sim\mu$ and $B_{\tau_{\mathcal{R}}\wedge\tau_{\mathcal{R}_2}}\sim\mu$. We can consider $\mathcal{R}_0=\mathcal{R}_1\cup\mathcal{R}_2$, or $r_1\wedge r_2$, and show that $B_{\tau_{\mathcal{R}}\wedge\tau_{\mathcal{R}_0}}\sim\mu$ also. By taking the union of the two barriers we increase the area of the stopping region and therefore ensure that no extra paths can be embedded at $\mathcal{R}$. Also, if we have points $\underaccent{\bar}{x}$, $\bar{x}$ such that $r_1(x)\leq r_2(x)$ on $(\underaccent{\bar}{x},\bar{x})$, then less mass is embedded in $(\underaccent{\bar}{x},\bar{x})$ by $\mathcal{R}_0$ than $\mathcal{R}_1$, so overall less mass is embedded in $(\underaccent{\bar}{x},\bar{x})$ by $\tau_{\mathcal{R}}\wedge\tau_{\mathcal{R}_0}$ than $\tau_{\mathcal{R}}\wedge\tau_{\mathcal{R}_1}$. Similarly at points where $r_2(x)\leq r_1(x)$ we have that  less mass is embedded by $\tau_{\mathcal{R}}\wedge\tau_{\mathcal{R}_0}$ than $\tau_{\mathcal{R}}\wedge\tau_{\mathcal{R}_2}$. Then on any interval $A$, $\P\left( B_{\tau_{\mathcal{R}}\wedge\tau_{\mathcal{R}_0}}\in A\right)\leq \mu(A)$, but since both of these distributions are probability measures we must in fact have equality.

This shows that $\mathcal{R}_0$ also embeds the correct distribution, so $B_{\tau_{\mathcal{R}}\wedge\tau_{\mathcal{R}_0}}\sim\mu$, and therefore $\Ep{\tau_{\mathcal{R}}\wedge\tau_{\mathcal{R}_1}\wedge\tau_{\mathcal{R}_2}}=\Ep{\tau_{\mathcal{R}}\wedge\tau_{\mathcal{R}_0}}=\Ep{B^2_{\tau_{\mathcal{R}}\wedge\tau_{\mathcal{R}_0}}}=\int x^2 \mu(\dx)=\Ep{\tau_{\mathcal{R}}\wedge\tau_{\mathcal{R}_1}}$, so $\tau_{\mathcal{R}}\wedge\tau_{\mathcal{R}_1}\wedge\tau_{\mathcal{R}_2}=\tau_{\mathcal{R}}\wedge\tau_{\mathcal{R}_1}$ almost surely. We can then conclude that $\mathcal{R}_1$ and $\mathcal{R}_2$ are equivalent as in \cite{Loynes:1970aa}.
\end{proof}

We now summarise what we know of the uniqueness of these barriers:
\begin{itemize}
\item There may be many (regular) $K$-cave barriers whose hitting times solve \eqref{SEP}.

\item There is exactly one regular $K$-cave barrier whose hitting time solves \eqref{OptSEP}, and this is the regular $K$-cave barrier satisfying \eqref{Gamma'}.

\item All other solutions of \eqref{OptSEP} have the same stopping region as the regular $K$-cave barrier solution, $\tau$, on $\mathcal{S}^{\tau}=\left\{(x,t):\medspace \P^{(x,t)}\left(F(B_{\tau},\tau)>0\right)>0\right\}$. In particular they have the same inverse barrier.
\end{itemize}

In the spirit of \cite{Loynes:1970aa} we could say that two stopping regions are $\tau$-equivalent if they agree on $\mathcal{S}^{\tau}$, and then any region whose hitting time solves \eqref{OptSEP} is $\tau$-equivalent to the $K$-cave optimiser $\tau$.

\section{Conclusions}
In this paper we have given a new solution to the Skorokhod embedding problem that arises when considering model-independent bounds on the price of European call options on a leveraged exchange traded fund. Unlike many classical solutions to \eqref{SEP}, stopping times of this form are not unique and we have used two very different methods to find the form of the superhedging portfolio in order to identify the optimal stopping region. One method involves a PDE approach to suggest the form of such a portfolio and then purely probabilistic arguments to confirm the sufficient condition for optimality with this portfolio. The other approach originates from the idea of considering a discretised version of \eqref{SEP} as a primal-dual linear programming pair in \cite{Cox:2016ab} and then constructing a superhedging strategy as the limit of the dual optimisers.

The techniques used are not specific to our choice of payoff, for example all results in Section \ref{discrete} hold for payoff of the form $F(x,t)=\left(h(x,t)-k\right)_+$ where $h(x,t)$ is decreasing in time and such that $h(X_t,t)$ is a martingale. These results are also true when we consider the case of the cave embedding, and furthermore, the probabilistic approach of \thref{Opt} also holds with the cave payoff. The cave payoff is of a very different form to the European call option payoff we work with in this paper, and this raises some natural questions. Firstly, are there other solutions to \eqref{SEP} which can be seen as the combination of Root and Rost barriers? Secondly, if there are other solutions, is \eqref{Gamma} the correct condition to choose the unique pair satisfying some maximisation problem? Given a maximisation problem we can characterise its optimiser geometrically using the monotonicity principle of \cite{Beiglboeck:2013aa}, and if the optimal stopping time does take the form of a hitting time for Brownian motion then the ideas used in Section \ref{optimality} and Section \ref{discrete} here should be applicable. 

There are other natural questions that arise from this problem, and many of them have been asked, and answered, for other embeddings:
\begin{itemize}
\item Is it possible to generalise our results to general starting distributions? In this paper we always consider a (geometric) Brownian motion started at some fixed point, but it should be possible to consider general starting distributions, indeed the results used from \cite{Cox:2016ab} appear to hold for more general starting distributions. When considering just the Rost barrier, for a true hitting time solution we require the supports of the initial and target distributions to be disjoint (see \cite{Cox:2013ac}). The inclusion of the extra Root barrier will allow us to embed at points in the support of the initial distribution, but there may be technicalities. 
 
\item In \cite{Cox:2015ab}, the authors consider the Root solution to the multi-marginal Skorokhod embedding problem through an optimal stopping approach, and Rost barriers have also been considered in terms of optimal stopping problems in \cite{DeAngelis:2015aa}. Is there a similar optimal stopping formulation for this problem? Can an optimal stopping setup, or any other approach, give a multi-marginal result for cave or $K$-cave barriers?

\item Can we use similar methods to find a robust lower bound on the price of our option? The monotonicity principle calculation in this case will be the exact opposite, meaning that the optimal stopping region will look like the continuation region of a $K$-cave barrier. For target distributions with full support we would then require some external randomisation in the stopping region, perhaps along the curve $K$.
\end{itemize}

\bibliography{LETFnew.bib}
\bibliographystyle{abbrv}

\end{document}